\documentclass[11pt]{article}
\usepackage[T1]{fontenc}
\usepackage[utf8]{inputenc}
\usepackage{lmodern}
\usepackage[margin=1in]{geometry}
\usepackage{microtype}
\usepackage{amsmath,amssymb,amsthm,mathtools,bm}
\usepackage{stmaryrd}
\usepackage{dsfont}
\usepackage{booktabs,longtable,multirow,array,tabularx}
\usepackage{xcolor}
\usepackage{graphicx}
\usepackage{caption}
\usepackage{subcaption}
\usepackage{enumitem}
\usepackage{algorithm}
\usepackage{algpseudocode}
\usepackage{listings}
\usepackage{tikz}
\usetikzlibrary{arrows.meta,positioning,shapes.geometric,shapes.misc,fit,calc,backgrounds,decorations.pathreplacing,patterns}
\usepackage{pgfplots}
\pgfplotsset{compat=1.18}
\usepgfplotslibrary{groupplots,fillbetween}
\usepackage[numbers,sort&compress]{natbib}
\usepackage{xurl}
\usepackage[colorlinks=true,linkcolor=blue!50!black,citecolor=green!40!black,urlcolor=blue!60!black]{hyperref}
\usepackage[capitalise,nameinlink]{cleveref}

\theoremstyle{plain}
\newtheorem{theorem}{Theorem}
\newtheorem{proposition}{Proposition}

\theoremstyle{definition}
\newtheorem{definition}{Definition}
\newtheorem{assumption}{Assumption}
\newtheorem{example}{Example}
\theoremstyle{remark}
\newtheorem{remark}{Remark}
\crefname{assumption}{Assumption}{Assumptions}

\newcommand{\E}{\mathbb{E}}
\newcommand{\Q}{\mathbb{Q}}
\newcommand{\Prob}{\mathbb{P}}
\newcommand{\Var}{\operatorname{Var}}

\newcommand{\R}{\mathbb{R}}

\newcommand{\ind}{\mathds{1}}
\newcommand{\dd}{\,\mathrm{d}}
\newcommand{\VaR}{\operatorname{VaR}}
\newcommand{\CVaR}{\operatorname{CVaR}}
\newcommand{\lfp}{\operatorname{lfp}}
\newcommand{\gfp}{\operatorname{gfp}}

\newcommand{\LN}{\mathrm{LN}}
\newcommand{\PERT}{\mathrm{PERT}}
\newcommand{\Pois}{\mathrm{Pois}}
\newcommand{\musd}[1]{\ensuremath{\$#1\,\mathrm{m}}}
\newcommand{\Always}{\mathop{\square}}
\newcommand{\Eventually}{\mathop{\lozenge}}

\newcommand{\pathto}{\rightsquigarrow}
\newcommand{\code}[1]{\texttt{#1}}
\newcommand{\Adm}{\mathsf{Adm}}

\lstdefinestyle{py}{language=Python,basicstyle=\ttfamily\footnotesize,keywordstyle=\color{blue!60!black}\bfseries,
  commentstyle=\color{green!35!black}\itshape,stringstyle=\color{red!50!black},showstringspaces=false,
  frame=single,framerule=0.3pt,rulecolor=\color{black!30},breaklines=true,numbers=left,numberstyle=\tiny\color{black!50},
  xleftmargin=1.5em,framexleftmargin=1.5em,columns=fullflexible,keepspaces=true}

\pgfplotsset{every axis/.append style={font=\footnotesize, tick label style={font=\scriptsize}, label style={font=\footnotesize},
  legend style={font=\scriptsize, draw=black!30}}}
\definecolor{cA}{RGB}{31,119,180}
\definecolor{cB}{RGB}{214,39,40}
\definecolor{cC}{RGB}{44,160,44}
\definecolor{cD}{RGB}{255,127,14}
\definecolor{cE}{RGB}{148,103,189}
\definecolor{cF}{RGB}{140,86,75}
\definecolor{cG}{RGB}{227,119,194}
\definecolor{cH}{RGB}{127,127,127}

\newcommand{\rBankMean}{463.8}

\newcommand{\rBankSkew}{5.17}

\newcommand{\rBankPnn}{1,226.0}

\newcommand{\rBankPlugRank}{36.3}
\newcommand{\rBankPlugGap}{22.6}
\newcommand{\rBankAleaShare}{25.4}
\newcommand{\rBankETR}{2.165}
\newcommand{\rBankTdMode}{2.61}
\newcommand{\rBankTdMeanPath}{2.67}
\newcommand{\rBankTdMC}{2.93}
\newcommand{\rBankTdPninety}{3.91}

\newcommand{\rBankFulkGap}{9.9}
\newcommand{\rBankJswMed}{15.3}

\newcommand{\rBankJswE}{36.8}

\newcommand{\rBankJswRatio}{2.41}
\newcommand{\rBankJincRatio}{1.92}
\newcommand{\rBankVstar}{148.4}
\newcommand{\rBankVO}{400}
\newcommand{\rBankBeta}{3.068}

\newcommand{\rBankArchMult}{2.53}

\newcommand{\rBankArchRbf}{0.9219}
\newcommand{\rBankArchRfb}{0.8143}
\newcommand{\rBankArchDTbf}{178}
\newcommand{\rBankArchDTfb}{467}
\newcommand{\rBankRuntime}{17.2}
\newcommand{\rBankSzeroMean}{603.9}

\newcommand{\rBankSzeroTau}{2.17}

\newcommand{\rBankSthreeMean}{463.8}

\newcommand{\rBankSthreePTd}{69.8}
\newcommand{\rBankSthreeTau}{1.50}

\newcommand{\rBankSobTopST}{0.64}

\newcommand{\rBankSobSumS}{0.92}
\newcommand{\rBankBondAlpha}{0.856}
\newcommand{\rBankBondB}{662.7}
\newcommand{\rBankBondRatio}{1.43}

\newcommand{\rTelMean}{446.8}

\newcommand{\rTelSkew}{0.87}

\newcommand{\rTelPnn}{805.6}

\newcommand{\rTelPlugRank}{36.0}
\newcommand{\rTelPlugGap}{12.1}
\newcommand{\rTelAleaShare}{1.4}

\newcommand{\rTelTdMode}{1.87}
\newcommand{\rTelTdMeanPath}{1.92}
\newcommand{\rTelTdMC}{2.10}
\newcommand{\rTelTdPninety}{2.77}

\newcommand{\rTelFulkGap}{9.4}

\newcommand{\rTelJswRatio}{1.85}
\newcommand{\rTelJincRatio}{1.44}
\newcommand{\rTelVstar}{448.2}
\newcommand{\rTelVO}{700}
\newcommand{\rTelBeta}{2.261}

\newcommand{\rTelDelayOptTwo}{142.0}
\newcommand{\rTelDelayDcfTwo}{162.2}
\newcommand{\rTelDelayMean}{93.9}
\newcommand{\rTelDelayDcfMean}{104.9}
\newcommand{\rTelArchMult}{1.97}

\newcommand{\rTelArchRbf}{0.8484}
\newcommand{\rTelArchRfb}{0.7227}
\newcommand{\rTelArchDTbf}{180}
\newcommand{\rTelArchDTfb}{356}
\newcommand{\rTelRuntime}{26.0}

\newcommand{\rTelSthreecMean}{446.8}
\newcommand{\rTelSthreecPnf}{669.4}

\newcommand{\rTelSthreetPnf}{699.9}

\newcommand{\rTelSobTopST}{0.24}

\newcommand{\rTelSobSecST}{0.23}
\newcommand{\rTelSobSumS}{0.92}

\newcommand{\rTelBondB}{572.4}
\newcommand{\rTelBondRatio}{1.28}

\newcommand{\rAirMean}{345.3}

\newcommand{\rAirSkew}{1.05}

\newcommand{\rAirPnn}{679.3}

\newcommand{\rAirPlugRank}{44.0}
\newcommand{\rAirPlugGap}{7.9}
\newcommand{\rAirAleaShare}{8.7}

\newcommand{\rAirTdMode}{2.45}
\newcommand{\rAirTdMeanPath}{2.50}
\newcommand{\rAirTdMC}{2.74}
\newcommand{\rAirTdPninety}{3.63}

\newcommand{\rAirFulkGap}{9.6}

\newcommand{\rAirJswRatio}{3.77}
\newcommand{\rAirJincRatio}{4.06}
\newcommand{\rAirVstar}{246.1}
\newcommand{\rAirVO}{600}
\newcommand{\rAirBeta}{2.561}

\newcommand{\rAirArchMult}{2.43}

\newcommand{\rAirArchRbf}{0.9599}
\newcommand{\rAirArchRfb}{0.9053}
\newcommand{\rAirArchDTbf}{45}
\newcommand{\rAirArchDTfb}{118}
\newcommand{\rAirRuntime}{23.1}

\newcommand{\rAirStwoMean}{345.3}

\newcommand{\rAirSthreeMean}{363.5}

\newcommand{\rAirSobTopST}{0.58}

\newcommand{\rAirSobSecST}{0.23}
\newcommand{\rAirSobSumS}{0.89}

\newcommand{\rAirBondB}{463.3}
\newcommand{\rAirBondRatio}{1.34}

\newcommand{\rEconTruth}{0.195}
\newcommand{\rEconTwfeBias}{-0.0324}
\newcommand{\rEconTwfeRmse}{0.0329}
\newcommand{\rEconTwfeRel}{-16.6}
\newcommand{\rEconCsBias}{-2.7\times 10^{-6}}
\newcommand{\rEconCsRmse}{0.0146}
\newcommand{\rEconCsCov}{99}
\newcommand{\rEconCovReps}{100}
\newcommand{\rEconR}{200}

\newcommand{\rSynthRatio}{60.0}
\newcommand{\rSynthP}{0.032}
\newcommand{\rSynthDonors}{16}

\newcommand{\rTotalRuntime}{97}
\newcommand{\rN}{200{,}000}
\newcommand{\rVBacklogMaxErr}{1.44}
\newcommand{\rVUnifErr}{9.8\times 10^{-14}}
\newcommand{\rVUnifCumErr}{5.2\times 10^{-8}}
\newcommand{\rVQcov}{95.7}
\newcommand{\rVQnom}{96.0}
\newcommand{\rVAcuteMaxErr}{5.3\times 10^{-12}}
\newcommand{\rVOptMaxZ}{1.93}
\newcommand{\rVOptQuadErr}{5.2\times 10^{-5}}
\newcommand{\rVEuler}{4.3\times 10^{-14}}
\newcommand{\rBankTornBase}{399.4}
\newcommand{\rTelTornBase}{419.2}
\newcommand{\rAirTornBase}{359.6}

\newcommand{\rBankMeanNum}{463.8}

\newcommand{\rBankPnnNum}{1226.0}
\newcommand{\rBankBondNum}{662.7}
\newcommand{\rBankPlugNum}{359.1}
\newcommand{\rBankHistMax}{0.002787}
\newcommand{\rBankHistXmax}{1394}
\newcommand{\rBankBondRatioNum}{1.429}
\newcommand{\rBankUatEL}{17.3}
\newcommand{\rBankUatB}{6.5}
\newcommand{\rBankFatEL}{0.601}
\newcommand{\rBankAttrFundMeanShare}{59.1}
\newcommand{\rBankAttrFundTailShare}{69.5}

\newcommand{\rBankAcIncMeanShare}{10.4}
\newcommand{\rBankAcIncTailShare}{15.5}

\newcommand{\rBankMeanHW}{0.22}
\newcommand{\rBankPnfHW}{0.46}
\newcommand{\rTelMeanNum}{446.8}

\newcommand{\rTelPnnNum}{805.6}
\newcommand{\rTelBondNum}{572.4}
\newcommand{\rTelPlugNum}{392.9}
\newcommand{\rTelHistMax}{0.003925}
\newcommand{\rTelHistXmax}{864}
\newcommand{\rTelBondRatioNum}{1.281}
\newcommand{\rTelUatEL}{10.7}
\newcommand{\rTelUatB}{2.9}
\newcommand{\rTelFatEL}{0.551}

\newcommand{\rTelFunDelMeanShare}{63.2}
\newcommand{\rTelAcIncMeanShare}{2.7}

\newcommand{\rTelMeanHW}{0.12}
\newcommand{\rTelPnfHW}{0.27}
\newcommand{\rAirMeanNum}{345.3}

\newcommand{\rAirPnnNum}{679.3}
\newcommand{\rAirBondNum}{463.3}
\newcommand{\rAirPlugNum}{317.9}
\newcommand{\rAirHistMax}{0.004072}
\newcommand{\rAirHistXmax}{733}
\newcommand{\rAirBondRatioNum}{1.342}
\newcommand{\rAirUatEL}{13.1}
\newcommand{\rAirUatB}{3.5}
\newcommand{\rAirFatEL}{0.541}

\newcommand{\rAirFunDelMeanShare}{82.5}

\newcommand{\rAirPlugBest}{S0}
\newcommand{\rAirMeanHW}{0.15}
\newcommand{\rAirPnfHW}{0.27}
\newcommand{\rEconHazard}{0.0235}

\newcommand{\rBankEAF}{1.406}
\newcommand{\rBankEexp}{1.11}
\newcommand{\rBankFexp}{0.320}
\newcommand{\rBankKSLOC}{45}
\newcommand{\rBankSF}{20}
\newcommand{\rBankPMnom}{282.8}
\newcommand{\rBankTDEVnom}{22.3}
\newcommand{\rBankPMcr}{404.4}
\newcommand{\rBankTDEVcr}{16.8}
\newcommand{\rBankEMlist}{RELY\,=\,VH (1.26), DATA\,=\,H (1.14), CPLX\,=\,VH (1.34), TIME\,=\,H (1.11), ACAP\,=\,H (0.85), PCAP\,=\,H (0.88), APEX\,=\,H (0.88)}
\newcommand{\rTelEAF}{1.290}
\newcommand{\rTelEexp}{1.08}

\newcommand{\rTelKSLOC}{20}
\newcommand{\rTelSF}{17}
\newcommand{\rTelPMnom}{96.4}
\newcommand{\rTelTDEVnom}{15.4}
\newcommand{\rTelPMcr}{137.9}
\newcommand{\rTelTDEVcr}{11.6}
\newcommand{\rTelEMlist}{RELY\,=\,VH (1.26), TIME\,=\,VH (1.29), CPLX\,=\,VH (1.34), ACAP\,=\,H (0.85), PCAP\,=\,H (0.88), APEX\,=\,H (0.88), TOOL\,=\,H (0.90)}
\newcommand{\rAirEAF}{1.602}
\newcommand{\rAirEexp}{1.09}

\newcommand{\rAirKSLOC}{35}
\newcommand{\rAirSF}{18}
\newcommand{\rAirPMnom}{227.0}
\newcommand{\rAirTDEVnom}{20.4}
\newcommand{\rAirPMcr}{324.6}
\newcommand{\rAirTDEVcr}{15.3}
\newcommand{\rAirEMlist}{RELY\,=\,H (1.10), DATA\,=\,VH (1.28), CPLX\,=\,VH (1.34), TIME\,=\,VH (1.29), ACAP\,=\,H (0.85), PCAP\,=\,H (0.88), APEX\,=\,H (0.88)}

\newcommand{\rBankSD}{228.5}
\newcommand{\rBankBrob}{695.1}
\newcommand{\rBankBrobRatio}{1.50}
\newcommand{\rBankJrobWC}{44.18}
\newcommand{\rBankJtrueStar}{37.76}
\newcommand{\rBankJtrueRob}{37.86}
\newcommand{\rBankRobPremium}{0.3}

\newcommand{\rTelBrob}{571.3}
\newcommand{\rTelBrobRatio}{1.28}
\newcommand{\rTelJrobWC}{32.30}
\newcommand{\rTelJtrueStar}{28.70}

\newcommand{\rAirBrob}{463.6}
\newcommand{\rAirBrobRatio}{1.34}
\newcommand{\rAirJrobWC}{27.28}
\newcommand{\rAirJtrueStar}{23.65}

\title{\textbf{Restraint-Induced Loss in Software-Intensive Enterprises:}\\[2pt]
\Large A Verified Technical--Economic Impact Model of Operational Downtime and Delayed Commercialization,\\ with Tail-Risk Security Sizing and Three Enterprise Case Studies}
\author{Mahdi Eslamimehr\thanks{The case-study organisations are anonymised; all case data are synthetic and calibrated to public benchmarks (\cref{sec:cases}).}\\[2pt]
\normalsize Quandary Peak Research}
\date{Technical paper, October 2026}
\hypersetup{pdfauthor={Mahdi Eslamimehr},pdftitle={Restraint-Induced Loss in Software-Intensive Enterprises: A Verified Technical-Economic Impact Model of Operational Downtime and Delayed Commercialization},pdfsubject={Technical-economic impact modelling},pdfkeywords={cost of downtime, dependability, real options, COCOMO II, CVaR, security sizing}}

\begin{document}
\maketitle

\begin{abstract}
\noindent When an enterprise must stop using a software component (for example, because of an interim court restraint in a technology patent dispute), the economic loss passes through a chain of technical mechanisms. Withdrawing the component exposes dependent capabilities. Fallback paths degrade throughput, correctness or feature coverage. Incident rates and recovery backlogs rise. Customers respond. Remediation programmes compete against the restraint clock, and planned launches are pushed back. This paper develops and verifies a \emph{technical--economic impact model} (TEIM) that quantifies this chain end to end as a probability distribution of present-value loss rather than as a point estimate. The technical layer combines (i)~a lattice-valued, fixed-point semantics of service dependencies with degraded fallbacks, (ii)~continuous-time Markov reliability models evaluated by uniformization and specified in continuous stochastic logic, and (iii)~a fluid queueing model of post-outage backlog. These feed (iv)~convex acute-outage loss functions with regulatory and contractual thresholds, (v)~compound-Poisson excess-incident processes, (vi)~survival-weighted functional, attrition and funding channels that provably avoid double counting, (vii)~a real-options valuation of forcibly delayed launches with an explicit closed form, and (viii)~a COCOMO~II--driven stochastic activity network for design-around and substitution strategies. We prove seven theorems and eight propositions. They cover existence and computability of degraded-capability semantics, causal locality, backlog convexity, a Jensen lower bound on the bias of deterministic estimates, exact channel partition, monotonicity of forced-delay loss and the bounds $0\le\ell(\Delta)\le\ell_{\mathrm{DCF}}(\Delta)$ for launches that are optimal immediately, and a critical-fractile rule for optimal security whose minimised expected cost equals $c\cdot\CVaR_{\alpha^\star}(L)$. A Python implementation is verified against independent numerics: the closed forms agree with quadrature to $\rVAcuteMaxErr$, uniformization with matrix exponentials to $\rVUnifErr$, and order-statistic intervals reach $\rVQcov\%$ empirical coverage. We apply the model, with $\rN$ common-random-number draws per strategy, to three anonymised enterprise platforms: a core-banking ledger, a 5G converged-charging function and an airline passenger-service system. Expected losses under the loss-minimising strategy are \musd{\rBankMean}, \musd{\rTelMean} and \musd{\rAirMean}, with 99th percentiles of \musd{\rBankPnn}, \musd{\rTelPnn} and \musd{\rAirPnn}. Conventional plug-in estimates fall at the $\rBankPlugRank$th, $\rTelPlugRank$th and $\rAirPlugRank$th percentiles of the simulated distributions. Variance is driven by customer response (bank), by a mix of customer response, remediation execution and launch value (operator), and by the restraint horizon (airline). The minimising strategy differs across platforms. The plug-in also misranks the responses: for the airline it selects degrade-and-wait, which the full model identifies as the worst option, and for the operator it selects the wrong credit-control failure policy. At a 2\% premium and a 30\% probability of wrongful restraint, the security implied by the critical-fractile rule is $\rTelBondRatio$--$\rBankBondRatio\times$ the expected loss. Posting only the expected loss would leave $\rTelUatEL$--$\rBankUatEL\%$ of it uncompensated in expectation.
\end{abstract}

\noindent\textbf{Keywords:} cost of downtime; performability; fixed-point semantics; continuous stochastic logic; fluid queues; real options; COCOMO~II; stochastic activity networks; conditional value-at-risk; global sensitivity analysis; difference-in-differences; security for interim restraints.

\noindent\textbf{JEL:} C15, C44, D81, G31, K41, L86. \quad \textbf{ACM CCS:} Software and its engineering $\to$ Software creation and management; Mathematics of computing $\to$ Probability and statistics.

\tableofcontents
\clearpage

\section{Introduction}\label{sec:intro}
Large enterprises run on platforms whose components are developed, licensed and patented by many parties. A core-banking ledger posts transactions in real time through a proprietary engine. A mobile operator's 5G charging function grants quota through a vendor-specific algorithm. An airline's passenger-service system (PSS) computes real-time availability for dynamic pricing and disruption recovery. When an external event forces the enterprise to stop using such a component for a period of uncertain length, the economic consequences run through technical mechanisms. These include dependency propagation, degraded fallbacks, raised incident rates, recovery backlogs and remediation engineering. Economists and finance professionals cannot evaluate these mechanisms without models of the software. Engineers rarely translate them into present values, tail risks and decision rules. The paper is written for readers at that intersection.

Our running scenario is an \emph{interim restraint}: an order that forbids use of a component (here, one alleged to practise a patent) until a later decision. The enterprise may later be found to have been restrained wrongfully. In common-law systems that use security undertakings, it then recovers from a security posted by the party that obtained the restraint. In United States federal practice, Rule~65(c) of the Federal Rules of Civil Procedure requires the movant to give security ``in an amount that the court considers proper to pay the costs and damages sustained by any party found to have been wrongfully enjoined or restrained'' \citep{frcp65}. Several circuits treat the bond as a ceiling on recovery \citep{buddy1976,nintendo1994,meadjohnson2000}. The \emph{distribution} of loss, and in particular its upper tail, therefore matters, not just its expectation. We do not analyse any jurisdiction, procedure or dispute. The model is a technical--economic object. Its outputs (a loss distribution, a decomposition by causal channel, and a decision-theoretic security rule) are useful whenever an ex-ante, auditable, tail-aware quantification of restraint-induced loss is needed.

\paragraph{Why existing practice is insufficient.} Practitioners typically multiply an industry survey figure for hourly downtime cost by an expected outage duration. They then add a static discounted-cash-flow (DCF) estimate of delayed revenue. Surveys are informative about magnitudes. For example, more than 90\% of mid-size and large firms report hourly downtime costs above \$300{,}000, and 41\% of large enterprises report \$1--5~million or more \citep{itic2024}. Uptime Institute reports that 54\% of significant outages cost more than \$100{,}000 and 20\% more than \$1~million \citep{uptime2025}. But the conventional calculation has five structural defects that this paper addresses formally:
\begin{enumerate}[label=(\roman*),leftmargin=2.2em,itemsep=1pt]
\item \emph{Linearity.} Outage cost is treated as linear in duration. Recovery backlogs make it convex, and contractual and regulatory thresholds make it discontinuous (\cref{thm:backlog,thm:jensen}).
\item \emph{Missing structure.} Survey averages ignore which capabilities depend on the withdrawn component and how much capability the fallback paths preserve (\cref{sec:lattice}).
\item \emph{Missing dynamics.} The length of degraded operation is the minimum of an exogenous restraint horizon and an endogenous remediation time. That remediation time has its own schedule risk, which is right-skewed (\cref{sec:remediation}).
\item \emph{Static valuation of delay.} DCF treatment of a postponed launch ignores managerial flexibility. It is an upper bound only when the launch is optimal immediately (\cref{thm:delay}).
\item \emph{No uncertainty quantification.} Point estimates cannot support tail-based decisions. Deterministic plug-ins are biased low under convexity and Fulkerson-type effects (\cref{thm:jensen,prop:fulkerson}).
\end{enumerate}

\paragraph{Contributions.}
\begin{enumerate}[label=\textbf{C\arabic*},leftmargin=2.4em,itemsep=1pt]
\item A formal problem statement. Restraint-induced loss is defined as a counterfactual present-value difference and decomposed into eight channels. Admissibility conditions are expressed in first-order and metric temporal logic (\cref{sec:formulation}).
\item A lattice-valued fixed-point semantics for service dependencies with degraded alternatives. It gives existence, finite convergence, cold-start versus warm-continuation semantics (least versus greatest fixed point), monotone comparative statics and a causal-locality theorem (\cref{thm:lattice,prop:monotone,thm:locality}).
\item Closed-form operational economics: fluid backlog area (\cref{thm:backlog}), lognormal expectations of piecewise convex/threshold loss with a Jensen lower bound (\cref{thm:jensen}), and Campbell-type moments of discounted excess-incident losses (\cref{prop:campbell}).
\item An exact partition theorem for functional, attrition and funding channels with an explicit approximation bound (\cref{thm:partition}).
\item A real-options treatment of forced delay. We prove $0\le\ell\le\ell_{\mathrm{DCF}}$ and monotonicity, and give a closed form for the delayed option value (\cref{thm:delay,prop:delayclosed}).
\item A remediation model coupling COCOMO~II effort and schedule equations with a stochastic activity network, plus a Fulkerson bias result (\cref{prop:fulkerson}).
\item A decision-theoretic security rule (critical fractile) with the identity $J^\star=c\cdot\CVaR_{\alpha^\star}(L)$, and additive Euler allocation of tail risk by channel (\cref{thm:bond,prop:euler}).
\item A verified, vectorised implementation (\cref{sec:impl}) and a full experimental study on three anonymised enterprise platforms, including global sensitivity analysis and an econometric calibration experiment (\cref{sec:cases,sec:experiments}).
\end{enumerate}

\paragraph{Organisation.} \Cref{sec:related} reviews related work. \Cref{sec:formulation} formalises the problem. \Cref{sec:model} develops the model and its theory. \Cref{sec:impl} describes the implementation. \Cref{sec:cases} presents the anonymised case studies, and \cref{sec:experiments} the experiments. \Cref{sec:discussion} discusses adversarial review, validity and limitations, and \cref{sec:conclusion} concludes. Longer proofs and notation are in the appendices.

\section{Background and Related Work}\label{sec:related}
\paragraph{Economics of IT downtime.} Survey-based estimates dominate practice. ITIC's 2024 survey reports that over 90\% of mid-size and large enterprises put the cost of one hour of downtime above \$300{,}000. It also reports that banking and finance are among the verticals where large-enterprise averages exceed \$5~million per hour, excluding litigation and penalties \citep{itic2024}. Uptime Institute's annual outage analyses report that 54\% of respondents' most recent significant outage cost more than \$100{,}000, with 16\% (2024) and 20\% (2025) above \$1~million \citep{uptime2024,uptime2025}. Splunk and Oxford Economics estimate that unplanned downtime costs the Global~2000 about \$400~billion a year, or 9\% of profits \citep{splunk2024}. At the firm level, event studies find significant negative abnormal returns around publicly reported IT failures \citep{bharadwaj2009,mackinlay1997}. Public incidents provide calibration anchors: the 2018 TSB migration \citep{fcatsb2022}, the 2012 RBS batch failure \citep{fcarbs2014}, the 2022 Rogers and 2024 AT\&T network outages \citep{fccatt2024}, and the 2016 and 2024 Delta and 2022 Southwest disruptions \citep{delta2024q3,southwest2022}. None of these sources supplies a structural model that maps a specific withdrawn component to a loss distribution.

\paragraph{Dependability, performability and formal verification.} We use the dependability taxonomy of \citet{avizienis2004}. Meyer's performability framework \citep{meyer1980} evaluates degradable systems by reward accumulated over operational states. We generalise it from numeric rewards to lattice-valued capability levels. Repairable redundancy is modelled by continuous-time Markov chains (CTMCs) \citep{trivedi2016}, solved by uniformization \citep{jensen1953,grassmann1977}. Properties are specified in continuous stochastic logic (CSL) \citep{aziz2000,baier2003}, which descends from PCTL \citep{hansson1994}, metric temporal logic \citep{koymans1990} and linear temporal logic \citep{pnueli1977}. The service-dependency semantics builds on the Knaster--Tarski and Kleene fixed-point theorems \citep{tarski1955,davey2002}, and on coherent-structure theory \citep{esary1963,barlow1975}.

\paragraph{Queueing and backlog.} Post-outage backlogs are modelled with deterministic fluid approximations \citep{newell1982,kleinrock1975}, consistent with Little's law \citep{little1961}. The resulting quadratic waiting area is the source of convexity in our acute loss.

\paragraph{Software cost and schedule estimation.} Design-around effort and schedule follow COCOMO~II.2000 \citep{boehm2000,boehm1981}, interpreted with the estimation-error evidence of \citet{jorgensen2007}. Large IT projects have fat-tailed overruns: in \citet{flyvbjerg2011}, one in six projects is a ``black swan'' with a cost overrun of about 200\% and a schedule overrun of about 70\%. Activity networks follow PERT \citep{malcolm1959,clark1962} with Monte Carlo evaluation \citep{vanslyke1963}. Their deterministic analysis is biased optimistic \citep{fulkerson1962}.

\paragraph{Real options.} Delayed commercialisation is valued as a forcibly postponed perpetual option to invest \citep{mcdonald1986,dixit1994,trigeorgis1996}, in the contingent-claims tradition of \citet{black1973,merton1973} and the growth-option view of \citet{myers1977}.

\paragraph{Risk measures, uncertainty quantification and sensitivity.} We use coherent risk measures \citep{artzner1999}, expected shortfall and its optimisation representation \citep{rockafellar2000,rockafellar2002,acerbi2002}, and Euler capital allocation \citep{mcneil2015}. Dependence is induced by a Gaussian copula \citep{sklar1959,nelsen2006} in the spirit of rank-correlation methods \citep{iman1982}. Variance-based global sensitivity analysis follows \citet{sobol2001,saltelli2008,saltelli2010}, implemented with SALib \citep{herman2017}. Quantile confidence intervals use order statistics \citep{david2003,serfling1980}. The security rule is a newsvendor critical fractile \citep{arrow1951}.

\paragraph{Customer economics and causal calibration.} Customer lifetime value follows \citet{gupta2004} and \citet{fader2005}. Behavioural parameters can be calibrated from post-event panels. Two-way fixed-effects estimators are biased under staggered adoption with dynamic effects \citep{goodmanbacon2021,dechaisemartin2020,sun2021}. Group-time estimators \citep{callaway2021} and synthetic controls \citep{abadie2003,abadie2010,abadie2021} avoid this bias. \Citet{roth2023} survey the area, and inference uses the bootstrap \citep{efron1979}.

\paragraph{Law and economics of interim restraints.} The error-cost logic of interim relief goes back to \citet{leubsdorf1978}. \textit{American Hospital Supply} expressed it as the condition $P\cdot H_p>(1-P)\cdot H_d$ \citep{americanhospital1986}. The patent hold-up literature \citep{lemley2007,shapiro2010} stresses that the threat of an injunction against a component can extract value disproportionate to the component's contribution. Courts have required evidence of proximately caused loss \citep{nokia2011} and have upheld very large securities where lost profits, lost share and relaunch costs were documented \citep{sanofi2006}. We use this literature only to motivate why a tail-aware, causally decomposed distribution is the right output. All modelling is jurisdiction-neutral.

\section{Problem Formulation}\label{sec:formulation}
\subsection{Platforms, capabilities and restraint scenarios}

\begin{definition}[Capability chain]\label{def:chain}
A \emph{capability chain} is a finite totally ordered set $\mathcal{C}=\{0=c_0<c_1<\dots<c_m=1\}\subset[0,1]$. A level $c\in\mathcal{C}$ is the fraction of nominal functional capability (throughput, correctness or feature coverage) that a service delivers. $(\mathcal{C},\le)$ is a complete lattice with $\wedge=\min$ and $\vee=\max$, and so is the product $\mathcal{C}^V$ under the pointwise order.
\end{definition}

\begin{definition}[Platform]\label{def:platform}
A \emph{platform} is a tuple $\mathcal{P}=(V,E,\{f_v\}_{v\in V},b,K)$ where:
\begin{itemize}[leftmargin=1.6em,itemsep=0pt]
\item $V$ is a finite set of services, and $E\subseteq V\times V$ contains $(u,v)$ iff $v$ depends on $u$. We write $\mathrm{dep}(v)=\{u:(u,v)\in E\}$.
\item $f_v:\mathcal{C}^{\mathrm{dep}(v)}\to\mathcal{C}$ is an aggregation operator generated by the grammar
\[ f::= 1 \mid s_u \mid f\wedge f \mid f\vee f \mid \mathrm{kth}_k(f,\dots,f)\mid q\wedge f,\qquad q\in\mathcal{C}, \]
where $\mathrm{kth}_k$ returns the $k$-th largest argument. The capped alternative $\mathrm{alt}_{q}(s_{u_1},\dots,s_{u_n})=\bigvee_i (q_i\wedge s_{u_i})$ is the derived form used for fallbacks.
\item $b\in\mathcal{C}^V$ is a vector of exogenous \emph{bases} (administrative caps), and $K\subseteq V$ is the set of revenue-bearing \emph{capabilities}.
\end{itemize}
\end{definition}

The interpretation is as follows. $f_v=\bigwedge_{u}s_u$ means that $v$ needs all of its dependencies (series logic). $\bigvee$ means any one suffices (parallel logic), and $\mathrm{kth}_k$ means $k$-of-$n$ redundancy. $\mathrm{alt}_q$ says that $v$ is delivered by its primary path at full capability, or by fallback path $i$ at a capped level $q_i<1$. For example, a stand-in processor might deliver $85\%$ of available-balance accuracy.

\begin{definition}[Restraint scenario and strategies]\label{def:scenario}
A \emph{restraint scenario} is a tuple $\mathcal{R}=(r,T_R,\mathcal{S})$. Here $r\in V$ is the restrained component, whose base is forced to $b_r=0$ during $[0,T_R)$. $T_R>0$ is a random \emph{restraint horizon}, and $\mathcal{S}$ is a finite set of response strategies. A strategy $s\in\mathcal{S}$ induces a random remediation completion time $T_d(s)\in(0,\infty]$, after which a non-restrained substitute restores the but-for capability levels. It also induces a remediation cost process. The \emph{effective degraded period} is
\begin{equation}\label{eq:tau}
\tau_s \;=\; T_R\wedge T_d(s).
\end{equation}
We study four canonical strategies. $S_0$ (\emph{degrade-and-wait}) has $T_d=\infty$. $S_1$ (\emph{nominal design-around}) builds a non-practising alternative at nominal COCOMO schedule. $S_2$ (\emph{crashed design-around}) is $S_1$ at 75\% schedule compression. $S_3$ (\emph{substitution}) licenses a third-party module and integrates it.
\end{definition}

\Cref{fig:timeline} shows the timeline. At $t=0$ the component is withdrawn and a switch-over outage of random duration $D_0$ may occur. During $[0,\tau_s)$ the platform runs on fallbacks with raised incident intensity and reduced capability. A launch planned for $t_L$ that depends on $r$ is held until $T_R\wedge T_d$.

\begin{figure}[t]
\centering
\begin{tikzpicture}[x=1.15cm,y=1cm,>=Latex,font=\footnotesize]
\draw[->,thick] (0,0) -- (13.2,0) node[right]{$t$ (years)};
\fill[red!12] (0,0.05) rectangle (8.4,1.25);
\fill[green!10] (8.4,0.05) rectangle (12.8,1.25);
\node[align=center] at (4.2,1.55) {degraded operation on fallbacks: $\ell^{\mathcal{R}}_v<\ell^{\mathrm{bf}}_v$, incident rate $m\lambda_{\mathrm{bf}}$};
\node[align=center] at (9.75,1.55) {restored capability};
\draw[thick,red!70!black] (0,0) -- (0,2.2) node[above,align=center]{restraint, $t=0$\\withdraw $r$};
\fill[red!60] (0,0.05) rectangle (0.35,1.25);
\node[red!60!black,anchor=west] at (0.38,1.02) {\scriptsize $\leftarrow$ switch-over outage $D_0$};
\foreach \x/\h in {1.6/0.6,2.9/0.75,4.4/0.5,5.1/0.8,6.8/0.7,7.6/0.55} {\draw[red!70!black,line width=1.2pt] (\x,0.05) -- (\x,\h);}
\node[red!70!black] at (5.3,-0.3) {\scriptsize excess incidents $N\sim\Pois(\lambda_{\mathrm{ex}}\tau)$, durations $D_i\sim\LN$};
\draw[thick,blue!70!black] (2.0,0) -- (2.0,-0.8);
\node[blue!70!black,anchor=east] at (1.95,-0.62) {\scriptsize $t_L$ planned launch};
\draw[thick,green!40!black] (8.4,0) -- (8.4,2.2) node[above,align=center]{$T_d(s)$\\remediation done};
\draw[thick,dashed] (11.2,0) -- (11.2,2.2) node[above]{$T_R$};
\draw[decorate,decoration={brace,amplitude=4pt,mirror},blue!60!black] (2.0,-0.85) -- (8.4,-0.85) node[midway,below=5pt]{\scriptsize forced delay $\Delta=(\tau_s-t_L)^+$};
\draw[decorate,decoration={brace,amplitude=5pt,mirror}] (0,-1.55) -- (8.4,-1.55) node[midway,below=6pt]{$\tau_s=T_R\wedge T_d(s)$};
\end{tikzpicture}
\caption{Restraint timeline with remediation completed before the restraint ends ($T_d<T_R$). Shown are the switch-over outage, the excess-incident process during the degraded period, the forced launch delay $\Delta$ and the effective degraded period $\tau_s$.}
\label{fig:timeline}
\end{figure}

\subsection{Loss as a counterfactual present value}
Fix a filtered probability space $(\Omega,\mathcal{F},(\mathcal{F}_t),\Prob)$ carrying the restrained and the but-for (unrestrained) worlds. They are coupled by common exogenous randomness (\cref{ass:coupling}). Let $\Pi^{\mathrm{bf}}_t$ and $\Pi^{s}_t$ be the enterprise's net cash-flow rates in the two worlds under strategy $s$, and let $\rho>0$ be the continuously compounded cost of capital.

\begin{definition}[Restraint-induced loss]\label{def:loss}
\begin{equation}\label{eq:loss}
L(s)\;=\;\int_0^\infty e^{-\rho t}\left(\Pi^{\mathrm{bf}}_t-\Pi^{s}_t\right)\dd t \;=\;\sum_{k\in\mathcal{K}}L_k(s),
\end{equation}
\[\mathcal{K}=\{\mathrm{ac},\mathrm{inc},\mathrm{fun},\mathrm{ops},\mathrm{att},\mathrm{fund},\mathrm{rem},\mathrm{del}\}.\]
The channels are: acute switch-over outage; excess incidents; functional degradation; workaround operations; customer attrition; deposit funding; remediation cost; and delayed launch. We call $L^\star=\min_{s\in\mathcal{S}}\E[L(s)]$ the \emph{mitigated} expected loss, and $s^\star$ the loss-minimising strategy.
\end{definition}

Taking the minimum over strategies encodes the \emph{duty to mitigate}: the loss attributed to the restraint is the loss under the best available response, not under an arbitrary one.

\subsection{Admissibility logic}
We use a many-sorted first-order language with sorts \textsf{Svc}, \textsf{Chan} and \textsf{Atom}. Its predicates are: reachability $u\pathto v$ (the transitive closure of $E$); $\mathrm{Lvl}^{w}(v,c)$ (``in world $w\in\{\mathrm{bf},\mathcal{R}\}$, $v$ has level $c$''); $\mathrm{Aff}(k,v)$ (channel $k$ is computed from the level of $v$); and $\mathrm{Atoms}(k,a)$ (loss atom $a$ is assigned to channel $k$). Loss atoms are elementary customer--time or event cells (\cref{sec:partition}).

\begin{definition}[Admissible channel]\label{def:adm}
\begin{multline}\label{eq:adm}
\Adm(k)\;:\Longleftrightarrow\; \Big[\exists v\in K\;\big(r\pathto v\;\wedge\;\exists c,c'\,(\mathrm{Lvl}^{\mathcal{R}}(v,c)\wedge\mathrm{Lvl}^{\mathrm{bf}}(v,c')\wedge c<c')\\
\wedge\;\mathrm{Aff}(k,v)\big)\;\vee\;k\in\{\mathrm{ac},\mathrm{inc},\mathrm{rem}\}\Big]\\
\wedge\;\forall k'\neq k\;\forall a\;\neg\big(\mathrm{Atoms}(k,a)\wedge\mathrm{Atoms}(k',a)\big).
\end{multline}
\end{definition}

The first conjunct is a \emph{causal-path} condition. A functional, attrition, funding or delay channel is admissible only if it is computed from a capability that is reachable from $r$ and actually degraded. Acute, incident and remediation channels are admissible directly, because they are caused by the withdrawal itself. The second conjunct forbids double counting. \Cref{thm:locality} turns the first conjunct into a decidable graph test. \Cref{thm:partition} establishes the second for our channel definitions.

\subsection{Temporal specifications}
Operational behaviour is described by timed traces $\sigma:\R_{\ge0}\to 2^{AP}\times\mathcal{C}^V$ over atomic propositions $AP\supseteq\{\mathsf{active}(r),\mathsf{up},\mathsf{restored}\}$. Using metric temporal logic \citep{koymans1990} with bounded operators $\Always_I$ and $\Eventually_I$, we require:
\begin{align}
\varphi_{\mathrm{comp}}&:=\Always_{[0,T_R)}\,\neg\mathsf{active}(r) &&\text{(compliance)}\label{eq:mtl1}\\
\varphi_{\mathrm{deg}}&:=\Always_{[0,\tau_s)}\textstyle\bigwedge_{v\in K}\big(\mathrm{lvl}_v\le\ell^{\mathcal{R}}_v\big) &&\text{(degraded envelope)}\label{eq:mtl2}\\
\varphi_{\mathrm{rem}}&:=\Eventually_{[0,T_d(s)]}\,\mathsf{restored}\;\wedge\;\Always_{[T_d(s),\infty)}\textstyle\bigwedge_{v\in K}\big(\mathrm{lvl}_v=\ell^{\mathrm{bf}}_v\big) &&\text{(remediation)}\label{eq:mtl3}\\
\varphi_{\mathrm{SLA}}^{\theta}&:=\Always\big(\neg\mathsf{up}\rightarrow\Eventually_{[0,\theta]}\mathsf{up}\big) &&\text{(restoration within $\theta$ hours)}\label{eq:mtl4}
\end{align}
A violation of $\varphi_{\mathrm{SLA}}^{\theta_j}$ triggers the threshold penalty $\kappa_j$ in the acute loss function (\cref{sec:acute}). The reliability layer evaluates the continuous stochastic logic (CSL) property
\begin{equation}\label{eq:csl}
\Phi_{\mathrm{rel}}(t,p):=\mathcal{P}_{\ge p}\big[\Always^{\le t}\,\mathsf{up}\big],
\end{equation}
and the steady-state operator $\mathcal{S}_{\ge a}[\mathsf{up}]$ \citep{aziz2000,baier2003} for the but-for and fallback architectures. The difference between the two drives the excess-incident channel (\cref{sec:ctmc}).

\subsection{Assumptions}
\begin{assumption}[Coupling]\label{ass:coupling}
Exogenous demand, prices and but-for failure processes are identical in the two worlds. The worlds differ only through mechanisms that originate at $r$.
\end{assumption}
\begin{assumption}[Piecewise stationarity]\label{ass:stationary}
Given parameters $\theta$, rates and levels are constant on $[0,\tau_s)$, and they revert to but-for values at $\tau_s$.
\end{assumption}
\begin{assumption}[Marked Poisson incidents]\label{ass:poisson}
Given $\theta$ and $\tau_s$, the fallback incident process is a Poisson process with rate $m\lambda_{\mathrm{bf}}$, $m\ge1$. Incident durations are i.i.d.\ and independent of the arrival times.
\end{assumption}
\begin{assumption}[Proportional-hazard customer response]\label{ass:hazard}
Excess churn of a customer in segment $j$ has cumulative hazard $\Lambda_j=\beta^{\mathrm{out}}_jH+\beta^{\mathrm{deg}}_jx_j\tau_s$. Here $H$ is the number of outage hours experienced and $x_j$ is the capability loss relevant to the segment.
\end{assumption}
\begin{assumption}[Project-value dynamics]\label{ass:gbm}
Under a pricing measure $\Q$, the value of a planned launch follows $\dd V_t=(r-\delta)V_t\dd t+\sigma V_t\dd W_t$, with $r>0$, $\delta>0$ and $\sigma>0$.
\end{assumption}

\section{The Technical--Economic Impact Model}\label{sec:model}
\Cref{fig:framework} shows the model's architecture. A technical layer maps the restraint into capability levels, incident intensities and backlog dynamics. An economic layer maps these into channel cash flows. A remediation layer determines $\tau_s$. An uncertainty layer propagates parameter and event uncertainty into a loss distribution, which a decision layer converts into risk measures and a security rule.

\begin{figure}[t]
\centering
\begin{tikzpicture}[font=\footnotesize,>=Latex,
  box/.style={draw,rounded corners=2pt,align=center,minimum height=0.95cm,text width=3.1cm,fill=#1},
  box/.default=white]
\node[box=blue!8] (Rm) at (0,0) {Remediation\\COCOMO II + SAN (\S\ref{sec:remediation})};
\node[box=blue!8] (G) at (3.6,0) {Dependency lattice\\$\lfp\Phi$, $\gfp\Phi$ (\S\ref{sec:lattice})};
\node[box=blue!8] (M) at (7.2,0) {CTMC reliability\\uniformization, CSL (\S\ref{sec:ctmc})};
\node[box=blue!8] (Qu) at (10.8,0) {Fluid backlog\\$W(D)\propto D^2$ (\S\ref{sec:acute})};
\node[box=orange!10] (T) at (0,-1.9) {Timing\\$\tau_s=T_R\wedge T_d(s)$};
\node[box=orange!10] (F) at (3.6,-1.9) {Capability shortfall\\functional \& ops $\pi_v\,\Delta\ell_v$};
\node[box=orange!10] (I) at (7.2,-1.9) {Excess incidents\\compound Poisson};
\node[box=orange!10] (A) at (10.8,-1.9) {Acute outage loss\\convex + thresholds};
\node[box=green!10] (Rc) at (0,-3.8) {Remediation cost\\PV over $[0,\tau_s]$};
\node[box=green!10] (D) at (3.6,-3.8) {Delayed launch\\forced-delay real option};
\node[box=green!10] (C) at (7.2,-3.8) {Attrition \& funding\\survival-weighted partition};
\node[box=gray!12,text width=14.0cm] (U) at (5.4,-5.5) {Uncertainty layer: Gaussian copula $\theta\sim C_R(F_1,\dots,F_d)$, aleatory events $\omega$, common random numbers across strategies; $L(s;\theta,\omega)=\sum_k L_k$};
\node[box=red!8,text width=14.0cm] (Dz) at (5.4,-6.9) {Decision layer: $\E L$, $\VaR_\alpha$, $\CVaR_\alpha$, Euler allocation, Sobol indices; $s^\star=\arg\min_s\E L(s)$; security $B^\star=F_L^{-1}(1-c/(qw))$ (\S\ref{sec:bond})};
\draw[->] (Rm) -- (T); \draw[->] (G) -- (F); \draw[->] (M) -- (I); \draw[->] (Qu) -- (A);
\draw[->] (T) -- (F); \draw[->] (A) -- (I);
\draw[->] (T) -- (Rc); \draw[->] (T) -- (D); \draw[->] (F) -- (D); \draw[->] (F) -- (C); \draw[->] (I) -- (C);
\draw[->] (A.south) -- ++(0,-1.2) -- ($(U.north)+(5.4,0)$);
\foreach \n in {Rc,D,C} {\draw[->] (\n) -- (\n |- U.north);}
\draw[->] (U) -- (Dz);
\end{tikzpicture}
\caption{Architecture of the technical--economic impact model (TEIM). Blue: technical layer. Orange: operational--economic layer. Green: market and remediation layer. Grey and red: uncertainty and decision layers. The degraded period $\tau_s$ enters every flow channel; for clarity only its direct arrows are drawn.}
\label{fig:framework}
\end{figure}

\subsection{Lattice semantics of dependency propagation}\label{sec:lattice}
Given bases $b$ and fallback caps $q$, define the platform operator $\Phi_{b,q}:\mathcal{C}^V\to\mathcal{C}^V$ by
\begin{equation}\label{eq:phi}
\Phi_{b,q}(s)_v \;=\; b_v\wedge f_v(s).
\end{equation}
A \emph{consistent capability state} is a fixed point $s=\Phi(s)$. Restraint sets $b_r=0$. We write $\ell^{\mathrm{bf}}$ and $\ell^{\mathcal{R}}$ for the fixed points in the but-for and restrained configurations, and $\Delta\ell_v=\ell^{\mathrm{bf}}_v-\ell^{\mathcal{R}}_v$ for the capability loss.

\begin{theorem}[Existence, computability and semantics of capability states]\label{thm:lattice}
Let $h=|V|(|\mathcal{C}|-1)$ be the height of $\mathcal{C}^V$.
\begin{enumerate}[label=(\alph*),itemsep=0pt]
\item $\Phi$ is monotone, and its fixed points form a non-empty complete lattice. In particular, a least fixed point $\lfp\Phi$ and a greatest fixed point $\gfp\Phi$ exist.
\item The Kleene iterates $s^{(0)}=\bot$, $s^{(n+1)}=\Phi(s^{(n)})$ form an ascending chain that reaches $\lfp\Phi$ after at most $h$ strict increases. Dually, iteration from $\top$ descends to $\gfp\Phi$ after at most $h$ strict decreases.
\item If $(V,E)$ is acyclic, $\Phi$ has a unique fixed point, $\lfp\Phi=\gfp\Phi$. It is reached from any initial state after at most $d+1$ iterations, where $d$ is the length of the longest path.
\end{enumerate}
\end{theorem}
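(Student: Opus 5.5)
For part (a) the plan is to show, by structural induction on the grammar in \cref{def:platform}, that every aggregation operator $f_v$ is monotone in the pointwise order on $\mathcal{C}^{\mathrm{dep}(v)}$. The base cases are the constant $1$ and the projection $s_u$. The constructors $\wedge$, $\vee$ and $q\wedge\cdot$ preserve monotonicity because $\min$ and $\max$ are monotone in each argument. For $\mathrm{kth}_k$, raising one argument cannot lower the $k$-th largest value. One way to see this is the identity $\mathrm{kth}_k(x_1,\dots,x_n)=\bigvee_{|I|=k}\bigwedge_{i\in I}x_i$, which expresses it through the already-handled lattice operations. It follows that $\Phi_{b,q}(s)_v=b_v\wedge f_v(s)$ is monotone on the complete lattice $\mathcal{C}^V$ of \cref{def:chain}. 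Knaster--Tarski then gives that $\mathrm{Fix}\,\Phi$ is a non-empty complete lattice with extremal elements $\lfp\Phi$ and $\gfp\Phi$.

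For part (b) I will first show that the iterates ascend. We have $s^{(0)}=\bot\le s^{(1)}$ trivially, and monotonicity propagates this to $s^{(n)}\le s^{(n+1)}$ for all $n$. Next, an induction shows that each iterate lies below any fixed point $p$: from $s^{(n)}\le p$ we get $s^{(n+1)}=\Phi(s^{(n)})\le\Phi(p)=p$. The chain cannot keep rising, since any strictly ascending chain in $\mathcal{C}^V$ has at most $h=|V|(|\mathcal{C}|-1)$ strict steps: each strict step raises at least one coordinate by at least one rank, and the total rank sum is bounded by $h$. Once $s^{(n)}=s^{(n+1)}$, the iterate $s^{(n)}$ is a fixed point lying below every fixed point, so it equals $\lfp\Phi$. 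The descending iteration from $\top$ to $\gfp\Phi$ follows by the order-dual argument. This route uses finiteness of the lattice rather than Kleene continuity, although continuity is in any case automatic on a finite lattice.

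For part (c) the key observation is that when $(V,E)$ is acyclic, $\Phi(s)_v$ depends only on $s$ restricted to $\mathrm{dep}(v)$. Define the depth $\mathrm{dp}(v)$ as the length of the longest path ending at $v$, so that $\mathrm{dp}(v)\le d$ and sources have depth $0$. The claim to prove by induction on $n$ is: for any two initial states $x,y$, the iterates agree on all $v$ with $\mathrm{dp}(v)\le n-1$ after $n$ iterations, that is, $\Phi^n(x)_v=\Phi^n(y)_v$. The base case holds because sources have $\Phi(s)_v=b_v\wedge f_v$ with $f_v$ constant, so they agree after one step. For the inductive step, every $u\in\mathrm{dep}(v)$ has $\mathrm{dp}(u)<\mathrm{dp}(v)$. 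Agreement at step $n$ on depths $\le n-1$ therefore gives agreement at step $n+1$ on depth $n$, and agreement on lower depths persists by the same argument.

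Hence after $d+1$ iterations $\Phi^{d+1}(x)$ is independent of $x$. Taking $x$ to be any fixed point $p$ gives $\Phi^{d+1}(y)=p$ for every $y$. So the fixed point is unique, $\lfp\Phi=\gfp\Phi$, and it is reached from any start within $d+1$ steps. The main obstacle is the bookkeeping in this induction, in particular getting the off-by-one right between path length $d$ and the $d+1$ iterations, and checking that the persistence of agreement on lower-depth nodes really does follow at each step. The $\mathrm{kth}$ monotonicity case in (a) is the only other point that needs a real argument.
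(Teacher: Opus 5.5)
Your proposal is correct and follows the paper's proof essentially step for step. It proves (a) from monotonicity of each grammar production plus Knaster--Tarski, (b) by the ascending-chain rank count together with the ``below every fixed point'' induction, and (c) by induction on longest-path depth, showing the iterates become independent of the initial state; you are somewhat more explicit than the paper (the $\mathrm{kth}_k$ join-of-meets identity, and allowing a source's $f_v$ to be any constant rather than assuming $\Phi(s)_v=b_v$), but the structure is the same.
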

\begin{proof}
(a) Every production of the grammar in \cref{def:platform} is monotone in each argument, and so are compositions of monotone maps and the meet with the constant $b_v$. Hence $s\le s'\Rightarrow\Phi(s)\le\Phi(s')$. $\mathcal{C}^V$ is a finite, hence complete, lattice, so the Knaster--Tarski theorem \citep{tarski1955} applies.
(b) $s^{(0)}=\bot\le s^{(1)}$. If $s^{(n)}\le s^{(n+1)}$, monotonicity gives $s^{(n+1)}=\Phi(s^{(n)})\le\Phi(s^{(n+1)})=s^{(n+2)}$. Each strict increase raises at least one coordinate by at least one step of $\mathcal{C}$, so there are at most $h$ strict increases. The stationary value $s^\ast$ is a fixed point. If $x=\Phi(x)$ is any fixed point, then $\bot\le x$ and induction give $s^{(n)}\le\Phi^n(x)=x$, so $s^\ast\le x$ and $s^\ast$ is least. The argument for $\gfp\Phi$ is dual.
(c) Order $V$ topologically and let $\mathrm{depth}(v)$ be the length of the longest path ending at $v$. Coordinates with depth $0$ are leaves, with $\Phi(s)_v=b_v$ after one iteration. If all coordinates of depth $<k$ are fixed after $k$ iterations, a coordinate of depth $k$ depends only on them and is fixed after $k+1$ iterations, independently of the initial state. Hence every fixed point equals the common limit.
\end{proof}

\begin{remark}[Cold start versus warm continuation]
Cyclic dependencies arise in practice: an authentication service may read its configuration from a store that itself requires authentication. In such cases $\lfp\Phi$ is the \emph{cold-start} semantics, where a cycle without an external bootstrap cannot come up. $\gfp\Phi$ is the \emph{warm-continuation} semantics, where a running cycle sustains itself. Withdrawing a component from a running platform often requires restarts, so the lfp is the prudent choice for loss quantification. Both are computed (\cref{alg:kleene}). The three case-study graphs are acyclic, so the two coincide by \cref{thm:lattice}(c).
\end{remark}

\begin{proposition}[Monotone comparative statics]\label{prop:monotone}
If $b\le b'$ and $q\le q'$ pointwise, then $\lfp\Phi_{b,q}\le\lfp\Phi_{b',q'}$ and $\gfp\Phi_{b,q}\le\gfp\Phi_{b',q'}$. Consequently, every functional channel of the form $L_{\mathrm{fun}}=\sum_{v\in K}\pi_v\,\Delta\ell_v\,a(\tau)$ with $\pi_v\ge0$ is non-increasing in the fallback caps.
\end{proposition}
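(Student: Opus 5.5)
The plan is to prove the two fixed-point inequalities first, using a pointwise order on operators, and then read off the functional-channel statement as a corollary.

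\emph{Step 1: pointwise order of the operators.} The key observation is that the operators themselves are ordered: $\Phi_{b,q}(s)\le\Phi_{b',q'}(s)$ for every state $s\in\mathcal{C}^V$. Each $f_v$ is generated by the grammar of \cref{def:platform}, and the caps $q$ enter only through productions $q\wedge f$. Every production is monotone in its arguments, including in the constant $q$, because $\wedge$, $\vee$ and $\mathrm{kth}_k$ are all monotone in each argument. A structural induction on the term for $f_v$ then shows that $f_v$ is jointly monotone in $(s,q)$. Taking the meet with $b_v\le b'_v$ preserves the inequality. Combining this with monotonicity in $s$, established in \cref{thm:lattice}(a), gives, for $s\le s'$,
\[
\Phi_{b,q}(s)\le\Phi_{b,q}(s')\le\Phi_{b',q'}(s').
\]

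\emph{Step 2: the least fixed point.} I would use the Kleene characterisation of \cref{thm:lattice}(b). Let $s^{(n)}$ and $t^{(n)}$ be the iterates of $\Phi_{b,q}$ and $\Phi_{b',q'}$ from $\bot$. Induction with the display above gives $s^{(n)}\le t^{(n)}$ for all $n$. Both chains stabilise after finitely many steps, so passing to the limits gives $\lfp\Phi_{b,q}\le\lfp\Phi_{b',q'}$.

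Alternatively, one can argue with Tarski directly. Let $x'=\lfp\Phi_{b',q'}$. Then $\Phi_{b,q}(x')\le\Phi_{b',q'}(x')=x'$, so $x'$ is a pre-fixed point of $\Phi_{b,q}$. The least fixed point is the meet of all pre-fixed points, so $\lfp\Phi_{b,q}\le x'$.

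\emph{Step 3: the greatest fixed point.} This case is dual. Let $x=\gfp\Phi_{b,q}$. Then $x=\Phi_{b,q}(x)\le\Phi_{b',q'}(x)$, so $x$ is a post-fixed point of $\Phi_{b',q'}$. The greatest fixed point is the join of all post-fixed points, so $x\le\gfp\Phi_{b',q'}$. Descending Kleene iteration from $\top$ would give the same conclusion.

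\emph{Step 4: the functional channel.} Raising the caps ($q\le q'$) in the restrained configuration raises $\ell^{\mathcal{R}}$ pointwise, by Steps 2 and 3. The but-for level $\ell^{\mathrm{bf}}$ needs more care, since it may also depend on $q$. I would handle this in one of two ways:
\begin{itemize}
\item interpret the statement as varying only the fallback caps, which bind only in the restrained world; or
\item observe that in the but-for world the primary paths sit at level $1$, so each $\mathrm{alt}_q$ term evaluates to $1$ regardless of $q$.
\end{itemize}
Either way $\ell^{\mathrm{bf}}$ is fixed. Then $\Delta\ell_v=\ell^{\mathrm{bf}}_v-\ell^{\mathcal{R}}_v$ is non-increasing in $q$. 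Since $\pi_v\ge0$ and $a(\tau)\ge0$, with $\tau$ not depending on $q$, $L_{\mathrm{fun}}$ is non-increasing in $q$.

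\emph{Main obstacle.} The delicate point is Step 4: justifying that $\ell^{\mathrm{bf}}$ is independent of the caps, or at least that the statement only intends restrained-world caps. I would state this explicitly as the intended reading. The structural induction in Step 1 is routine.
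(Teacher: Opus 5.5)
Your proposal is correct and follows the paper's proof: you order the operators pointwise, induct on the Kleene iterates from $\bot$ (with the gfp case dual), and observe that $\ell^{\mathrm{bf}}$ is unaffected by the fallback caps when the primary path is available. Your Tarski pre-/post-fixed-point argument is an equivalent alternative. The second bullet of your Step 4 is exactly the paper's proviso, but present it as a hypothesis rather than a general fact, since a primary path need not sit at level $1$ in the but-for world.
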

\begin{proof}
For every $s$, $\Phi_{b,q}(s)\le\Phi_{b',q'}(s)$, since caps enter only through meets and bases only through $b_v\wedge\cdot$. By induction, $\Phi^n_{b,q}(\bot)\le\Phi^n_{b',q'}(\bot)$, using monotonicity of $\Phi_{b',q'}$. Pass to the limit by \cref{thm:lattice}(b). The gfp case is dual. The consequence follows because $\Delta\ell_v=\ell^{\mathrm{bf}}_v-\ell^{\mathcal{R}}_v$ and $\ell^{\mathrm{bf}}_v$ does not depend on the fallback caps when the primary path is available.
\end{proof}

\begin{theorem}[Causal locality]\label{thm:locality}
Let $\mathrm{Reach}(r)=\{v: r\pathto v\}\cup\{r\}$. For every $v\notin\mathrm{Reach}(r)$, both $\lfp\Phi$ and $\gfp\Phi$ at $v$ are invariant to the value of $b_r$. Hence, in \eqref{eq:adm}, any channel computed only from capabilities outside $\mathrm{Reach}(r)$ violates the causal-path conjunct, and admissibility of a channel is decidable in $O(|V|+|E|)$ time.
\end{theorem}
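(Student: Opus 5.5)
The plan is to work with the set $U=V\setminus\mathrm{Reach}(r)$ and show that it is ``upward closed'' for dependencies: if $v\in U$ and $u\in\mathrm{dep}(v)$, then $u\in U$. Indeed, if $u\in\mathrm{Reach}(r)$, then either $u=r$ or $r\pathto u$. Since $(u,v)\in E$, in both cases $r\pathto v$, which contradicts $v\notin\mathrm{Reach}(r)$. Also $r\notin U$ by definition.

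It follows that the restriction of $\Phi_{b,q}$ to coordinates in $U$ defines an operator $\Phi^U:\mathcal{C}^U\to\mathcal{C}^U$. For $v\in U$ it is given by $\Phi(s)_v=b_v\wedge f_v(s|_{\mathrm{dep}(v)})$, which reads only coordinates in $U$ and only bases $b_v$ with $v\in U$. In particular, $\Phi^U$ does not involve $b_r$ at all.

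The key step is to show that $(\lfp\Phi)|_U=\lfp\Phi^U$ and $(\gfp\Phi)|_U=\gfp\Phi^U$. For the least fixed point I would use the Kleene characterisation of \cref{thm:lattice}(b). Starting from $s^{(0)}=\bot$, induction on $n$ gives $s^{(n)}|_U=(\Phi^U)^n(\bot_U)$. This holds because the $U$-coordinates of $\Phi(s^{(n)})$ depend only on $s^{(n)}|_U$, by the closure property above. Since both chains stabilise after finitely many steps, the limits agree on $U$. The greatest fixed point is handled dually by iterating from $\top$. Neither $\Phi^U$ nor $\bot_U,\top_U$ depends on $b_r$. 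Hence $\lfp\Phi$ and $\gfp\Phi$ at every $v\in U$ are invariant to $b_r$, and in particular $\ell^{\mathrm{bf}}_v=\ell^{\mathcal{R}}_v$ there.

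I expect the main obstacle to be rigour rather than difficulty: one must check carefully that Kleene iteration on the full product projects onto Kleene iteration on the subproduct. This is where the closure of $U$ under $\mathrm{dep}$ is essential. Note that the descent to $\gfp$ must also be started from the full $\top$, not from some state that already depends on $b_r$.

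For the consequences, take a channel $k\notin\{\mathrm{ac},\mathrm{inc},\mathrm{rem}\}$ that is computed only from capabilities $v\in U$, so that $\mathrm{Aff}(k,v)$ holds only for such $v$. The first disjunct of \eqref{eq:adm} needs some $v\in K$ with $r\pathto v$, a strict degradation $c<c'$, and $\mathrm{Aff}(k,v)$. Every $v$ with $\mathrm{Aff}(k,v)$ fails both $r\pathto v$ and the degradation, the latter by the invariance just shown. So the causal-path conjunct fails. Conversely, the conjunct is checked by computing $\mathrm{Reach}(r)$ with a BFS or DFS from $r$ along $E$ in $O(|V|+|E|)$ time. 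One then tests membership for the finitely many affected $v\in K$; the strict-degradation part uses the already computed fixed points $\ell^{\mathrm{bf}},\ell^{\mathcal{R}}$. Thus the graph part of the admissibility test is linear time.
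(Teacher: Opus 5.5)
Your proposal is correct and follows the paper's own argument. Both show that $U=V\setminus\mathrm{Reach}(r)$ is closed under $\mathrm{dep}$, so $\Phi(s)|_U$ depends only on $s|_U$ and $b|_U$; the Kleene iterates from $\bot$ and $\top$ therefore coincide on $U$ for every value of $b_r$, and reachability is computed by breadth-first search. Two of your additions are sensible refinements the paper leaves implicit: you restrict explicitly to $k\notin\{\mathrm{ac},\mathrm{inc},\mathrm{rem}\}$, and you note that only the graph part of the admissibility test is linear-time, since computing the fixed points costs more.
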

\begin{proof}
Let $U=V\setminus\mathrm{Reach}(r)$. If $v\in U$ and $u\in\mathrm{dep}(v)$, then $u\in U$: otherwise $r\pathto u\to v$ would put $v\in\mathrm{Reach}(r)$. Hence the restriction $\Phi(s)|_U$ depends only on $s|_U$ and on $b|_U$, which does not contain $b_r$. The Kleene iterates restricted to $U$ are therefore identical for all values of $b_r$, and so are their limits. Reachability is computed by breadth-first search.
\end{proof}

\begin{example}[Cycle with break-glass bootstrap]\label{ex:toy}
Consider the seven services $V=\{\mathrm{DNS},\mathrm{AUTH},\mathrm{CFG},\mathrm{BOOT},$ $\mathrm{API},\mathrm{APP},\mathrm{CACHE}\}$ with aggregation operators
\begin{gather*}
f_{\mathrm{AUTH}}=s_{\mathrm{DNS}}\wedge s_{\mathrm{CFG}},\qquad f_{\mathrm{CFG}}=s_{\mathrm{AUTH}}\vee s_{\mathrm{BOOT}},\qquad f_{\mathrm{API}}=s_{\mathrm{AUTH}}\wedge s_{\mathrm{CFG}},\\
f_{\mathrm{APP}}=(1\wedge s_{\mathrm{API}})\vee(0.5\wedge s_{\mathrm{CACHE}}).
\end{gather*}
The other services are leaves (\cref{fig:toy}). With the break-glass bootstrap disabled ($b_{\mathrm{BOOT}}=0$), the cycle AUTH$\leftrightarrow$CFG admits three fixed points. The lfp is $\ell_{\mathrm{APP}}=0.5$ (cold start: degraded read-only service from the cache). The gfp is $\ell_{\mathrm{APP}}=1$ (warm continuation). \Cref{tab:v-fix} confirms by exhaustive enumeration that Kleene iteration returns the least and greatest fixed points.
\end{example}

\begin{figure}[t]
\centering
\begin{tikzpicture}[font=\footnotesize,>=Latex,node distance=1.0cm and 1.5cm,
  svc/.style={draw,rounded corners=2pt,minimum width=1.5cm,minimum height=0.6cm,fill=blue!6}]
\node[svc] (dns) {DNS};
\node[svc,below=of dns] (boot) {BOOT};
\node[svc,right=of dns] (auth) {AUTH $\wedge$};
\node[svc,right=of boot] (cfg) {CFG $\vee$};
\node[svc,right=of auth] (api) {API $\wedge$};
\node[svc,right=of api,fill=orange!12] (app) {APP alt$_{(1,\,0.5)}$};
\node[svc,below=of app] (cache) {CACHE};
\draw[->] (dns) -- (auth); \draw[->] (boot) -- (cfg);
\draw[->] (cfg.north west) to[bend left=15] (auth.south west);
\draw[->] (auth.south east) to[bend left=15] (cfg.north east);
\draw[->] (auth) -- (api); \draw[->] (cfg) -- (api.south west);
\draw[->] (api) -- node[above]{\scriptsize $q=1$} (app); \draw[->] (cache) -- node[right]{\scriptsize $q=0.5$} (app);
\end{tikzpicture}
\caption{Toy platform of \cref{ex:toy}, with a dependency cycle AUTH$\leftrightarrow$CFG and a capped fallback for APP.}
\label{fig:toy}
\end{figure}

\subsection{Reliability layer: CTMC models and the incident multiple}\label{sec:ctmc}
Each architecture (but-for and fallback) is modelled as a series system of independent blocks. A hardware or instance block is a $k$-of-$n$ repairable cluster: a birth--death CTMC on the number $j\in\{0,\dots,n\}$ of failed nodes, with generator
\begin{equation}\label{eq:bd}
q_{j,j+1}=(n-j)\lambda,\qquad q_{j,j-1}=\min(j,\varsigma)\,\mu,\qquad \mathcal{U}=\{j:n-j\ge k\},
\end{equation}
where $\varsigma$ is the number of repair crews. Software and configuration common-mode failures are $1$-of-$1$ blocks. Their rate scales with operational change intensity, and manual workarounds raise it in fallback mode. For an irreducible generator $\mathbf{Q}$ with stationary law $\bm\pi$ ($\bm\pi\mathbf{Q}=0$, $\bm\pi\mathbf{1}=1$), the availability, the Markov failure frequency and the mean down time are
\begin{equation}\label{eq:avail}
A=\sum_{j\in\mathcal{U}}\pi_j,\qquad f=\sum_{i\in\mathcal{U}}\sum_{j\notin\mathcal{U}}\pi_i q_{ij},\qquad \mathrm{MDT}=\frac{1-A}{f}.
\end{equation}
For independent blocks in series, $A=\prod_iA_i$ and $f=\sum_i f_i\prod_{j\ne i}A_j$. Both identities are exact under independence, because the system leaves $\mathcal{U}$ exactly when one block fails while all others are up. The CSL interval reliability $R(t)=\Prob(\Always^{\le t}\mathsf{up})$ is obtained by making the down states absorbing ($\mathbf{Q}^{\mathcal{U}}$). It is evaluated by uniformization \citep{jensen1953,grassmann1977}:
\begin{equation}\label{eq:unif}
\bm\pi(t)=\sum_{n=0}^{\infty}e^{-\Lambda t}\frac{(\Lambda t)^n}{n!}\,\bm\pi(0)\mathbf{P}^n,\qquad \mathbf{P}=\mathbf{I}+\mathbf{Q}/\Lambda,\quad \Lambda\ge\max_i|q_{ii}|.
\end{equation}

\begin{proposition}[Uniformization error and cumulative rewards]\label{prop:unif}
Truncating \eqref{eq:unif} at $N$ terms gives $\|\bm\pi(t)-\bm\pi_N(t)\|_1\le 1-\sum_{n\le N}e^{-\Lambda t}(\Lambda t)^n/n!$. The interval occupation measure satisfies $\int_0^t\bm\pi(u)\dd u=\Lambda^{-1}\sum_{n\ge0}\Prob(N_{\Lambda t}>n)\,\bm\pi(0)\mathbf{P}^n$, with $N_{\Lambda t}\sim\Pois(\Lambda t)$.
\end{proposition}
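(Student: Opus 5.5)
The plan is to read $\mathbf{P}=\mathbf{I}+\mathbf{Q}/\Lambda$ as a stochastic matrix and to represent the chain as a discrete-time chain with transition matrix $\mathbf{P}$, subordinated to an independent Poisson clock $N_t$ of rate $\Lambda$. I would first check that $\mathbf{P}$ is stochastic. Its off-diagonal entries $q_{ij}/\Lambda$ are nonnegative. Its diagonal entries $1+q_{ii}/\Lambda$ are nonnegative because $\Lambda\ge\max_i|q_{ii}|$. Its rows sum to one because $\mathbf{Q}\mathbf{1}=0$; this also holds for the absorbing version $\mathbf{Q}^{\mathcal{U}}$, whose absorbing rows are zero. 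Then I would write $e^{\mathbf{Q}t}=e^{-\Lambda t}e^{\Lambda t\mathbf{P}}$, which is valid because $\Lambda t\mathbf{I}$ commutes with $\mathbf{P}$. Expanding the exponential series gives \eqref{eq:unif} with weights $p_n=e^{-\Lambda t}(\Lambda t)^n/n!$, and these weights are the Poisson probabilities $\Prob(N_{\Lambda t}=n)$.

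For the truncation bound, the error is $\bm\pi(t)-\bm\pi_N(t)=\sum_{n>N}p_n\,\bm\pi(0)\mathbf{P}^n$. Each row vector $\bm\pi(0)\mathbf{P}^n$ is a probability vector, since $\bm\pi(0)$ is a distribution and $\mathbf{P}$ is stochastic, so it has $\ell_1$ norm equal to one. The triangle inequality then gives $\|\bm\pi(t)-\bm\pi_N(t)\|_1\le\sum_{n>N}p_n=1-\sum_{n\le N}p_n$. In fact equality holds, since all terms are nonnegative. Absolute convergence of the series justifies the term-by-term manipulations.

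For the occupation measure, I would integrate \eqref{eq:unif} term by term; Tonelli's theorem applies because all entries are nonnegative. The key scalar identity is
\[
\int_0^t e^{-\Lambda u}\frac{(\Lambda u)^n}{n!}\dd u=\frac{1}{\Lambda}\Prob(N_{\Lambda t}>n).
\]
I would prove it either by differentiating the right-hand side in $t$ or by the Gamma–Poisson duality. For the latter, $\Lambda e^{-\Lambda u}(\Lambda u)^n/n!$ is the Erlang$(n+1,\Lambda)$ density of the $(n+1)$-th arrival time $S_{n+1}$. Its integral over $[0,t]$ is $\Prob(S_{n+1}\le t)=\Prob(N_{\Lambda t}\ge n+1)$. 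Substituting this identity gives the stated formula.

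The only point needing care is the interchange of sum and integral together with the stochasticity of $\mathbf{P}$, including the absorbing case. Both are routine, so I do not expect a real obstacle. The Erlang identity is the crux of the second claim, and the differentiation check $\frac{\dd}{\dd t}\Prob(N_{\Lambda t}>n)=\Lambda p_n(t)$ serves as a backup proof of it.
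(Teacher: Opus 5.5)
Your proposal is correct and follows the same route as the paper. Stochasticity of $\mathbf{P}$ gives $\|\bm\pi(0)\mathbf{P}^n\|_1=1$, so the truncation error is bounded by the Poisson tail mass, and term-by-term integration with the Gamma--Poisson (Erlang) identity yields the occupation formula; your extra details (stochasticity in the absorbing case, Tonelli for the interchange, and equality in the bound since all terms are nonnegative) fill in steps the paper leaves implicit.
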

\begin{proof}
$\mathbf{P}$ is stochastic, so $\|\bm\pi(0)\mathbf{P}^n\|_1=1$ and the tail of the series is bounded by the Poisson tail mass. Integrating term by term, $\int_0^t e^{-\Lambda u}(\Lambda u)^n/n!\dd u=\Lambda^{-1}\Prob(N_{\Lambda t}>n)$ by the Gamma--Poisson identity.
\end{proof}

The \emph{architecture-implied incident multiple} is $\hat m=f^{\mathrm{fb}}/f^{\mathrm{bf}}$. It is used as the prior median of $m$ in \cref{ass:poisson}. The excess intensity attributable to the restraint is $\lambda_{\mathrm{ex}}=(m-1)^+\lambda_{\mathrm{bf}}$. By the superposition property of Poisson processes and \cref{ass:coupling}, the fallback process is the superposition of the but-for process and an independent excess process. Only the latter is attributable.

\subsection{Acute outage economics}\label{sec:acute}
\begin{theorem}[Fluid backlog convexity]\label{thm:backlog}
Let demand arrive at rate $\lambda$. During an outage of length $D$, a fraction $\psi\in(0,1]$ of arrivals is deferred (queued) and $1-\psi$ is lost. After restoration, service capacity is $c>\lambda$. In the fluid limit, the backlog peaks at $Q^\ast=\psi\lambda D$, drains in time $T_c=\psi\lambda D/(c-\lambda)$, and accumulates the waiting area
\begin{equation}\label{eq:W}
W(D)=\int_0^{D+T_c}Q(t)\dd t=\tfrac12\,\psi\lambda D^2\Big(1+\frac{\psi\lambda}{c-\lambda}\Big),
\end{equation}
which is strictly convex in $D$. By Little's law \citep{little1961}, the mean delay per deferred transaction is $W/(\psi\lambda D)=\tfrac{D}{2}\big(1+\psi\lambda/(c-\lambda)\big)$, which is linear and increasing in $D$.
\end{theorem}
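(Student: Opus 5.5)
The plan is to write down the fluid backlog path explicitly as a piecewise-linear function of time and integrate it. In the fluid limit the queue obeys $\dot Q=\psi\lambda$ during the outage $[0,D)$: nothing is served and the deferred fraction accumulates. After restoration, $\dot Q=\lambda-c<0$ while $Q>0$, because new arrivals continue at rate $\lambda$ and the server works at capacity $c$. This uses the standing assumption that post-restoration arrivals are served and not deferred. Hence $Q(t)=\psi\lambda t$ on $[0,D]$, and $Q(t)=\psi\lambda D-(c-\lambda)(t-D)$ on $[D,D+T_c]$, with $Q\equiv0$ thereafter. The peak occurs at $t=D$ with value $Q^\ast=\psi\lambda D$. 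Setting the second branch to zero gives the drain time $T_c=\psi\lambda D/(c-\lambda)$, which is finite precisely because $c>\lambda$.

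The waiting area is the area of a triangle with base $D+T_c$ and height $Q^\ast$. So $W(D)=\tfrac12 Q^\ast(D+T_c)=\tfrac12\psi\lambda D\big(D+\psi\lambda D/(c-\lambda)\big)$, which is \eqref{eq:W} after factoring out $D^2$. Strict convexity follows because $W(D)=\kappa D^2$ with $\kappa=\tfrac12\psi\lambda\big(1+\psi\lambda/(c-\lambda)\big)>0$, since $\psi>0$, $\lambda>0$ and $c>\lambda$. Then $W''=2\kappa>0$.

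For the delay statement, I would apply Little's law to the fluid system over the busy cycle $[0,D+T_c]$. The total number of deferred transactions entering is $\psi\lambda D$, and every unit of mass present in the queue accrues waiting time, so the total waiting time equals the area $W$. The mean delay per deferred transaction is therefore $W/(\psi\lambda D)=\tfrac D2\big(1+\psi\lambda/(c-\lambda)\big)$, which is linear and increasing in $D$.

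The main obstacle is a modelling subtlety rather than a calculation. Interpreting $W$ as the total delay of the \emph{deferred} transactions requires that the arrivals during the drain phase do not themselves wait, or, under FIFO, that their waiting is charged to the backlog. In the fluid model with FIFO, the queue content during the drain phase is exactly the residual deferred mass that precedes new arrivals, and each new arrival waits. I would resolve this by stating the accounting convention explicitly: $Q(t)$ counts excess work, and by work conservation the integral of excess work equals total excess delay regardless of service order. Little's law then applies to this excess-work process with input $\psi\lambda D$.
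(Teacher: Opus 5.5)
Your argument is the paper's: the same piecewise-linear fluid path and the same area computation. The paper splits your single triangle of base $D+T_c$ into two triangles with bases $D$ and $T_c$ and common height $\psi\lambda D$. Both then use $W''=\psi\lambda\bigl(1+\psi\lambda/(c-\lambda)\bigr)>0$ and appeal directly to Little's law.

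Your final paragraph is unnecessary, and its FIFO remark is slightly off. Under FIFO the drain-phase queue also holds waiting new arrivals, so $W/(\psi\lambda D)$ would be the excess delay per deferred transaction rather than their mean delay. The convention you already adopt, that post-restoration arrivals are served on arrival, makes $Q(t)$ exactly the deferred mass, so Little's law applies literally.
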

\begin{proof}
For $t\le D$, $\dot Q=\psi\lambda$, so $Q(t)=\psi\lambda t$. For $t>D$, new arrivals are served on arrival and the backlog drains at net rate $c-\lambda$, so $Q(t)=\psi\lambda D-(c-\lambda)(t-D)$ until it reaches zero at $D+T_c$. The area is that of two triangles with common height $\psi\lambda D$ and bases $D$ and $T_c$, which gives \eqref{eq:W}. $W''=\psi\lambda(1+\psi\lambda/(c-\lambda))>0$.
\end{proof}

With a delay cost $w$ per transaction-hour, the backlog contributes $bD^2$ with $b=w\cdot\tfrac12\psi\lambda(1+\psi\lambda/(c-\lambda))$. Writing $c=\kappa\lambda$ with catch-up multiple $\kappa>1$, this is $b=\tfrac12 w\psi\lambda\,(1+\psi/(\kappa-1))$. The \emph{acute loss function} of an outage of length $D$ (hours) is
\begin{equation}\label{eq:acute}
L_{\mathrm{ac}}(D)=K_0\ind\{D>0\}+aD+bD^2+\sum_{j}\kappa_j\ind\{D>\theta_j\}+\sum_{m}h_m\,(D-\eta_m)^+ .
\end{equation}
Here $K_0$ is the fixed response cost and $a$ the linear hourly loss. The $\kappa_j$ are threshold penalties: SLA credits, regulatory reporting or enforcement exposure, and goodwill credits (see \eqref{eq:mtl4}). The hinge terms $h_m$ capture losses that start only after a delay, such as customer redress after $\eta=4$\,h or cascade cancellations after $3$\,h.

\begin{theorem}[Lognormal expectation and Jensen bias of plug-in estimates]\label{thm:jensen}
Let $D\sim\LN(\mu,s^2)$ and $d_x=(\mu-\ln x)/s$. Then
\begin{equation}\label{eq:acuteexp}
\E L_{\mathrm{ac}}(D)=K_0+a\,e^{\mu+s^2/2}+b\,e^{2\mu+2s^2}+\sum_j\kappa_j\,\Phi(d_{\theta_j})+\sum_m h_m\Big[e^{\mu+s^2/2}\Phi(d_{\eta_m}+s)-\eta_m\Phi(d_{\eta_m})\Big].
\end{equation}
For any integrable $D\ge0$ with $\Prob(D>0)=1$,
\begin{equation}\label{eq:jensengap}
\E L_{\mathrm{ac}}(D)-L_{\mathrm{ac}}(\E D)\;\ge\; b\Var(D)+\sum_j\kappa_j\big[\Prob(D>\theta_j)-\ind\{\E D>\theta_j\}\big].
\end{equation}
Moreover, $L_{\mathrm{ac}}(\mathrm{med}\,D)\le L_{\mathrm{ac}}(\E D)$ whenever $\mathrm{med}\,D\le\E D$, which always holds for the lognormal. Hence a plug-in estimate evaluated at the median duration is biased downward by at least the right-hand side of \eqref{eq:jensengap}, whenever that is non-negative.
\end{theorem}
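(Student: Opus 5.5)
The plan is to prove the three claims of \cref{thm:jensen} separately, since each uses only elementary facts about the lognormal and about monotone and convex functions.

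\emph{Closed form \eqref{eq:acuteexp}.} Apply linearity of expectation to \eqref{eq:acute} term by term. Since $\Prob(D>0)=1$, the fixed cost contributes $K_0$. The moment generating function of $\ln D\sim N(\mu,s^2)$ gives $\E D=e^{\mu+s^2/2}$ and $\E D^2=e^{2\mu+2s^2}$. For each threshold, $\Prob(D>\theta)=\Prob(Z>(\ln\theta-\mu)/s)=\Phi(d_\theta)$. For each hinge, write $\E(D-\eta)^+=\E[D\ind\{D>\eta\}]-\eta\Phi(d_\eta)$. Evaluate the partial expectation by a change of measure: tilting $N(\mu,s^2)$ by $e^{x}$ shifts the mean to $\mu+s^2$, so $\E[D\ind\{D>\eta\}]=e^{\mu+s^2/2}\Phi((\mu+s^2-\ln\eta)/s)=e^{\mu+s^2/2}\Phi(d_\eta+s)$. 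This is the Black--Scholes computation and is routine.

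\emph{Gap bound \eqref{eq:jensengap}.} Split $L_{\mathrm{ac}}=g+\sum_j\kappa_j\ind\{D>\theta_j\}$, where $g(D)=K_0+aD+bD^2+\sum_m h_m(D-\eta_m)^+$. On $D>0$, $g$ is convex, assuming $b,h_m\ge0$, which is implicit in the model and should be stated. Then:
\begin{itemize}
\item The constant and linear parts cancel in $\E g(D)-g(\E D)$.
\item The quadratic part contributes exactly $b(\E D^2-(\E D)^2)=b\Var(D)$.
\item Each hinge term contributes $h_m[\E(D-\eta_m)^+-(\E D-\eta_m)^+]\ge0$ by Jensen's inequality for the convex map $x\mapsto(x-\eta_m)^+$.
\end{itemize}
The threshold terms contribute exactly $\kappa_j[\Prob(D>\theta_j)-\ind\{\E D>\theta_j\}]$. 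Summing gives \eqref{eq:jensengap}. The bound holds with equality apart from the dropped hinge gaps, so no sign condition on $\kappa_j$ is needed. Integrability of $D^2$ is needed for $\Var(D)$ to be finite; if $\Var D=\infty$ the inequality holds trivially when $b>0$.

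\emph{Median comparison.} Every term of $L_{\mathrm{ac}}$ on $(0,\infty)$ is non-decreasing in $D$ provided $a,b,\kappa_j,h_m\ge0$. So $\mathrm{med}\,D\le\E D$ implies $L_{\mathrm{ac}}(\mathrm{med}\,D)\le L_{\mathrm{ac}}(\E D)$. For the lognormal, $\mathrm{med}\,D=e^\mu\le e^{\mu+s^2/2}=\E D$. Chaining the two results gives $\E L_{\mathrm{ac}}(D)-L_{\mathrm{ac}}(\mathrm{med}\,D)\ge\E L_{\mathrm{ac}}(D)-L_{\mathrm{ac}}(\E D)\ge$ RHS. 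This yields the downward-bias statement whenever the RHS is non-negative.

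The main obstacle is not technical but definitional. The proof needs non-negativity of $a$, $b$, $h_m$ and $\kappa_j$, which is used for both convexity and monotonicity, together with finiteness of $\E D^2$. I would state these explicitly as standing sign conventions for \eqref{eq:acute} at the start of the proof.
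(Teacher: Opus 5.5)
Your proposal is correct and matches the paper's proof step for step. For \eqref{eq:acuteexp} you use the lognormal moments and a partial-moment computation, where your exponential tilting is the same calculation as the paper's completing of the square. For \eqref{eq:jensengap} you split off a convex part handled by Jensen, an exact $b\Var(D)$ term and exact threshold terms. For the median claim you use monotonicity of $L_{\mathrm{ac}}$ together with $e^{\mu}<e^{\mu+s^2/2}$.

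Your explicit assumptions $a,b,h_m,\kappa_j\ge0$ and $\E D^2<\infty$ are exactly what the paper's argument relies on without stating, so spelling them out tightens the proof without changing its route.
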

\begin{proof}
For the lognormal, $\E D=e^{\mu+s^2/2}$, $\E D^2=e^{2\mu+2s^2}$ and $\Prob(D>x)=\Phi(d_x)$. The partial first moment is $\E[D\ind\{D>\eta\}]=e^{\mu+s^2/2}\Phi(d_\eta+s)$, obtained by completing the square in the Gaussian integral. Then $\E(D-\eta)^+=\E[D\ind\{D>\eta\}]-\eta\Prob(D>\eta)$, which gives \eqref{eq:acuteexp}. For \eqref{eq:jensengap}, split $L_{\mathrm{ac}}=g+bD^2+\sum_j\kappa_j\ind\{D>\theta_j\}$ with $g(D)=K_0+aD+\sum_mh_m(D-\eta_m)^+$, which is convex on $(0,\infty)$. Jensen gives $\E g(D)\ge g(\E D)$. Also $\E[bD^2]-b(\E D)^2=b\Var D$, and the step terms contribute exactly $\kappa_j[\Prob(D>\theta_j)-\ind\{\E D>\theta_j\}]$. The final claim holds because $L_{\mathrm{ac}}$ is non-decreasing, and for the lognormal $\mathrm{med}\,D=e^\mu<e^{\mu+s^2/2}=\E D$.
\end{proof}

\subsection{Excess incidents in degraded mode}
\begin{proposition}[Campbell moments of discounted excess-incident loss]\label{prop:campbell}
Under \cref{ass:poisson}, let $\{U_i\}$ be the arrival times of excess incidents on $[0,\tau)$, with rate $\lambda_{\mathrm{ex}}$, and let $D_i$ be their durations. The discounted excess-incident loss $S=\sum_ie^{-\rho U_i}L_{\mathrm{ac}}(D_i)$ satisfies
\begin{equation}\label{eq:campbell}
\E[S\mid\theta,\tau]=\lambda_{\mathrm{ex}}\,\E L_{\mathrm{ac}}(D)\,\frac{1-e^{-\rho\tau}}{\rho},\qquad
\Var[S\mid\theta,\tau]=\lambda_{\mathrm{ex}}\,\E L_{\mathrm{ac}}(D)^2\,\frac{1-e^{-2\rho\tau}}{2\rho}.
\end{equation}
The expected excess outage hours are $\E[\sum_iD_i\mid\theta,\tau]=\lambda_{\mathrm{ex}}\tau\,e^{\mu+s^2/2}$.
\end{proposition}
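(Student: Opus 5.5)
The plan is to condition throughout on $(\theta,\tau)$ and use Campbell's theorem for marked Poisson processes. Under \cref{ass:poisson}, the excess incidents on $[0,\tau)$ form a Poisson process with constant intensity $\lambda_{\mathrm{ex}}$. The marks $D_i$ are i.i.d.\ and independent of the arrival times. Attaching them gives a Poisson point process on $[0,\tau)\times(0,\infty)$ with intensity measure $\lambda_{\mathrm{ex}}\dd u\otimes F_D(\dd x)$. This uses the marking theorem, and the superposition remark after \cref{prop:unif} justifies treating the excess process as its own independent Poisson process. The loss is $S=\sum_i g(U_i,D_i)$ with $g(u,x)=e^{-\rho u}L_{\mathrm{ac}}(x)\ge0$.

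For the mean, apply Campbell's formula $\E\sum_i g(U_i,D_i)=\int\!\!\int g\,\dd\mu$. This gives $\lambda_{\mathrm{ex}}\int_0^\tau e^{-\rho u}\dd u\cdot\E L_{\mathrm{ac}}(D)$, and the time integral is $(1-e^{-\rho\tau})/\rho$. For the variance, use the second-order Campbell identity for Poisson processes, $\Var\sum g=\int\!\!\int g^2\dd\mu$. It follows from the exponential formula $\E e^{-\sum g}=\exp(-\int(1-e^{-g})\dd\mu)$ by differentiating the Laplace functional twice in a scaling parameter. Alternatively, condition on $N\sim\Pois(\lambda_{\mathrm{ex}}\tau)$, use the fact that the arrival times are i.i.d.\ uniform on $[0,\tau)$ given $N$, and apply the law of total variance; the compound-Poisson structure makes $\Var[S]=\E N\cdot\E[Y^2]$ with $Y=e^{-\rho U}L_{\mathrm{ac}}(D)$. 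Either way, the integral factorises as $\lambda_{\mathrm{ex}}\E[L_{\mathrm{ac}}(D)^2]\int_0^\tau e^{-2\rho u}\dd u$. I read the statement's $\E L_{\mathrm{ac}}(D)^2$ as $\E[L_{\mathrm{ac}}(D)^2]$, and the time integral equals $(1-e^{-2\rho\tau})/(2\rho)$.

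The expected outage hours follow from the same mean formula with $\rho=0$ and $g(u,x)=x$. This gives $\lambda_{\mathrm{ex}}\tau\,\E D$, and $\E D=e^{\mu+s^2/2}$ as in the proof of \cref{thm:jensen}.

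I expect the main obstacle to be integrability and the justification of the variance identity, not the computation itself. I would verify that $\E[L_{\mathrm{ac}}(D)^2]<\infty$ for lognormal $D$. This holds because $L_{\mathrm{ac}}$ is bounded by a quadratic polynomial plus constants, and lognormal moments of every order exist (e.g.\ $\E D^4=e^{4\mu+8s^2}$). This finiteness licenses Campbell's second-moment formula and the interchange of sum and expectation (Tonelli, since $g\ge0$). I would also note that the formulas depend on the marks being independent of arrival times, as assumed; otherwise the product factorisation fails.
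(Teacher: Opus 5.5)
Your proposal is correct and matches the paper's proof. Both treat $\{(U_i,D_i)\}$ as a marked Poisson process with intensity $\lambda_{\mathrm{ex}}\dd u\otimes F_D(\dd x)$ and apply Campbell's first- and second-moment formulas with $g(u,x)=e^{-\rho u}L_{\mathrm{ac}}(x)$; your extra checks (finiteness of $\E[L_{\mathrm{ac}}(D)^2]$, and the case $g(u,x)=x$ for outage hours) fill in details the paper leaves implicit.
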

\begin{proof}
$\{(U_i,D_i)\}$ is a marked Poisson process on $[0,\tau)\times\R_+$ with intensity $\lambda_{\mathrm{ex}}\dd u\otimes F_D(\dd x)$. Campbell's theorem gives $\E\sum_ig(U_i,D_i)=\int\!\!\int g\,\lambda_{\mathrm{ex}}\dd u\,F_D(\dd x)$ and $\Var\sum_ig=\int\!\!\int g^2\,\lambda_{\mathrm{ex}}\dd u\,F_D(\dd x)$. Apply this with $g(u,x)=e^{-\rho u}L_{\mathrm{ac}}(x)$.
\end{proof}

\subsection{Functional, workaround, attrition and funding channels}\label{sec:partition}
For a capability $v\in K$ with exposure $E_v$ (volume per year), unit margin $u_v$ and recapture share $\varrho_v$, the functional loss rate is
\begin{equation}\label{eq:fun}
\pi_v(\theta)=E_v\,u_v\,\Delta\ell_v\,(1-\varrho_v).
\end{equation}
The present value over the degraded period is $\mathrm{PV}_\tau[\pi]=\pi(1-e^{-\rho\tau})/\rho$. Workaround operations add $\mathrm{PV}_\tau[n_{\mathrm{FTE}}\,c_{\mathrm{FTE}}]$ plus any one-off fallback capital expenditure.

\paragraph{Customer value.} A customer of segment $j$ with annual margin $m_j$ and annual retention $\gamma_j$ has lifetime value \citep{gupta2004}
\begin{equation}\label{eq:clv}
\mathrm{CLV}_j=\frac{m_j\,\gamma_j}{1+\rho-\gamma_j}.
\end{equation}
Under \cref{ass:hazard}, the excess churn probability over the degraded period is $\chi_j=1-\exp(-\beta^{\mathrm{out}}_jH-\beta^{\mathrm{deg}}_jx_j\tau)$. The attrition channel is $L_{\mathrm{att}}=\sum_jN_j\chi_j\,\mathrm{CLV}_j\,e^{-\rho\tau/2}$, discounted at mid-period. For banks, deposit outflow is $\nu\sum_j\mathrm{Dep}_j\chi_j$, where $\nu$ is the balance-flight multiple. It removes a deposit franchise spread $\varsigma_d$ over a $T_{\mathrm{dep}}$-year life, so $L_{\mathrm{fund}}=\nu\sum_j\mathrm{Dep}_j\chi_j\,\varsigma_d\,(1-e^{-\rho T_{\mathrm{dep}}})/\rho\cdot e^{-\rho\tau/2}$. Margins $m_j$ exclude the deposit spread by construction.

\begin{theorem}[Exact partition of functional and attrition loss]\label{thm:partition}
Consider customer $i$ with but-for margin flow $\phi_i(t)$. During $[0,\tau)$ a fraction $\varepsilon_i$ of it is impaired, and the customer has an excess-churn time $\zeta_i$. Let $\mathrm{CLV}_i(t)$ be the time-$t$ value of the customer's but-for margins from $t$ onward. The customer's loss is
\begin{equation}\label{eq:partition}
L_i=\underbrace{\int_0^{\zeta_i\wedge\tau}e^{-\rho t}\varepsilon_i\phi_i(t)\dd t}_{\text{functional atom}}+\underbrace{\ind\{\zeta_i<\tau\}\,e^{-\rho\zeta_i}\,\mathrm{CLV}_i(\zeta_i)}_{\text{attrition atom}} .
\end{equation}
The two atoms are supported on the disjoint sets $\{t<\zeta_i\}$ and $\{t\ge\zeta_i\}$ of the customer's timeline, so $L_{\mathrm{fun}}+L_{\mathrm{att}}=\sum_iL_i$ involves no double counting. Integrating functional flows over $[0,\tau)$ for every customer while also charging churners' CLV overstates the loss by $\E\int_{\zeta\wedge\tau}^{\tau}e^{-\rho t}\varepsilon\phi\dd t$. With constant hazard $h$ and $x=h\tau\in(0,1]$, the undiscounted exposure satisfies
\begin{equation}\label{eq:survbound}
\E[\zeta\wedge\tau]=\tau\Big(1-\frac{\chi}{2}\Big)+\tau\,g(x),\qquad -\frac{x^2}{12}\le g(x)\le0,\qquad \chi=1-e^{-x}.
\end{equation}
\end{theorem}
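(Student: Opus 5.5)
The plan is to prove the three claims of \cref{thm:partition} in order: the pathwise identity \eqref{eq:partition}, the disjointness and overstatement statement, and the survival bound \eqref{eq:survbound}. Only the last involves real analysis.

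\emph{Step 1: pathwise decomposition.} I will fix a customer $i$ and write the but-for minus restrained margin flow on the customer's timeline.
\begin{itemize}[leftmargin=1.6em,itemsep=0pt]
\item For $t<\zeta_i\wedge\tau$ the customer is retained but impaired, so the flow difference is $\varepsilon_i\phi_i(t)$.
\item For $t\ge\zeta_i$ with $\zeta_i<\tau$ the customer has left, so the difference is the whole but-for flow $\phi_i(t)$.
\item Otherwise (retained and $t\ge\tau$) the difference is $0$, by \cref{ass:stationary}.
\end{itemize}
Integrating $e^{-\rho t}$ times this difference against \cref{def:loss} gives the first atom directly. For the second atom I use $\int_{\zeta_i}^\infty e^{-\rho t}\phi_i(t)\dd t=e^{-\rho\zeta_i}\mathrm{CLV}_i(\zeta_i)$, which is the definition of $\mathrm{CLV}_i(t)$ as the time-$t$ value of future but-for margins. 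The integrands live on $\{t<\zeta_i\}$ and $\{t\ge\zeta_i\}$ respectively. These sets are disjoint, so the atoms are assigned to distinct channels, which gives the second conjunct of \eqref{eq:adm}.

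\emph{Step 2: overstatement.} The naive scheme integrates $e^{-\rho t}\varepsilon_i\phi_i$ over all of $[0,\tau)$ and also charges the CLV. Subtracting \eqref{eq:partition} from it leaves $\int_{\zeta_i\wedge\tau}^{\tau}e^{-\rho t}\varepsilon_i\phi_i\dd t$. Taking expectations and summing over customers gives the stated excess.

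\emph{Step 3: the bound \eqref{eq:survbound}.} With constant hazard $h$, $\zeta$ is exponential, so
\[\E[\zeta\wedge\tau]=\int_0^\tau e^{-ht}\dd t=\frac{1-e^{-x}}{h}=\tau\,\frac{\chi}{x}.\]
This forces
\[g(x)=\frac{\chi}{x}+\frac{\chi}{2}-1=\frac{(1-e^{-x})(2+x)-2x}{2x}.\]
The Taylor expansion is $g(x)=-x^2/12+x^3/24-\dots$, which shows the constant $1/12$ is sharp.

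\begin{itemize}[leftmargin=1.6em,itemsep=0pt]
\item \emph{Upper bound $g\le0$.} This is equivalent to $(1-e^{-x})(2+x)\le 2x$, i.e.\ $\tanh(x/2)\le x/2$, which holds for all $x\ge0$. Alternatively, set $p(x)=2x-(2+x)(1-e^{-x})$. Then $p(0)=p'(0)=0$ and $p''(x)=xe^{-x}\ge0$.
\item \emph{Lower bound $g\ge -x^2/12$ on $(0,1]$.} After multiplying by $2x$, this is $q(x):=(2+x)(1-e^{-x})-2x+x^3/6\ge0$. I will show $q(0)=q'(0)=q''(0)=0$ and then sign $q'''$. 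Here $q'''(x)=1+(x-1)e^{-x}$, and this is $\ge0$ because $e^{x}\ge 1-x$, which holds for all $x$. Integrating three times then gives $q\ge0$.
\end{itemize}

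\emph{Main obstacle.} This lower-bound step is the only delicate one. If the derivative bookkeeping misbehaves, I will fall back on the alternating series $g(x)=\sum_{n\ge2}c_nx^n$ with decreasing terms for $x\le1$. That route is why the restriction $x\in(0,1]$ appears in the statement.
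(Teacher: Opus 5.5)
Your proof is correct. Steps~1 and~2 are the paper's argument: you write down the restrained-world flow on $[0,\zeta_i\wedge\tau)$, after $\zeta_i$ when $\zeta_i<\tau$, and after $\tau$ otherwise. You subtract it from the but-for flow and identify the tail integral with $e^{-\rho\zeta_i}\mathrm{CLV}_i(\zeta_i)$.

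Step~3 takes a genuinely different route. The paper expands
\[g(x)=\sum_{n\ge2}(-1)^{n+1}\frac{n-1}{2(n+1)!}x^n.\]
The ratio of successive term magnitudes is $xn/((n-1)(n+2))$, which is at most $\tfrac12$ when $x\le1$. The alternating-series estimate then gives $-x^2/12\le g\le0$ in one stroke, together with the next-order refinement $g\le -x^2/12+x^3/24$ if wanted. This need for decreasing terms is why the statement restricts to $x\in(0,1]$, as you guessed.

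Your calculus argument uses $p''(x)=xe^{-x}\ge0$ for the upper bound, and $q(0)=q'(0)=q''(0)=0$ with $q'''\ge0$ for the lower bound. It is valid and yields more than the paper's route: it proves $-x^2/12\le g(x)\le0$ for every $x>0$. So the hypothesis $x\le1$ is superfluous for the inequality, and your Taylor remark shows the constant $1/12$ is sharp.

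Two small corrections.
\begin{itemize}
\item The inequality $e^{x}\ge1-x$ does not hold for all $x$; it fails at $x=-1$. You only need it for $x\ge0$, where it is immediate from $e^x\ge1\ge1-x$.
\item You can stop one derivative earlier. Since $q''(x)=x(1-e^{-x})$, it is already visibly nonnegative on $[0,\infty)$, and integrating twice from $0$ gives $q\ge0$.
\end{itemize}
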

\begin{proof}
In the restrained world, customer $i$ yields $(1-\varepsilon_i)\phi_i$ on $[0,\zeta_i\wedge\tau)$, nothing after $\zeta_i$ if $\zeta_i<\tau$, and $\phi_i$ after $\tau$ otherwise. Subtracting from the but-for flow and discounting gives \eqref{eq:partition}. The overstatement is the functional flow on $[\zeta_i,\tau)$, which is already contained in $\mathrm{CLV}_i(\zeta_i)$. For \eqref{eq:survbound}, $\E[\zeta\wedge\tau]=(1-e^{-x})/h$. Define $g(x)=(1-e^{-x})/x-1+(1-e^{-x})/2$. Expanding in series, $g(x)=\sum_{n\ge2}(-1)^{n+1}\frac{n-1}{2(n+1)\,n!}x^n$. The term magnitudes decrease for $x\le1$, since their ratio is $x\,n/((n-1)(n+2))\le\tfrac12$. The series is therefore alternating with first term $-x^2/12$, which gives $-x^2/12\le g\le0$.
\end{proof}
The implementation uses the survival weight $(1-\chi/2)$ on functional flows. By \eqref{eq:survbound}, it overstates exposure by a relative amount of at most $x^2/12$; for our calibrations $x<0.05$, so the error is below $2\times10^{-4}$.

\subsection{Delayed commercialisation as a forcibly postponed option}\label{sec:delay}
A launch that depends on $r$ and is planned for $t_L$ is blocked until $\tau_s$. The forced delay is $\Delta=(\tau_s-t_L)^+$. Under \cref{ass:gbm}, the perpetual option to launch at cost $I$ has value \citep{mcdonald1986,dixit1994}
\begin{equation}\label{eq:F}
F(V)=\begin{cases}(V^\ast-I)\,(V/V^\ast)^\beta,&V<V^\ast,\\ V-I,&V\ge V^\ast,\end{cases}\qquad V^\ast=\frac{\beta}{\beta-1}I,
\end{equation}
where $\beta>1$ is the positive root of $\tfrac12\sigma^2\beta(\beta-1)+(r-\delta)\beta-r=0$:
\begin{equation}\label{eq:beta}
\beta=\tfrac12-\frac{r-\delta}{\sigma^2}+\sqrt{\Big(\frac{r-\delta}{\sigma^2}-\tfrac12\Big)^2+\frac{2r}{\sigma^2}} .
\end{equation}
A restraint that prevents launch before $\Delta$ leaves the value $G(V_0,\Delta)=\E^{\Q}[e^{-r\Delta}F(V_\Delta)]$. We define the forced-delay loss $\ell(\Delta)=F(V_0)-G(V_0,\Delta)$. The static benchmark treats the launch as committed and merely postponed:
\begin{equation}\label{eq:dcf}
\ell_{\mathrm{DCF}}(\Delta)=(V_0-I)-\E^{\Q}\big[e^{-r\Delta}(V_\Delta-I)\big]=V_0(1-e^{-\delta\Delta})-I(1-e^{-r\Delta}).
\end{equation}

\begin{theorem}[Forced-delay loss]\label{thm:delay}
Under \cref{ass:gbm}:
\begin{enumerate}[label=(\roman*),itemsep=0pt]
\item $\ell(\Delta)\ge0$, and $\ell$ is non-decreasing in $\Delta$.
\item If $V_0\ge V^\ast$ (the launch was optimal immediately), then $\ell(\Delta)\le\ell_{\mathrm{DCF}}(\Delta)$, with strict inequality for $\Delta>0$.
\item With $\tau^\ast=\inf\{t:V_t\ge V^\ast\}$,
\[\ell(\Delta)=\E^{\Q}\Big[\ind\{\tau^\ast<\Delta\}\big(e^{-r\tau^\ast}(V^\ast-I)-e^{-r\Delta}F(V_\Delta)\big)\Big]\ \text{ if }V_0<V^\ast,\]
so the loss arises only on paths on which the unrestrained firm would have launched before $\Delta$.
\end{enumerate}
\end{theorem}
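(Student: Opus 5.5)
The plan is to use two facts about the McDonald--Siegel value $F$ from \eqref{eq:F}. First, $F(V_0)$ is the value of the optimal stopping problem $\sup_{\tau}\E^{\Q}[e^{-r\tau}(V_\tau-I)^+]$ over all stopping times. Second, the discounted process $e^{-rt}F(V_t)$ is a $\Q$-supermartingale: it is a martingale on $\{V<V^\ast\}$, since $F$ solves $\tfrac12\sigma^2V^2F''+(r-\delta)VF'-rF=0$ there, and it has non-positive drift $-(\delta V-rI)\le0$ on $\{V\ge V^\ast\}$, because $\delta V^\ast\ge rI$ follows from $\beta$ solving the quadratic. $F$ is $C^1$ by smooth pasting and convex, so the generalised Itô formula applies.

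For (i), the restrained firm's value $G(V_0,\Delta)$ is the value of the same stopping problem restricted to $\tau\ge\Delta$. It is therefore at most $F(V_0)$, which gives $\ell\ge0$. For monotonicity, take $\Delta_1<\Delta_2$ and write
\[
G(V_0,\Delta_2)=\E^{\Q}\big[e^{-r\Delta_1}\E^{\Q}[e^{-r(\Delta_2-\Delta_1)}F(V_{\Delta_2})\mid\mathcal{F}_{\Delta_1}]\big].
\]
The supermartingale property bounds the inner expectation by $F(V_{\Delta_1})$. Hence $G(\cdot,\Delta)$ is non-increasing in $\Delta$, and $\ell$ is non-decreasing.

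For (ii), when $V_0\ge V^\ast$ we have $F(V_0)=V_0-I$, so
\[
\ell_{\mathrm{DCF}}-\ell=\E^{\Q}\big[e^{-r\Delta}\big(F(V_\Delta)-(V_\Delta-I)\big)\big].
\]
Since $F(V)\ge V-I$, this is non-negative. Strictness for $\Delta>0$ comes from two observations. $F(V)>V-I$ on $(0,V^\ast)$, because $F\ge (V-I)^+$ and the convex power branch is tangent to $V-I$ only at $V^\ast$. Also, $V_\Delta$ is lognormal, so $\Q(V_\Delta<V^\ast)>0$. This strict-inequality argument is the step I expect to need the most care, chiefly checking that $F(V)-(V-I)>0$ strictly below $V^\ast$. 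That is the tangency and strict convexity of $(V^\ast-I)(V/V^\ast)^\beta$ with $\beta>1$.

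For (iii), with $V_0<V^\ast$ the unrestrained optimal rule is $\tau^\ast$, so $F(V_0)=\E^{\Q}[e^{-r\tau^\ast}(V^\ast-I)]$. By continuity of paths, $V_{\tau^\ast}=V^\ast$, and the paper's convention that $F$ is bounded by a linear function handles $\tau^\ast=\infty$. Stopped at $\tau^\ast\wedge\Delta$, the process $e^{-rt}F(V_t)$ is a bounded martingale, because $V\le V^\ast$ up to that time. Optional stopping then gives
\[
F(V_0)=\E^{\Q}\big[\ind\{\tau^\ast<\Delta\}e^{-r\tau^\ast}(V^\ast-I)+\ind\{\tau^\ast\ge\Delta\}e^{-r\Delta}F(V_\Delta)\big].
\]
Subtracting $G=\E^{\Q}[e^{-r\Delta}F(V_\Delta)]$ cancels the $\{\tau^\ast\ge\Delta\}$ term and leaves the stated formula. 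This also yields the interpretation that loss accrues only on paths with $\tau^\ast<\Delta$.
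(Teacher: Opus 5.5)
Your proposal is correct and follows essentially the same route as the paper: the supermartingale property of $e^{-rt}F(V_t)$ with the tower property for (i), $F(V)\ge V-I$ with strictness on $\{V_\Delta<V^\ast\}$ for (ii), and optional stopping of the bounded stopped martingale $M_{t\wedge\tau^\ast}$ at $\tau^\ast\wedge\Delta$ for (iii). The differences are minor. You verify the supermartingale property directly via the generalised It\^o formula, using $\delta V^\ast>rI$, where the paper cites the smallest-excessive-majorant characterisation of $F$. Your aside about $\tau^\ast=\infty$ and the identity $F(V_0)=\E^{\Q}[e^{-r\tau^\ast}(V^\ast-I)]$ are not needed, because optional stopping is applied only at the bounded time $\tau^\ast\wedge\Delta$.
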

\begin{proof}
$F$ is the value function of the optimal stopping problem $\sup_{\varsigma}\E^\Q[e^{-r\varsigma}(V_\varsigma-I)^+]$. By the general theory of optimal stopping for one-dimensional diffusions \citep{peskir2006,dixit1994}, $F$ is the smallest $r$-excessive majorant of $(V-I)^+$. Hence $M_t=e^{-rt}F(V_t)$ is a $\Q$-supermartingale, and $F(V)\ge V-I$ for all $V$.
(i) The supermartingale property gives $G(V_0,\Delta)=\E M_\Delta\le M_0=F(V_0)$. For $\Delta_1<\Delta_2$, $\E M_{\Delta_2}=\E[\E(M_{\Delta_2}\mid\mathcal{F}_{\Delta_1})]\le\E M_{\Delta_1}$.
(ii) Since $F(V)\ge V-I$, $G\ge\E^\Q[e^{-r\Delta}(V_\Delta-I)]=V_0e^{-\delta\Delta}-Ie^{-r\Delta}$. Because $F(V_0)=V_0-I$, $\ell\le\ell_{\mathrm{DCF}}$. The inequality is strict because $F(V)>V-I$ on $\{V<V^\ast\}$, which has positive $\Q$-probability under the lognormal law of $V_\Delta$.
(iii) On the continuation region $\{V<V^\ast\}$, $F$ solves $\tfrac12\sigma^2V^2F''+(r-\delta)VF'-rF=0$ and is bounded by $V^\ast-I$. The stopped process $M_{t\wedge\tau^\ast}$ is therefore a bounded martingale, and optional stopping gives $F(V_0)=\E M_{\tau^\ast\wedge\Delta}$. Subtracting $\E M_\Delta$, the two terms coincide on $\{\tau^\ast\ge\Delta\}$, and $F(V_{\tau^\ast})=V^\ast-I$ by path continuity.
\end{proof}

\begin{proposition}[Closed form of the delayed option value]\label{prop:delayclosed}
Let $d_2=\big[\ln(V_0/V^\ast)+(r-\delta-\tfrac12\sigma^2)\Delta\big]/(\sigma\sqrt\Delta)$ and $d_1=d_2+\sigma\sqrt\Delta$. Then
\begin{equation}\label{eq:Gclosed}
G(V_0,\Delta)=V_0e^{-\delta\Delta}\Phi(d_1)-Ie^{-r\Delta}\Phi(d_2)+(V^\ast-I)\Big(\frac{V_0}{V^\ast}\Big)^{\beta}\Phi\big(-d_2-\beta\sigma\sqrt\Delta\big).
\end{equation}
\end{proposition}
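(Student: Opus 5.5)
The plan is to split $G(V_0,\Delta)=\E^{\Q}[e^{-r\Delta}F(V_\Delta)]$ along the two branches of \eqref{eq:F}, giving
\[
G=e^{-r\Delta}\E^{\Q}\big[(V_\Delta-I)\ind\{V_\Delta\ge V^\ast\}\big]+e^{-r\Delta}(V^\ast-I)(V^\ast)^{-\beta}\,\E^{\Q}\big[V_\Delta^{\beta}\ind\{V_\Delta<V^\ast\}\big].
\]
Under \cref{ass:gbm}, write $V_\Delta=V_0\exp\big((r-\delta-\tfrac12\sigma^2)\Delta+\sigma\sqrt\Delta Z\big)$ with $Z\sim N(0,1)$. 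The event $\{V_\Delta\ge V^\ast\}$ is then exactly $\{Z\ge -d_2\}$ with $d_2$ as in the statement. So $\Q(V_\Delta\ge V^\ast)=\Phi(d_2)$, and $\{V_\Delta<V^\ast\}=\{Z<-d_2\}$.

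The first term is the standard Black--Scholes computation with dividend yield $\delta$ and strike $V^\ast$. The trigger part contributes $-Ie^{-r\Delta}\Phi(d_2)$. For the value part, $\E[V_\Delta\ind\{Z\ge-d_2\}]=V_0e^{(r-\delta)\Delta}\Phi(d_2+\sigma\sqrt\Delta)$, obtained by completing the square, exactly as the partial first moment in the proof of \cref{thm:jensen}. After discounting this gives $V_0e^{-\delta\Delta}\Phi(d_1)$. Note that the indicator threshold is $V^\ast$ but the payoff subtracts $I$, which is why $d_1,d_2$ are defined with $V^\ast$.

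The second term needs a power moment. $V_\Delta^\beta=V_0^\beta\exp\big(\beta(r-\delta-\tfrac12\sigma^2)\Delta+\beta\sigma\sqrt\Delta Z\big)$. Completing the square, i.e.\ an exponential tilt of $Z$ by $\beta\sigma\sqrt\Delta$, gives
\[
\E\big[V_\Delta^\beta\ind\{Z<-d_2\}\big]=V_0^\beta\exp\big(\beta(r-\delta)\Delta+\tfrac12\sigma^2\beta(\beta-1)\Delta\big)\,\Phi\big(-d_2-\beta\sigma\sqrt\Delta\big).
\]
The key step is now to use the characteristic equation \eqref{eq:beta}, $\tfrac12\sigma^2\beta(\beta-1)+(r-\delta)\beta=r$. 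With it the exponent becomes $r\Delta$, which exactly cancels the discount factor $e^{-r\Delta}$ and leaves $(V^\ast-I)(V_0/V^\ast)^\beta\Phi(-d_2-\beta\sigma\sqrt\Delta)$. Summing the pieces yields \eqref{eq:Gclosed}.

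The main obstacle is only bookkeeping. The sign convention of the shifted argument must come out as $-d_2-\beta\sigma\sqrt\Delta$: under the tilted measure $Z$ has mean $\beta\sigma\sqrt\Delta$, so $\Q'(Z<-d_2)=\Phi(-d_2-\beta\sigma\sqrt\Delta)$. The cancellation via the quadratic for $\beta$ is where the closed form becomes clean. As sanity checks, I will verify that the formula reduces to $F(V_0)$ as $\Delta\to0$, in both cases $V_0<V^\ast$ and $V_0\ge V^\ast$. I will also check consistency with \cref{thm:delay}(i), since $G\le F(V_0)$ should hold.
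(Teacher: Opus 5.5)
Your proposal is correct and follows the paper's proof essentially step for step. Both split $F(V_\Delta)$ at $V^\ast$, use the same lognormal partial-moment and tilted power-moment computations, and apply the fundamental quadratic $\tfrac12\sigma^2\beta(\beta-1)+(r-\delta)\beta=r$ to turn the exponent into $r\Delta$ and cancel the discount factor; your $\Delta\to0$ checks are a useful extra the paper omits.
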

\begin{proof}
Split $F(V_\Delta)$ on $\{V_\Delta\ge V^\ast\}$ and its complement. On $\{V_\Delta\ge V^\ast\}$, the lognormal partial moments give $e^{-r\Delta}\E^\Q[(V_\Delta-I)\ind\{V_\Delta\ge V^\ast\}]=V_0e^{-\delta\Delta}\Phi(d_1)-Ie^{-r\Delta}\Phi(d_2)$. On the complement, write $V_\Delta=V_0\exp(\nu\Delta+\sigma\sqrt\Delta Z)$ with $\nu=r-\delta-\sigma^2/2$. Then $\E[V_\Delta^\beta\ind\{Z<-d_2\}]=V_0^\beta e^{\beta\nu\Delta+\beta^2\sigma^2\Delta/2}\Phi(-d_2-\beta\sigma\sqrt\Delta)$. The fundamental quadratic, rewritten as $\tfrac12\sigma^2\beta^2+\nu\beta=r$, makes the exponent equal to $r\Delta$, which cancels the discount factor.
\end{proof}

\subsection{Remediation: COCOMO~II and stochastic activity networks}\label{sec:remediation}
The design-around build is sized by the COCOMO~II.2000 post-architecture model \citep{boehm2000}:
\begin{equation}\label{eq:cocomo}
\begin{aligned}
\mathrm{PM}_{\mathrm{NS}}&=A\cdot\mathrm{Size}^{E}\cdot\prod_{i}\mathrm{EM}_i, & E&=B+0.01\sum_{j=1}^5\mathrm{SF}_j,\\
\mathrm{TDEV}&=C\cdot\mathrm{PM}_{\mathrm{NS}}^{\,F}\cdot\frac{\mathrm{SCED\%}}{100}, & F&=D+0.2(E-B),
\end{aligned}
\end{equation}
with $A=2.94$, $B=0.91$, $C=3.67$ and $D=0.28$. Size is in KSLOC and effort in person-months. Schedule compression to $75\%$ multiplies effort by $\mathrm{EM}_{\mathrm{SCED}}=1.43$. Programme completion is the longest path of the activity network in \cref{fig:san}:
\begin{equation}\label{eq:san}
T_d=d_A+\max\big(d_B+d_C,\;d_D\big)+d_E,
\end{equation}
where A is the non-practising design and clearance, B the build, C integration and regression, D data and parallel-run preparation, and E cut-over and certification. Durations $d_A,d_C,d_D,d_E\sim\PERT(a,m,b)$ are Beta-PERT with shape parameter $4$. The build is $d_B=\mathrm{TDEV}(\mathrm{Size})\cdot\epsilon_s$, with $\ln\mathrm{Size}\sim N(\ln\mathrm{Size}_{\mathrm{med}},0.35^2)$. Estimation error follows a two-regime mixture reflecting the overrun evidence of \citet{flyvbjerg2011}: with probability $5/6$, $\ln\epsilon_s\sim N(0,0.15^2)$; with probability $1/6$ (a ``black swan''), $\epsilon_s\approx1.7$ and the cost error is $\epsilon_c\approx3.0$, each with log-dispersion $0.15$ and $0.20$ respectively. Programme cost is $\mathrm{PM}\cdot\epsilon_c\cdot c_{\mathrm{PM}}(1+o)$ plus time-driven fixed costs, which scale with $T_d$ relative to plan. Cost accrues uniformly over $[0,T_d]$ and stops at $\tau_s$ if the restraint ends first. Substitution ($S_3$) replaces A and B by vendor selection and integration activities. It adds an upfront licence and an annual licence over $[T_d,T_R]$.

\begin{figure}[t]
\centering
\begin{tikzpicture}[font=\footnotesize,>=Latex,act/.style={draw,circle,minimum size=0.85cm,fill=green!8,inner sep=1pt}]
\node[act] (s) at (0,0) {start};
\node[act] (a) at (2,0) {A};
\node[act] (b) at (4,0.9) {B};
\node[act] (c) at (6,0.9) {C};
\node[act] (d) at (5,-0.9) {D};
\node[act] (e) at (8,0) {E};
\node[act] (f) at (10,0) {done};
\draw[->] (s) -- (a); \draw[->] (a) -- (b); \draw[->] (b) -- (c); \draw[->] (c) -- (e); \draw[->] (a) -- (d); \draw[->] (d) -- (e); \draw[->] (e) -- (f);
\node[below=0.15cm of b,align=center] {\scriptsize COCOMO II\\[-2pt]\scriptsize TDEV$\cdot\epsilon_s$};
\node[above=0.05cm of a] {\scriptsize PERT};
\node[below=0.05cm of d] {\scriptsize PERT};
\end{tikzpicture}
\caption{Stochastic activity network for design-around ($S_1$, $S_2$) and substitution ($S_3$; A, B become vendor selection and integration).}
\label{fig:san}
\end{figure}

\begin{proposition}[Fulkerson bias and strategy dominance]\label{prop:fulkerson}
\begin{enumerate}[label=(\roman*),itemsep=0pt]
\item The map $\bm d\mapsto T_d(\bm d)$ in \eqref{eq:san} is convex and non-decreasing. Hence $\E T_d\ge T_d(\E\bm d)$ \citep{fulkerson1962}. If every PERT mode lies below its mean, then also $T_d(\E\bm d)\ge T_d(\mathrm{mode}\,\bm d)$.
\item Suppose all non-remediation channels are non-decreasing in $\tau$, and compare two strategies coupled by common random numbers. If $T_d(s)\le T_d(s')$ almost surely, the loss difference satisfies $L(s)-L(s')\le L_{\mathrm{rem}}(s)-L_{\mathrm{rem}}(s')$ almost surely. Faster remediation is preferred whenever its incremental cost is smaller than the reduction in degraded-period loss.
\end{enumerate}
\end{proposition}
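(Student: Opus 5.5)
For part (i), I will treat the completion time in \eqref{eq:san} as a function of the duration vector $\bm d=(d_A,d_B,d_C,d_D,d_E)$. It is built only from sums and a pointwise maximum of affine (indeed linear, nonnegative-coefficient) functions. Sums of linear forms are linear, and the maximum of two linear forms is convex. Adding the linear terms $d_A+d_E$ preserves convexity, and every coefficient is nonnegative, so $T_d$ is non-decreasing coordinatewise.

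Jensen's inequality applied to the integrable random vector $\bm d$ then gives $\E T_d(\bm d)\ge T_d(\E\bm d)$. Integrability holds because PERT durations are bounded and the lognormal-mixture build time has finite mean. This is the Fulkerson bound, and I would note that no independence between activities is needed. The mode comparison uses only monotonicity: if $\mathrm{mode}\,d_X\le\E d_X$ for every PERT activity (and, for $d_B$, the corresponding central value is below its mean, as it is for the lognormal component), then $\mathrm{mode}\,\bm d\le\E\bm d$ pointwise and $T_d(\mathrm{mode}\,\bm d)\le T_d(\E\bm d)$.

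For part (ii), I will fix an outcome $\omega$ under the common-random-number coupling. All exogenous inputs are then identical for $s$ and $s'$: the restraint horizon $T_R$, the parameters $\theta$, and the incident and customer randomness. The two strategies differ only through $T_d$ and the remediation cost. From $T_d(s)\le T_d(s')$ I get $\tau_s=T_R\wedge T_d(s)\le T_R\wedge T_d(s')=\tau_{s'}$ pathwise.

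Write $L(s)=L_{\mathrm{rem}}(s)+\sum_{k\ne\mathrm{rem}}L_k(\tau_s;\omega)$, and use the hypothesis that each non-remediation channel is non-decreasing in $\tau$ for fixed $\omega$. This gives $\sum_{k\ne\mathrm{rem}}L_k(\tau_s)\le\sum_{k\ne\mathrm{rem}}L_k(\tau_{s'})$. Subtracting the decompositions yields $L(s)-L(s')\le L_{\mathrm{rem}}(s)-L_{\mathrm{rem}}(s')$ almost surely. The preference statement follows by rearranging to $L(s)\le L(s')$ whenever $L_{\mathrm{rem}}(s)-L_{\mathrm{rem}}(s')\le\sum_{k\ne\mathrm{rem}}[L_k(\tau_{s'})-L_k(\tau_s)]$, and then taking expectations.

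The main obstacle is making ``non-decreasing in $\tau$'' meaningful pathwise under common random numbers. For the incident channel, for example, the Poisson excess process on $[0,\tau)$ must be realised as a single process on $[0,\infty)$ that is restricted to $[0,\tau)$. For attrition, the churn times $\zeta_i$ must be generated from common uniforms via the cumulative hazard. I would state this coupling explicitly, so that each channel is an integral or count over $[0,\tau)$ of nonnegative quantities and hence monotone. That is why the proposition takes monotonicity as a hypothesis rather than deriving it.
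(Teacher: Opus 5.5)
Your proposal is correct and follows the paper's own proof. For (i), convexity comes from $T_d$ being a maximum of nonnegative linear forms, Jensen gives the bound, and monotonicity gives the mode comparison. For (ii), the pathwise implication $T_d(s)\le T_d(s')\Rightarrow\tau_s\le\tau_{s'}$ is combined with the assumed monotonicity in $\tau$ of the non-remediation channels.

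Your extra remarks refine points the paper leaves implicit in its hypotheses, and both are legitimate:
\begin{itemize}
\item $d_B$ is not a PERT duration, so the mode comparison also needs a central value for the build time that lies below its mean.
\item Pathwise monotonicity in $\tau$ needs an explicit coupling of the incident and churn randomness. A naive common-seed draw of $N\sim\Pois(\lambda_{\mathrm{ex}}\tau)$ and $U\sim\mathcal{U}(0,\tau)$ does not automatically provide it.
\end{itemize}
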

\begin{proof}
(i) $T_d$ is a maximum of sums of coordinates, so it is convex and non-decreasing, and Jensen's inequality applies. Monotonicity gives the second inequality. (ii) $T_d(s)\le T_d(s')$ implies $\tau_s\le\tau_{s'}$, and each non-remediation channel is non-decreasing in $\tau$ given $(\theta,\omega)$.
\end{proof}

\subsection{Uncertainty model}\label{sec:uq}
We separate \emph{epistemic} parameters $\theta\in\R^d$ (calibration uncertainty) from \emph{aleatory} events $\omega$: switch-over failure, incident arrivals and durations, and IROPS event sizes. Marginals $F_i$ are lognormal, PERT, triangular or uniform, and are anchored to public benchmarks (\cref{sec:cases}). Dependence is induced by a Gaussian copula \citep{sklar1959,nelsen2006}: $\theta_i=F_i^{-1}(\Phi(Z_i))$ with $\bm Z\sim N(0,\mathbf{R})$, where $\mathbf{R}$ is checked to be positive definite. The law of total variance separates the two sources,
\begin{equation}\label{eq:totvar}
\Var L=\underbrace{\Var\big(\E[L\mid\theta]\big)}_{\text{epistemic}}+\underbrace{\E\big[\Var(L\mid\theta)\big]}_{\text{aleatory}} .
\end{equation}
$\E[L\mid\theta]$ is evaluated by the \emph{mean-field} engine, which applies \cref{thm:jensen,prop:campbell} in closed form. All strategies are evaluated on the same draws (common random numbers), so strategy comparisons are paired.

The conventional \emph{plug-in} estimator, against which we benchmark, sets every parameter to its central value (median for lognormals, mode for PERT and triangular distributions). It replaces the switch-over Bernoulli by its probability, values incidents at the median duration, and runs the activity network deterministically at modal durations.

\subsection{Risk measures and allocation}
For a loss $L$ with continuous distribution function $F_L$, we use $\VaR_\alpha(L)=F_L^{-1}(\alpha)$ and \citep{rockafellar2000,rockafellar2002}
\begin{equation}\label{eq:cvar}
\CVaR_\alpha(L)=\min_{\xi\in\R}\Big\{\xi+\frac{1}{1-\alpha}\E(L-\xi)^+\Big\}=\E[L\mid L\ge\VaR_\alpha(L)],
\end{equation}
which is coherent \citep{artzner1999,acerbi2002}.

\begin{proposition}[Euler allocation of tail risk]\label{prop:euler}
If $L=\sum_kL_k$ has a continuous distribution, then $\CVaR_\alpha(L)=\sum_k\E[L_k\mid L\ge\VaR_\alpha(L)]$. The summands are the Euler contributions \citep{mcneil2015}
\[
\E\big[L_k\mid L\ge\VaR_\alpha(L)\big]=\frac{\partial}{\partial h_k}\CVaR_\alpha\Big(\sum_j h_jL_j\Big)\Big|_{\bm h=\mathbf{1}} .
\]
\end{proposition}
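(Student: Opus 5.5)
The plan has two parts: first the additive decomposition, which is a linearity statement about conditional expectations, and then the Euler identity, obtained by differentiating the CVaR of the scaled portfolio $L(\bm h)=\sum_j h_jL_j$ at $\bm h=\mathbf 1$.

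For the decomposition, write $v=\VaR_\alpha(L)$. Because $F_L$ is continuous, $\Prob(L\ge v)=1-\alpha$ exactly, and the second expression in \eqref{eq:cvar} gives $\CVaR_\alpha(L)=\E[L\ind\{L\ge v\}]/(1-\alpha)$. Substituting $L=\sum_kL_k$ and using linearity of expectation immediately gives $\CVaR_\alpha(L)=\sum_k\E[L_k\ind\{L\ge v\}]/(1-\alpha)=\sum_k\E[L_k\mid L\ge v]$. Continuity of the law of $L$ is what lets us avoid the atom correction that would otherwise be needed.

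For the derivative identity I would not differentiate the quantile directly. Instead I would use the Rockafellar--Uryasev representation $\CVaR_\alpha(L(\bm h))=\min_\xi \Psi(\bm h,\xi)$, where $\Psi(\bm h,\xi)=\xi+\E(L(\bm h)-\xi)^+/(1-\alpha)$. The function $\Psi$ is jointly convex, and its minimiser at $\bm h=\mathbf 1$ is $\xi^\ast=v$. Then I would apply an envelope (Danskin) argument: the derivative of the minimum in $h_k$ equals the partial derivative $\partial_{h_k}\Psi$ evaluated at the minimiser. Differentiating under the expectation gives
\[
\partial_{h_k}\E\big(L(\bm h)-\xi\big)^+=\E\big[L_k\ind\{L(\bm h)>\xi\}\big].
\]
This interchange is justified by dominated convergence, provided each $L_k$ is integrable, since $|(x+tL_k-\xi)^+-(x-\xi)^+|\le|t||L_k|$. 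The set $\{L=\xi\}$ is null by continuity, so the one-sided difference quotients agree. Evaluating at $\xi=v$ and dividing by $1-\alpha$ gives $\E[L_k\mid L\ge v]$.

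The main obstacle is making the envelope step rigorous. Danskin's theorem needs the set of minimisers to behave well. Since $\xi\mapsto\Psi(\mathbf 1,\xi)$ has derivative $1-\Prob(L>\xi)/(1-\alpha)$, the minimiser is unique only if $F_L$ is strictly increasing at $v$. Otherwise the argmin is an interval, but $\Prob(L>\xi)=1-\alpha$ is constant on that interval, so $\E[L_k\ind\{L>\xi\}]$ is also constant there. The derivative is therefore the same whichever minimiser is chosen, and the convex-analysis form of Danskin (directional derivative equals the minimum over the argmin) still gives a two-sided derivative.

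There is a remaining issue: continuity of $L$ at $\bm h=\mathbf 1$ does not imply continuity of $L(\bm h)$ nearby. Danskin only uses information at the base point, however, so this is harmless. If a cleaner route is preferred, I would instead invoke positive homogeneity of $\CVaR_\alpha$ together with Euler's theorem, as in \citet{mcneil2015}, and cite their differentiability result under the continuity assumption.
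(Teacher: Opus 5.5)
Your proof is correct. The decomposition half matches the paper: linearity of expectation, with continuity giving $\Prob(L\ge v)=1-\alpha$.

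For the derivative identity you take a different route. The paper's proof only invokes positive homogeneity of $\CVaR_\alpha$, Euler's theorem and ``the standard differentiability result for $\CVaR$ under continuity''. Homogeneity plus Euler by themselves give only the summed identity $\CVaR_\alpha(L)=\sum_k\partial_{h_k}\CVaR_\alpha(\sum_j h_jL_j)|_{\bm h=\mathbf 1}$. The component-wise formula is therefore left entirely to the cited theorem.

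Your Rockafellar--Uryasev envelope argument proves the component-wise formula directly. It needs only integrability of each $L_k$ and continuity of the law of $L$ at $\bm h=\mathbf 1$, with no assumption on $L(\bm h)$ for $\bm h\ne\mathbf 1$. It also makes Euler's theorem a corollary rather than an ingredient.

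You can make the envelope step cleaner than Danskin, whose textbook form wants compactness and continuity of $\partial_{\bm h}\Psi$ near the base point, which is exactly what you were worried about.
\begin{itemize}
\item The map $\phi(\bm h)=\CVaR_\alpha(\sum_j h_jL_j)$ is finite and convex, and $\phi\le\Psi(\cdot,v)$ with equality at $\mathbf 1$.
\item $\Psi(\cdot,v)$ is convex and differentiable at $\mathbf 1$. Its directional derivative in direction $\bm d$ is $\E[(\bm d\cdot\bm L)\ind\{L>v\}]$, which is linear in $\bm d$ because $\Prob(L=v)=0$.
\item Any subgradient of $\phi$ at $\mathbf 1$ is then a subgradient of $\Psi(\cdot,v)$ there, so it equals $\nabla_{\bm h}\Psi(\mathbf 1,v)$.
\item Hence $\phi$ is differentiable at $\mathbf 1$ with partials $\E[L_k\ind\{L>v\}]/(1-\alpha)=\E[L_k\mid L\ge v]$.
\end{itemize}
This uses only the single minimiser $v$, so your discussion of a non-unique argmin is not needed.
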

\begin{proof}
The equality follows from linearity of conditional expectation applied to \eqref{eq:cvar}. The derivative representation follows from positive homogeneity of $\CVaR$, Euler's theorem for homogeneous functions, and the standard differentiability result for $\CVaR$ under continuity.
\end{proof}

\begin{proposition}[Distribution-free quantile confidence interval]\label{prop:qci}
Let $X_{(1)}\le\dots\le X_{(n)}$ be the order statistics of an i.i.d.\ sample from a continuous $F$ with $p$-quantile $\xi_p$, and let $1\le l<u\le n$. Then $\Prob(X_{(l)}\le\xi_p<X_{(u)})=\Prob(l\le\mathrm{Bin}(n,p)<u)$ \citep{david2003}.
\end{proposition}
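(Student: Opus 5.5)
The plan is to reduce the event $\{X_{(l)}\le\xi_p<X_{(u)}\}$ to a statement about the count $N=\#\{i:X_i\le\xi_p\}$. Since the sample is i.i.d.\ and $F$ is continuous, $F(\xi_p)=p$. Hence each indicator $\ind\{X_i\le\xi_p\}$ is Bernoulli$(p)$, and these indicators are independent, so $N\sim\mathrm{Bin}(n,p)$. The whole statement then follows from one deterministic identity relating order statistics to this count, applied separately to the two endpoints.

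For the lower endpoint, $X_{(l)}\le\xi_p$ holds if and only if at least $l$ sample points lie in $(-\infty,\xi_p]$, that is, if and only if $N\ge l$. For the upper endpoint, $\xi_p<X_{(u)}$ holds if and only if fewer than $u$ points are $\le\xi_p$. Indeed, if $N\ge u$ then $X_{(u)}\le\xi_p$, and conversely if $X_{(u)}\le\xi_p$ then at least $u$ points are $\le\xi_p$. So $\xi_p<X_{(u)}$ is equivalent to $N<u$. Intersecting the two gives $\{X_{(l)}\le\xi_p<X_{(u)}\}=\{l\le N<u\}$ as events, and taking probabilities yields $\Prob(l\le\mathrm{Bin}(n,p)<u)$.

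The only delicate point is the choice of which inequalities are weak and which are strict at the endpoints. Ties among the sample points are harmless: the order statistics are defined as the sorted sample, and the counting argument above treats repeated values correctly. Ties with $\xi_p$ itself are also handled, because the count $N$ uses $\le$, which matches the weak inequality $X_{(l)}\le\xi_p$ and the strict inequality $\xi_p<X_{(u)}$.

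Continuity of $F$ is used at exactly one step, to guarantee $F(\xi_p)=p$ so that the Bernoulli parameter equals $p$. Without continuity one would only have $\Prob(X\le\xi_p)\ge p$ and the identity would become an inequality. No earlier result of the paper is needed. I expect no real obstacle, only care with this endpoint bookkeeping.
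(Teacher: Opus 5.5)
Your argument is correct and is the same as the paper's proof: both identify the event with $\{l\le N<u\}$ for the count $N$ of observations $\le\xi_p$, which is $\mathrm{Bin}(n,p)$ because continuity gives $F(\xi_p)=p$. One caveat concerns your closing aside, which is outside the statement: for discontinuous $F$ the half-open coverage equals $\Prob\big(l\le\mathrm{Bin}(n,F(\xi_p))<u\big)$, which can fall on either side of the nominal value, so the one-sided bound $\ge\Prob(l\le\mathrm{Bin}(n,p)<u)$ holds only for the closed interval $X_{(l)}\le\xi_p\le X_{(u)}$.
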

\begin{proof}
$X_{(l)}\le\xi_p<X_{(u)}$ holds iff the number of observations $\le\xi_p$, which is $\mathrm{Bin}(n,p)$, lies in $[l,u)$.
\end{proof}

\subsection{Optimal security under asymmetric error costs}\label{sec:bond}
Suppose a security $B$ is posted to compensate the restrained party if the restraint proves wrongful, which happens with probability $q$. Holding the security costs $c$ per unit, for example the premium rate times the expected horizon. Loss in excess of the security is uncompensated, and a social or decision weight $w$ applies to it. The expected cost of a security level is
\begin{equation}\label{eq:J}
J(B)=c\,B+q\,w\,\E(L-B)^+ .
\end{equation}

\begin{theorem}[Critical-fractile security and the CVaR identity]\label{thm:bond}
Let $L\ge0$ be integrable with continuous $F_L$. If $c<qw$, then $J$ is minimised by
\begin{equation}\label{eq:Bstar}
B^\star=\VaR_{\alpha^\star}(L),\qquad \alpha^\star=1-\frac{c}{q\,w},
\end{equation}
and the minimum is $J(B^\star)=c\cdot\CVaR_{\alpha^\star}(L)$. If $c\ge qw$, then $B^\star=0$. Moreover:
\begin{enumerate}[label=(\alph*),itemsep=0pt]
\item $B^\star$ is non-decreasing in $q$ and $w$, and non-increasing in $c$.
\item The expected uncompensated loss is $\E(L-B^\star)^+=(1-\alpha^\star)\big(\CVaR_{\alpha^\star}(L)-\VaR_{\alpha^\star}(L)\big)$.
\item The rule ``security $=\E L$'' is optimal iff $F_L(\E L)=\alpha^\star$. For right-skewed losses with $F_L(\E L)<\alpha^\star$, it is too low.
\end{enumerate}
\end{theorem}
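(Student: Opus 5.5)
I would treat \eqref{eq:J} as a one-dimensional convex minimisation (a newsvendor problem). Since $B\mapsto(L-B)^+$ is convex and $L$ is integrable, $h(B)=\E(L-B)^+$ is convex and finite on $[0,\infty)$. By dominated convergence it is differentiable with $h'(B)=-\Prob(L>B)=-(1-F_L(B))$, where continuity of $F_L$ removes the atom issue. Hence $J'(B)=c-qw\,(1-F_L(B))$, which is non-decreasing in $B$. This confirms convexity, and the first-order condition is $F_L(B)=1-c/(qw)=\alpha^\star$.

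\textbf{Case $c<qw$.} Here $\alpha^\star\in(0,1)$ and continuity of $F_L$ gives a solution $B^\star=\VaR_{\alpha^\star}(L)=F_L^{-1}(\alpha^\star)$. I take the left quantile to fix a minimiser; any flat piece of $F_L$ gives a set of minimisers with the same value. Because $L\ge0$, $B^\star\ge0$, so the constraint $B\ge0$ is inactive.

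\textbf{Case $c\ge qw$.} Then $J'(B)\ge c-qw\ge0$ for all $B\ge0$, so $B^\star=0$.

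For the value, I would plug in and use the Rockafellar--Uryasev representation \eqref{eq:cvar}:
\[
J(B^\star)=c\Big[B^\star+\tfrac{qw}{c}\E(L-B^\star)^+\Big]=c\Big[\xi+\tfrac{1}{1-\alpha^\star}\E(L-\xi)^+\Big]_{\xi=\VaR_{\alpha^\star}},
\]
since $qw/c=1/(1-\alpha^\star)$. The minimiser in \eqref{eq:cvar} is exactly $\VaR_{\alpha^\star}$, so this equals $c\,\CVaR_{\alpha^\star}(L)$. Part (b) follows by rearranging the same identity: $\CVaR_{\alpha^\star}=\VaR_{\alpha^\star}+\E(L-\VaR_{\alpha^\star})^+/(1-\alpha^\star)$. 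Alternatively, one can write $\E(L-B^\star)^+=\E[(L-B^\star)\ind\{L\ge B^\star\}]$ and use $\Prob(L\ge B^\star)=1-\alpha^\star$, which holds by continuity.

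For (a), $\alpha^\star=1-c/(qw)$ is non-decreasing in $q$ and $w$ and non-increasing in $c$, and the quantile function is non-decreasing in its level. The boundary case $c\ge qw$ gives $B^\star=0$, the smallest value, which is consistent with monotonicity across the switch. For (c), convexity of $J$ means that $B=\E L$ is a minimiser iff $J'(\E L)=0$, i.e.\ $F_L(\E L)=\alpha^\star$. If $F_L(\E L)<\alpha^\star$, then $J'(\E L)<0$, so every minimiser lies strictly above $\E L$, and $\E L$ is too low. Here I would note that for right-skewed losses the mean typically sits below the relevant upper quantile, which is the hypothesis as stated.

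The step needing most care is the differentiability of $h$ and the identification of the minimiser in \eqref{eq:cvar} when $F_L$ is merely continuous, not strictly increasing. I would handle this with one-sided derivatives: $h'_+(B)=-\Prob(L>B)$ and $h'_-(B)=-\Prob(L\ge B)$ coincide under continuity. Any point where $0$ lies between the one-sided derivatives of $J$ is optimal, so the choice of the left quantile is harmless and all claims hold for it.
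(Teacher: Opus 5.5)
Your proposal is correct and follows essentially the same route as the paper. Both arguments use convexity of $J$ with derivative $c-qw(1-F_L(B))$ and the first-order condition $F_L(B)=\alpha^\star$. Both use the Rockafellar--Uryasev representation \eqref{eq:cvar} for the value and for (b); you factor out $c$ where the paper factors out $qw(1-\alpha^\star)$, which is the same quantity. Both use monotonicity of the quantile for (a) and the first-order condition for (c). Your extra remarks tidy up points the paper leaves implicit: the inactive constraint $B\ge0$, non-unique quantiles on flat pieces of $F_L$, and monotonicity of $B^\star$ across the switch at $c=qw$.
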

\begin{proof}
$J$ is convex, with right derivative $J'(B)=c-qw\Prob(L>B)=c-qw(1-F_L(B))$. If $c\ge qw$, then $J'\ge0$ and $B^\star=0$. Otherwise $J'(B)=0$ iff $F_L(B)=\alpha^\star$. Rewrite $J(B)=qw(1-\alpha^\star)\big[B+\E(L-B)^+/(1-\alpha^\star)\big]$. By \eqref{eq:cvar}, the bracket is minimised at $B=\VaR_{\alpha^\star}$ with value $\CVaR_{\alpha^\star}$, so $J(B^\star)=qw(1-\alpha^\star)\CVaR_{\alpha^\star}=c\,\CVaR_{\alpha^\star}$. (a) $\alpha^\star$ is increasing in $q$ and $w$ and decreasing in $c$, and $\VaR$ is monotone in $\alpha$. (b) Substitute $\E(L-\xi)^+=(1-\alpha)(\CVaR_\alpha-\xi)$ at $\xi=\VaR_\alpha$. (c) This follows from the first-order condition.
\end{proof}

\begin{remark}[Estimation-robust security]
Under sampling uncertainty, \cref{prop:qci} gives a distribution-free upper confidence bound $X_{(u)}$ for $\VaR_{\alpha^\star}$. Posting $X_{(u)}$ instead of $\widehat{\VaR}_{\alpha^\star}$ guards against Monte Carlo error. At $n=\rN$ the two differ by less than $1\%$ (\cref{tab:summary}).
\end{remark}

\section{Implementation}\label{sec:impl}
\subsection{Software architecture}
TEIM is implemented as a Python package of about 2{,}500 lines. It runs on NumPy and SciPy, and uses SALib for Sobol sampling and analysis. \Cref{tab:modules} lists the modules. The design follows three rules. (i) \emph{Vectorisation over draws}: every model quantity is a NumPy array indexed by Monte Carlo draw, including the dependency-lattice levels, so one call evaluates $\rN$ scenarios. (ii) \emph{Three evaluation modes, one code path}: \code{mc} samples aleatory events; \code{mean} replaces them by their conditional expectations (\cref{thm:jensen,prop:campbell}); \code{plug} evaluates the deterministic plug-in. All three share the same channel functions, so differences between them are attributable to uncertainty treatment alone. (iii) \emph{Separation of case data from mechanics}: a case is a subclass of \code{Case} that supplies parameters (\code{Param(spec, group, desc, anchor, unit)}), a dependency graph, an acute loss function and channel formulas. Everything else is generic.

\begin{table}[t]
\centering\footnotesize
\caption{Implementation modules and their sizes (physical lines of code).}
\label{tab:modules}
\begin{tabularx}{\textwidth}{@{}l r X@{}}
\toprule
Module & LoC & Responsibility \\
\midrule
\code{graph.py} & 132 & Service grammar (leaf/and/or/kofn/alt), vectorised Kleene iteration for $\lfp$/$\gfp$, chain construction, acyclicity test, brute-force fixed-point enumeration (verification) \\
\code{ctmc.py} & 86 & Steady state, uniformization (transient and cumulative), Markov failure frequency, $k$-of-$n$ birth--death generator \\
\code{architecture.py} & 64 & But-for vs fallback architectures as series systems; CSL interval reliability; implied incident multiple $\hat m$ \\
\code{queueing.py} & 66 & Fluid backlog (\cref{thm:backlog}); stochastic time-stepped verifier with common random numbers \\
\code{acute.py} & 77 & Acute loss function \eqref{eq:acute}; closed-form lognormal expectation \eqref{eq:acuteexp}; Jensen gap \\
\code{options.py} & 85 & $\beta$, $V^\ast$, $F$; closed form \eqref{eq:Gclosed}; Gauss--Hermite and Monte Carlo verifiers; DCF benchmark \\
\code{project.py} & 123 & COCOMO~II.2000; Beta-PERT; vectorised longest path; design-around and substitution samplers \\
\code{stats\_utils.py} & 162 & Gaussian copula; inverse CDFs; CLV; VaR/CVaR; order-statistic CIs; uncompensated-loss curve; critical fractile \\
\code{engine.py} & 206 & Generic engine: timing, switch-over, incidents (three modes), delay channel, evaluation, plug-in \\
\code{econometrics.py} & 167 & Panel simulator; TWFE; Callaway--Sant'Anna ATT$(g,t)$ with cluster bootstrap; synthetic control with placebo inference \\
\code{cases/*.py} & 479 & Bank~A, Operator~T, Carrier~P (parameters, graphs, channels) \\
\code{verify.py}, \code{experiments.py} & 391 & Verification suite (\cref{sec:verif}); experimental pipeline (\cref{sec:experiments}) \\
\bottomrule
\end{tabularx}
\end{table}

\subsection{Algorithms}
\Cref{alg:kleene} computes capability levels for all draws simultaneously. Caps $q$ and bases $b$ may be arrays of shape $(n,)$. Convergence is tested by exact equality, which is valid because all values lie in a finite chain (\cref{thm:lattice}).

\begin{algorithm}[t]
\caption{Vectorised Kleene iteration for $\lfp\Phi$ (resp.\ $\gfp\Phi$)}\label{alg:kleene}
\begin{algorithmic}[1]
\Require services $V$ in declaration order, operators $f_v$, bases $b$ (arrays of shape $(n,)$), caps $q$, start $\in\{\bot,\top\}$
\State $s_v\gets 0$ (or $1$) for all $v\in V$
\Repeat
  \State $s'_v\gets \min\big(b_v,\,f_v(s)\big)$ for all $v$ \Comment{elementwise over $n$ draws}
  \State $\mathit{changed}\gets\exists v:\ s'_v\ne s_v$
  \State $s\gets s'$
\Until{not $\mathit{changed}$} \Comment{at most $|V|(|\mathcal{C}|-1)$ strict changes per draw}
\State \Return $s$
\end{algorithmic}
\end{algorithm}

\begin{algorithm}[t]
\caption{Uniformization with cumulative occupation (\cref{prop:unif})}\label{alg:unif}
\begin{algorithmic}[1]
\Require generator $\mathbf{Q}$, initial law $\bm\pi_0$, horizon $t$, tolerance $\varepsilon=10^{-12}$
\State $\Lambda\gets1.02\max_i|q_{ii}|$; $\mathbf{P}\gets\mathbf{I}+\mathbf{Q}/\Lambda$; $N\gets F^{-1}_{\Pois(\Lambda t)}(1-\varepsilon)+10$
\State $\bm v\gets\bm\pi_0$; $\bm\pi\gets\mathbf{0}$; $\bm c\gets\mathbf{0}$
\For{$k=0,\dots,N$}
  \State $\bm\pi\gets\bm\pi+\Pois(k;\Lambda t)\,\bm v$; \quad $\bm c\gets\bm c+\Prob(N_{\Lambda t}>k)\,\bm v$; \quad $\bm v\gets\bm v\mathbf{P}$
\EndFor
\State \Return $\bm\pi$, $\bm c/\Lambda$
\end{algorithmic}
\end{algorithm}

\begin{algorithm}[t]
\caption{TEIM Monte Carlo engine with common random numbers}\label{alg:mc}
\begin{algorithmic}[1]
\Require case $\mathcal{P}$, strategies $\mathcal{S}$, sample size $n$, seeds $(\varsigma_1,\varsigma_2)$
\State $\bm Z\sim N(0,\mathbf{R})^{n}$ with seed $\varsigma_1$; $U\gets\Phi(\bm Z)$ clipped to $[10^{-12},1-10^{-12}]$; $\theta_i\gets F_i^{-1}(U_i)$
\For{$s\in\mathcal{S}$} \Comment{same $\theta$ and same aleatory seed $\varsigma_2$ for every $s$}
  \State $(T_d,\text{cost})\gets$ SAN sampler (\eqref{eq:cocomo}--\eqref{eq:san}) driven by $U_{\text{size}},U_{\text{err}},U_{\text{swan}},U_A,\dots,U_E$
  \State $\tau\gets\min(T_R,T_d)$; $\ell\gets$ \Call{Kleene}{$b_r=0$, caps$(\theta)$}
  \State switch-over: $\ind\{U_{\text{sw}}<p_0\}\,L_{\mathrm{ac}}(D_0)$
  \State incidents: $N\sim\Pois(\lambda_{\mathrm{ex}}\tau)$; flatten all events with \code{np.repeat}; $D\sim\LN$; $U\sim\mathcal{U}(0,\tau)$; aggregate $e^{-\rho U}L_{\mathrm{ac}}(D)$ by \code{np.bincount}
  \State $H\gets$ outage hours; $\chi_j\gets1-\exp(-\beta^{\mathrm{out}}_jH-\beta^{\mathrm{deg}}_jx_j\tau)$
  \State functional, ops, attrition, funding via \eqref{eq:fun}--\eqref{eq:clv} with survival weights $(1-\chi/2)$
  \State delay: $\Delta\gets(\tau-t_L)^+$; $\ell(\Delta)\gets F(V_0)-G(V_0,\Delta)$ by \eqref{eq:Gclosed}; discount by $e^{-\rho t_L}$
  \State $L(s)\gets\sum_kL_k$
\EndFor
\State $s^\star\gets\arg\min_s\overline{L(s)}$; compute moments, quantiles with CIs (\cref{prop:qci}), $\CVaR$, Euler contributions, $U(B)$
\end{algorithmic}
\end{algorithm}

\paragraph{Uncompensated-loss curve.} $U(B)=\E(L-B)^+$ is computed for a whole grid of $B$ in $O(n\log n)$. Sort the sample once, precompute suffix sums $S_i=\sum_{j\ge i}X_{(j)}$, and evaluate $U(B)=n^{-1}\big(S_{i(B)}-B\,(n-i(B))\big)$, where $i(B)$ is found by binary search.

\paragraph{Sensitivity analysis.} Grouped first-order and total Sobol indices are computed with Saltelli's scheme \citep{saltelli2010} in SALib. We use $N=4096$ base samples and parameters grouped by mechanism, which gives $N(g+2)$ model evaluations. Sobol indices are computed on the mean-field response $\E[L\mid\theta]$, i.e.\ the epistemic term in \eqref{eq:totvar}. A one-at-a-time tornado varies each parameter between its 10th and 90th percentiles, holding the others at their medians.

\subsection{Numerical safeguards}
All normal CDF evaluations use \code{scipy.special.ndtr}. Power terms in $F$ are guarded with $\max(V,10^{-300})$, and $\Delta\le0$ is handled by an explicit branch to avoid $0/0$ in $d_{1,2}$. Uniforms are clipped before inversion. The copula correlation matrix is checked for positive definiteness through its eigenvalues before the Cholesky factorisation. CTMC steady states are solved as the augmented least-squares system $[\mathbf{Q}^\top;\mathbf{1}^\top]\bm\pi=[\mathbf{0};1]$. The PERT inverse uses the Beta quantile with $\alpha=1+4(m-a)/(b-a)$ and $\beta=1+4(b-m)/(b-a)$. In the event-driven incident simulation, total memory is proportional to $\sum_i N_i$, about $n\,\E[\lambda_{\mathrm{ex}}\tau]$.

\subsection{Complexity}
Per strategy and draw, the cost is $O(h\cdot|V|)$ for the lattice, where $h$ is the depth for acyclic graphs. The remaining costs are $O(d)$ for parameter sampling, $O(1)$ for the SAN, $O(\E N)$ for incidents and $O(1)$ for the closed-form option value. With $n$ draws, $|\mathcal{S}|$ strategies and $n_p$ policies, the experiment costs $O\big(n\,|\mathcal{S}|\,n_p\,(h|V|+d+\E N)\big)$, plus $O(n\log n)$ for the risk statistics. CTMC solution is $O(Nk^2)$ for $N$ uniformization steps on $k$ states. It is performed once per architecture, not per draw.

\subsection{Reproducibility}
All experiments were run with Python~3.12.3, NumPy~2.5.3, SciPy~1.18.1 and SALib~1.6.0 on a 6-vCPU Intel Xeon (2.5\,GHz) virtual machine. Seeds are fixed: $\varsigma_1=20261019$ for parameter sampling and $\varsigma_1+1$ for aleatory events; seed $7$ for Sobol sampling and analysis; seed $99$ for econometric replications and $2026$ for the illustrative panel. The complete pipeline (three cases, all strategies, Sobol, tornado, convergence and econometrics) runs in $\rTotalRuntime$\,s. All tables and figures in \cref{sec:experiments}, and the numerical results quoted in the text, are generated by the pipeline and inserted into the manuscript as macros and data files. \Cref{lst:acute,lst:option} show the two closed forms at the core of the mean-field engine.

\begin{lstlisting}[style=py,caption={Closed-form lognormal expectation of the acute loss (\cref{thm:jensen}).},label={lst:acute}]
def expect_lognormal(self, mu, s):
    m1 = np.exp(mu + 0.5 * s**2)                  # E[D]
    m2 = np.exp(2 * mu + 2 * s**2)                # E[D^2]
    E = self.K0 + self.a * m1 + self.b * m2
    for th, K in self.steps:                      # K * P(D > th)
        E = E + K * ndtr((mu - np.log(th)) / s)
    for eta, h in self.hinges:                    # h * E[(D - eta)^+]
        d2 = (mu - np.log(eta)) / s
        E = E + h * (m1 * ndtr(d2 + s) - eta * ndtr(d2))
    return E
\end{lstlisting}

\begin{lstlisting}[style=py,caption={Closed-form value of a forcibly delayed perpetual option (\cref{prop:delayclosed}).},label={lst:option}]
def delayed_value(V0, Delta, I, r, delta, sigma):
    b  = beta_root(r, delta, sigma); Vs = b / (b - 1) * I
    D  = np.maximum(Delta, 1e-12); sq = sigma * np.sqrt(D)
    d2 = (np.log(V0 / Vs) + (r - delta - 0.5 * sigma**2) * D) / sq
    d1 = d2 + sq
    G  = (V0 * np.exp(-delta * D) * ndtr(d1) - I * np.exp(-r * D) * ndtr(d2)
          + (Vs - I) * (V0 / Vs)**b * ndtr(-d2 - b * sq))
    return np.where(Delta <= 0, F(V0, I, r, delta, sigma), G)
\end{lstlisting}

\section{Case Studies on Three Enterprise Platforms}\label{sec:cases}
\subsection{Confidentiality and data protocol}\label{sec:protocol}
The three case studies are kept confidential. The organisations are referred to only as \emph{Bank~A}, \emph{Operator~T} and \emph{Carrier~P}. Their names, jurisdictions, vendors, product names and the identity of any patent or counterparty are withheld. No confidential data are used. Each case is a \emph{synthetic, calibrated} replica of a platform of the stated scale. The architecture is a stylised composite of publicly documented designs. Every parameter is either anchored to a public benchmark (cited in the calibration tables) or explicitly labelled as an assumption and given a wide prior. The results therefore show what the model does on realistic inputs; they are not estimates for any real firm. Common settings: the restraint horizon is $T_R\sim\LN(2\text{ yr},0.40)$ (median two years, $\E T_R=\rBankETR$\,yr); the risk-free rate is $r=4\%$; and the discount rates are industry costs of capital (January~2026: regional banks $4.98\%$, wireless telecom $5.48\%$, air transport $6.72\%$) \citep{damodaran2026}. In the tables, $\LN(m,s)$ denotes a lognormal with median $m$ and log-standard deviation $s$; $\PERT(a,m,b)$ is Beta-PERT with minimum $a$, mode $m$ and maximum $b$; and $\mathrm{Tri}$ and $\mathcal{U}$ are triangular and uniform.

\subsection{Case A: core-banking platform (Bank~A)}\label{sec:bank}
\paragraph{Platform.} Bank~A is a regional bank with $6.0$\,m retail and $0.30$\,m SME customers, \$90\,bn of retail and \$45\,bn of SME deposits. It posts about $450{,}000$ transactions per hour at peak. Its core ledger includes a \emph{real-time posting and available-balance engine} (RTPE). The RTPE supplies authoritative balances to card authorisation, mobile and P2P payments, instant payments (RTP/FedNow) and liquidity reservation, and it is the foundation of a planned real-time treasury product.

\paragraph{Restraint and fallback.} The RTPE is withdrawn. The fallback is a memo-post \emph{stand-in} (shadow balances) fed by end-of-day batch posting. Available-balance quality is capped at $q_{\mathrm{AVB}}\sim\PERT(0.75,0.85,0.95)$ and liquidity reservation at $q_{\mathrm{LIQ}}\sim\PERT(0.25,0.40,0.60)$. \Cref{fig:graph-bank} shows the dependency graph with lfp levels at central parameters. Instant payments fall to $\ell_{\mathrm{IP}}=0.40$, limited by liquidity reservation. Card authorisation and mobile fall to $0.85$. ACH is unaffected ($\ell=1$, by \cref{thm:locality}: ACH depends only on the batch path and the gateway). The planned treasury product is blocked ($\ell=0$).

\begin{figure}[t]
\centering
\resizebox{\textwidth}{!}{%
\begin{tikzpicture}[>=Latex,font=\scriptsize,
  svc/.style={draw,rounded corners=2pt,text width=2.45cm,minimum height=0.66cm,align=center,inner sep=1.5pt},
  infra/.style={svc,fill=gray!12}, core/.style={svc,fill=blue!6}, chan/.style={svc,fill=blue!3},
  restr/.style={svc,fill=red!25,draw=red!70!black,very thick}, fb/.style={svc,fill=orange!18},
  capnode/.style={svc,fill=green!12}]
\node[infra] (DC) at (0.0,6.2) {Data centre\\[-1pt]{\color{black!60}$\ell=1$}};
\node[infra] (NET) at (0.0,1.0) {Network\\[-1pt]{\color{black!60}$\ell=1$}};
\node[infra] (IAM) at (0.0,3.6) {Identity\\[-1pt]{\color{black!60}$\ell=1$}};
\node[core] (CIF) at (3.375,4.6) {Customer master\\[-1pt]$\wedge$, {\color{black!60}$\ell=1$}};
\node[core] (LDB) at (3.375,6.6) {Ledger DB\\[-1pt]$\wedge$, {\color{black!60}$\ell=1$}};
\node[restr] (RTPE) at (6.75,4.8) {RT posting engine\\[-1pt]$\wedge$, {\color{red!70!black}$\ell=0$}};
\node[fb] (MEMO) at (6.75,7.8) {Memo-post\\[-1pt]$\wedge$, {\color{black!60}$\ell=1$}};
\node[fb] (BATCH) at (6.75,6.6) {EOD batch\\[-1pt]$\wedge$, {\color{black!60}$\ell=1$}};
\node[fb] (STANDIN) at (10.125,7.2) {Stand-in\\[-1pt]$\wedge$, {\color{black!60}$\ell=1$}};
\node[core] (AVB) at (13.5,5.4) {Avail. balance\\[-1pt]alt$_{(1,0.85)}$, {\color{red!70!black}$\ell=0.85$}};
\node[core] (LIQ) at (13.5,7.0) {Liquidity resv.\\[-1pt]alt$_{(1,0.40)}$, {\color{red!70!black}$\ell=0.4$}};
\node[core] (FRAUD) at (16.875,4.4) {Fraud engine\\[-1pt]$\wedge$, {\color{red!70!black}$\ell=0.85$}};
\node[chan] (IPGW) at (6.75,2.2) {RTP/FedNow GW\\[-1pt]$\wedge$, {\color{black!60}$\ell=1$}};
\node[chan] (CARDGW) at (6.75,1.1) {Card GW\\[-1pt]$\wedge$, {\color{black!60}$\ell=1$}};
\node[chan] (ACHGW) at (6.75,0.0) {ACH GW\\[-1pt]$\wedge$, {\color{black!60}$\ell=1$}};
\node[capnode] (CAPIP) at (20.25,5.9) {Instant payments\\[-1pt]$\wedge$, {\color{red!70!black}$\ell=0.4$}};
\node[capnode] (CAPCARD) at (20.25,4.2) {Card authorisation\\[-1pt]$\wedge$, {\color{red!70!black}$\ell=0.85$}};
\node[capnode] (CAPACH) at (20.25,0.6) {ACH\\[-1pt]$\wedge$, {\color{black!60}$\ell=1$}};
\node[capnode] (CAPMOB) at (20.25,2.5) {Mobile/P2P\\[-1pt]$\wedge$, {\color{red!70!black}$\ell=0.85$}};
\node[capnode] (CAPTREAS) at (20.25,7.6) {RT treasury (planned)\\[-1pt]$\wedge$, {\color{red!70!black}$\ell=0$}};
\draw[->,black!55] (DC.east) -- (CIF.west);
\draw[->,black!55] (IAM.east) -- (CIF.west);
\draw[->,black!55] (DC.east) -- (LDB.west);
\draw[->,black!55] (LDB.east) -- (RTPE.west);
\draw[->,black!55] (CIF.east) -- (RTPE.west);
\draw[->,black!55] (LDB.east) -- (MEMO.west);
\draw[->,black!55] (LDB.east) -- (BATCH.west);
\draw[->,black!55] (MEMO.east) -- (STANDIN.west);
\draw[->,black!55] (BATCH.east) -- (STANDIN.west);
\draw[->,red!70!black,thick] (RTPE.east) -- (AVB.west);
\draw[->,densely dashed,orange!70!black,thick] (STANDIN.east) -- (AVB.west);
\draw[->,red!70!black,thick] (RTPE.east) -- (LIQ.west);
\draw[->,densely dashed,orange!70!black,thick] (STANDIN.east) -- (LIQ.west);
\draw[->,black!55] (AVB.east) -- (FRAUD.west);
\draw[->,black!55] (IAM.east) -- (FRAUD.west);
\draw[->,black!55] (NET.east) -- (IPGW.west);
\draw[->,black!55] (NET.east) -- (CARDGW.west);
\draw[->,black!55] (NET.east) -- (ACHGW.west);
\draw[->,black!55] (IPGW.east) -- (CAPIP.west);
\draw[->,black!55] (LIQ.east) -- (CAPIP.west);
\draw[->,black!55] (AVB.east) -- (CAPIP.west);
\draw[->,black!55] (FRAUD.east) -- (CAPIP.west);
\draw[->,black!55] (CARDGW.east) -- (CAPCARD.west);
\draw[->,black!55] (AVB.east) -- (CAPCARD.west);
\draw[->,black!55] (FRAUD.east) -- (CAPCARD.west);
\draw[->,black!55] (ACHGW.east) -- (CAPACH.west);
\draw[->,black!55] (BATCH.east) -- (CAPACH.west);
\draw[->,black!55] (AVB.east) -- (CAPMOB.west);
\draw[->,black!55] (IAM.east) -- (CAPMOB.west);
\draw[->,red!70!black,thick] (RTPE.east) -- (CAPTREAS.west);
\draw[->,black!55] (LIQ.east) -- (CAPTREAS.west);
\draw[->,black!55] (IPGW.east) -- (CAPTREAS.west);
\end{tikzpicture}}
\caption{Bank~A dependency graph. Red: restrained RTPE. Orange: fallback services. Dashed arrows: capped alternatives. Levels $\ell$ are the least fixed point at central parameters.}
\label{fig:graph-bank}
\end{figure}

\paragraph{Channels.} The acute loss \eqref{eq:acute} has $a\sim\LN(\$2\,\mathrm{m/h},0.35)$, anchored to ITIC's finding that banking is among the verticals with large-enterprise hourly costs above \$5\,m and that 41\% of large firms exceed \$1\,m per hour \citep{itic2024}. The backlog term is $b=w\cdot\tfrac12\psi\lambda(1+\psi/(\kappa-1))$ with $\lambda=450{,}000$/h, $\psi=0.8$ and catch-up multiple $\kappa\sim\PERT(1.3,1.6,2.2)$. Customer redress accrues as a hinge after $4$\,h, scaled to the TSB redress programme \citep{fcatsb2022}. A regulatory-enforcement step at $12$\,h has expected size $p_{\mathrm{enf}}\cdot\mathrm{fine}$, with the fine anchored to the TSB (\pounds48.65\,m) and RBS (\pounds56\,m) penalties \citep{fcatsb2022,fcarbs2014}. Functional loss combines lost instant-payment margin and stale-balance losses:
\[ \pi_{\mathrm{fun}}=\underbrace{\mathrm{Vol}_{\mathrm{IP}}(1-\ell_{\mathrm{IP}})\,u_{\mathrm{IP}}(1-\varrho_{\mathrm{IP}})}_{\text{instant payments}}\Big(1-\frac{\chi_s}{2}\Big)+\underbrace{\mathrm{Vol}_{\mathrm{debit}}(1-\ell_{\mathrm{CARD}})\,\mathrm{bp}_{\mathrm{stale}}}_{\text{stale-balance risk}}\Big(1-\frac{\chi_r}{2}\Big), \]
with survival weights from \cref{thm:partition}. Retail and SME churn probabilities are
\[\chi_r=1-e^{-\beta^{\mathrm{out}}H-\beta^{\mathrm{deg}}_r(1-\ell_{\mathrm{MOB}})\tau},\qquad \chi_s=1-e^{-1.5\beta^{\mathrm{out}}H-\beta^{\mathrm{deg}}_s(1-\ell_{\mathrm{IP}})\tau}.\]
Here $H$ is the switch-over plus excess-incident outage hours. The per-hour hazard $\beta^{\mathrm{out}}\sim\LN(10^{-4},0.6)$ is scaled to TSB's reported loss of about 80{,}000 customers out of 5.2\,m after a multi-week disruption \citep{fcatsb2022}. Funding loss applies a deposit-franchise spread $\varsigma_d\sim\PERT(1.0\%,1.7\%,2.5\%)$, consistent with an industry net interest margin of $3.22\%$ \citep{fdic2024}, to lost deposits over a five-year life. The delayed real-time treasury product has $V_0\sim\LN(\$400\,\mathrm{m},0.30)$, $I=\$100$\,m, $\sigma=0.30$ and $t_L=0.5$\,yr.

\paragraph{Remediation.} The design-around re-implements real-time posting outside the claimed method: about $\rBankKSLOC$ KSLOC (log-sd $0.35$) with $\sum\mathrm{SF}=\rBankSF$ ($E=\rBankEexp$, $F=\rBankFexp$). Effort multipliers are \rBankEMlist{} (EAF $=\rBankEAF$). The nominal plan is $\mathrm{PM}=\rBankPMnom$ person-months and $\mathrm{TDEV}=\rBankTDEVnom$ months; at 75\% compression it is $\rBankPMcr$ PM over $\rBankTDEVcr$ months. Network durations (months) are $A\sim\PERT(2,3,5)$, $C\sim\PERT(3,4,7)$, $D\sim\PERT(3,5,9)$ and $E\sim\PERT(1,2,4)$. Substitution licenses a vendor real-time ledger: selection $\PERT(2,3,5)$, integration $\PERT(6,9,15)$, \$12\,m upfront, \$18\,m integration and \$4\,m per year. Industry experience with core replacement (multi-year programmes) supports these as fast-track scenarios.

\begin{small}
\setlength{\tabcolsep}{3pt}
\begin{longtable}{@{}p{1.9cm}p{4.5cm}p{3.3cm}p{1.0cm}p{5.0cm}@{}}
\caption{Bank~A calibration. Anchors cite public sources; ``assumption'' marks expert priors.}\label{tab:calib-bank}\\
\toprule
Symbol & Description & Distribution & Unit & Anchor\\ \midrule\endfirsthead
\toprule Symbol & Description & Distribution & Unit & Anchor\\ \midrule\endhead
\bottomrule\endfoot
\multicolumn{5}{l}{\textit{Restraint horizon}}\\
\texttt{T\_R} & restraint horizon (years) & $\mathrm{LN}(2,\,0.4)$ & yr & scenario assumption\\
\multicolumn{5}{l}{\textit{Switch-over outage}}\\
\texttt{p0} & P(severe outage at withdrawal of RTPE) & $\mathrm{PERT}(0.15,0.3,0.55)$ &  & change-related outage share (Uptime 2024/25); TSB 2018\\
\texttt{D0} & switch-over outage duration & $\mathrm{LN}(6,\,1)$ & h & assumption; heavy tail per RBS 2012/TSB 2018\\
\multicolumn{5}{l}{\textit{Outage economics}}\\
\texttt{a\_hr} & linear hourly loss of full core outage & $\mathrm{LN}(2\,m,\,0.35)$ & USD/h & ITIC 2024: $>$\$1M/h for 41\% of large firms; banking $>$\$5M/h\\
\texttt{K0} & fixed incident-response cost & $\mathrm{LN}(500\,k,\,0.5)$ & USD & assumption\\
\texttt{kappa} & catch-up capacity multiple c/lambda & $\mathrm{PERT}(1.3,1.6,2.2)$ &  & assumption\\
\texttt{w\_tx} & delay cost per posting-hour & $\mathrm{LN}(0.08,\,0.5)$ & USD & assumption (interest claims, fee waivers, complaints)\\
\texttt{h\_red} & customer redress per hour beyond 4 h & $\mathrm{LN}(800\,k,\,0.5)$ & USD/h & TSB redress GBP 32.7m (FCA 2022)\\
\texttt{p\_enf} & P(enforcement | outage $>$ 12 h) & $\mathrm{PERT}(0.1,0.25,0.5)$ &  & assumption\\
\texttt{fine} & enforcement penalty if imposed & $\mathrm{LN}(40\,m,\,0.6)$ & USD & TSB GBP 48.65m (2022); RBS GBP 56m (2014)\\
\multicolumn{5}{l}{\textit{Incident process}}\\
\texttt{lam\_bf} & but-for severe-incident rate & $1$ & 1/yr & assumption\\
\texttt{lam\_mult} & fallback incident-rate multiple & $\mathrm{LN}(2.5,\,0.4)$ &  & architecture CTMC (\S\ref{sec:arch}): 2.53\\
\texttt{D\_inc\_med} & median incident duration & $3$ & h & assumption\\
\texttt{D\_inc\_sig} & log-sd incident duration & $0.9$ &  & assumption\\
\multicolumn{5}{l}{\textit{Fallback capability}}\\
\texttt{cap\_avb} & stand-in available-balance quality cap & $\mathrm{PERT}(0.75,0.85,0.95)$ &  & architecture assessment\\
\texttt{cap\_liq} & stand-in liquidity-reservation cap & $\mathrm{PERT}(0.25,0.4,0.6)$ &  & architecture assessment\\
\multicolumn{5}{l}{\textit{Economic scale}}\\
\texttt{ip\_vol} & instant payments per year & $\mathrm{LN}(20\,m,\,0.2)$ & 1/yr & TCH RTP 2024: 343m tx; $\sim$6\% share\\
\texttt{ip\_margin} & margin per instant payment & $\mathrm{LN}(0.6,\,0.3)$ & USD & assumption\\
\texttt{ip\_recap} & margin recaptured via same-day ACH & $\mathrm{PERT}(0.2,0.35,0.5)$ &  & assumption\\
\texttt{debit\_vol} & card/P2P debit volume & $\mathrm{LN}(150\,bn,\,0.15)$ & USD/yr & assumption (6m customers)\\
\texttt{stale\_bps} & incremental loss on stale-balance volume & $\mathrm{LN}(6,\,0.6)$ & bp & assumption\\
\multicolumn{5}{l}{\textit{Workaround operations}}\\
\texttt{fte} & manual exception FTE & $\mathrm{PERT}(120,220,400)$ & FTE & assumption\\
\texttt{fte\_cost} & loaded cost per FTE-year & $110\,k$ & USD & assumption\\
\multicolumn{5}{l}{\textit{Customer response}}\\
\texttt{m\_r} & retail annual margin ex deposit spread & $\mathrm{LN}(160,\,0.25)$ & USD & assumption\\
\texttt{m\_s} & SME annual margin ex deposit spread & $\mathrm{LN}(2400,\,0.3)$ & USD & assumption\\
\texttt{ret\_r} & retail annual retention & $0.9$ &  & assumption\\
\texttt{ret\_s} & SME annual retention & $0.92$ &  & assumption\\
\texttt{beta\_out} & churn hazard per outage hour & $\mathrm{LN}(1.0\text{\text{e-}}4,\,0.6)$ & 1/h & TSB: $\sim$80k of 5.2m left in 2018\\
\texttt{beta\_deg\_r} & retail churn hazard per unit capability loss & $\mathrm{LN}(0.015,\,0.6)$ & 1/yr & assumption\\
\texttt{beta\_deg\_s} & SME churn hazard per unit capability loss & $\mathrm{LN}(0.02,\,0.6)$ & 1/yr & assumption\\
\texttt{dep\_mult} & balance-flight multiple of churn & $\mathrm{LN}(1,\,0.3)$ &  & assumption\\
\texttt{spread} & deposit franchise spread & $\mathrm{PERT}(0.01,0.017,0.025)$ & 1/yr & FDIC QBP 2024: NIM 3.22\%\\
\multicolumn{5}{l}{\textit{Delayed launch}}\\
\texttt{V0} & PV at launch of real-time treasury product & $\mathrm{LN}(400\,m,\,0.3)$ & USD & assumption\\
\texttt{I\_launch} & launch investment & $100\,m$ & USD & assumption\\
\texttt{sig\_V} & project-value volatility & $0.3$ &  & assumption\\
\texttt{delta\_V} & opportunity-cost (payout+erosion) rate & $\mathrm{PERT}(0.08,0.12,0.18)$ & 1/yr & assumption\\
\texttt{t\_L} & planned launch date & $0.5$ & yr & assumption\\
\end{longtable}
\end{small}

\subsection{Case B: mobile network operator (Operator~T)}\label{sec:tel}
\paragraph{Platform.} Operator~T is a large mobile network operator with $30$\,m postpaid, $12$\,m prepaid and $1.5$\,m premium-plan subscribers, and about $2{,}500$ enterprise 5G accounts. Its 5G standalone core uses a converged charging function (CHF; 3GPP TS~32.290/32.291) with a \emph{predictive quota-reservation} feature (CHF-PQR). The feature sizes quota grants dynamically, drives usage notifications and spend control, and enables slice-level charging for premium and enterprise products.

\paragraph{Restraint and fallback.} CHF-PQR is withdrawn. Online credit control falls back to an Nchf-to-Gy interworking function (IWF) towards the legacy 4G online charging system, with static quotas. After scale-out, the IWF supports a share $q_{\mathrm{IWF}}\sim\PERT(0.85,0.92,0.98)$ of peak load. Usage notification is capped at $0.5$, and slice charging is unavailable. When credit control is unavailable, the Credit-Control-Failure-Handling policy (CCFH, RFC~8506) decides the outcome: \textsc{continue} (fail-open) grants service without charging, and \textsc{terminate} (fail-closed) denies it \citep{rfc8506}. Both policies are evaluated. Levels (\cref{fig:graph-tel}) are prepaid data $0.92$, spend control $0.5$, premium plans and enterprise slices $0$, and postpaid $1$ (offline charging is unaffected).

\begin{figure}[t]
\centering
\resizebox{\textwidth}{!}{%
\begin{tikzpicture}[>=Latex,font=\scriptsize,
  svc/.style={draw,rounded corners=2pt,text width=2.45cm,minimum height=0.66cm,align=center,inner sep=1.5pt},
  infra/.style={svc,fill=gray!12}, core/.style={svc,fill=blue!6}, chan/.style={svc,fill=blue!3},
  restr/.style={svc,fill=red!25,draw=red!70!black,very thick}, fb/.style={svc,fill=orange!18},
  capnode/.style={svc,fill=green!12}]
\node[infra] (CLOUD) at (0.0,4.8) {Telco cloud\\[-1pt]{\color{black!60}$\ell=1$}};
\node[infra] (TRANS) at (0.0,1.2) {Transport\\[-1pt]{\color{black!60}$\ell=1$}};
\node[infra] (UDM) at (0.0,3.0) {UDM\\[-1pt]{\color{black!60}$\ell=1$}};
\node[infra] (NRF) at (0.0,6.6) {NRF\\[-1pt]{\color{black!60}$\ell=1$}};
\node[core] (SMF) at (3.625,3.6) {SMF\\[-1pt]$\wedge$, {\color{black!60}$\ell=1$}};
\node[core] (UPF) at (3.625,1.0) {UPF\\[-1pt]$\wedge$, {\color{black!60}$\ell=1$}};
\node[core] (PCF) at (3.625,2.3) {PCF\\[-1pt]$\wedge$, {\color{black!60}$\ell=1$}};
\node[restr] (CHFPQR) at (3.625,7.8) {CHF quota feature\\[-1pt]$\wedge$, {\color{red!70!black}$\ell=0$}};
\node[fb] (OCS4G) at (3.625,6.4) {4G OCS\\[-1pt]$\wedge$, {\color{black!60}$\ell=1$}};
\node[fb] (IWF) at (7.25,6.4) {Nchf-Gy IWF\\[-1pt]$\wedge$, {\color{black!60}$\ell=1$}};
\node[core] (CHFOFF) at (3.625,5.0) {Offline charging\\[-1pt]$\wedge$, {\color{black!60}$\ell=1$}};
\node[core] (ONLINE) at (10.875,6.4) {Online credit ctrl\\[-1pt]alt$_{(1,0.92)}$, {\color{red!70!black}$\ell=0.92$}};
\node[core] (NOTIF) at (10.875,5.0) {Usage notification\\[-1pt]alt$_{(1,0.5)}$, {\color{red!70!black}$\ell=0.5$}};
\node[core] (SLICECHG) at (7.25,7.9) {Slice charging\\[-1pt]$\wedge$, {\color{red!70!black}$\ell=0$}};
\node[capnode] (CAPPRE) at (14.75,3.6) {Prepaid 5G data\\[-1pt]$\wedge$, {\color{red!70!black}$\ell=0.92$}};
\node[capnode] (CAPPOST) at (14.75,2.0) {Postpaid 5G data\\[-1pt]$\wedge$, {\color{black!60}$\ell=1$}};
\node[capnode] (CAPSPEND) at (14.75,5.2) {Spend control\\[-1pt]$\wedge$, {\color{red!70!black}$\ell=0.5$}};
\node[capnode] (CAPPREM) at (14.75,7.9) {Premium plans\\[-1pt]$\wedge$, {\color{red!70!black}$\ell=0$}};
\node[capnode] (CAPENT) at (14.75,6.6) {Enterprise slices\\[-1pt]$\wedge$, {\color{red!70!black}$\ell=0$}};
\draw[->,black!55] (CLOUD.east) -- (SMF.west);
\draw[->,black!55] (NRF.east) -- (SMF.west);
\draw[->,black!55] (UDM.east) -- (SMF.west);
\draw[->,black!55] (CLOUD.east) -- (UPF.west);
\draw[->,black!55] (TRANS.east) -- (UPF.west);
\draw[->,black!55] (CLOUD.east) -- (PCF.west);
\draw[->,black!55] (UDM.east) -- (PCF.west);
\draw[->,black!55] (CLOUD.east) -- (CHFPQR.west);
\draw[->,black!55] (CLOUD.east) -- (OCS4G.west);
\draw[->,black!55] (OCS4G.east) -- (IWF.west);
\draw[->,black!55] (CLOUD.east) -- (CHFOFF.west);
\draw[->,red!70!black,thick] (CHFPQR.east) -- (ONLINE.west);
\draw[->,densely dashed,orange!70!black,thick] (IWF.east) -- (ONLINE.west);
\draw[->,red!70!black,thick] (CHFPQR.east) -- (NOTIF.west);
\draw[->,densely dashed,orange!70!black,thick] (IWF.east) -- (NOTIF.west);
\draw[->,red!70!black,thick] (CHFPQR.east) -- (SLICECHG.west);
\draw[->,black!55] (SMF.east) -- (CAPPRE.west);
\draw[->,black!55] (UPF.east) -- (CAPPRE.west);
\draw[->,black!55] (PCF.east) -- (CAPPRE.west);
\draw[->,black!55] (ONLINE.east) -- (CAPPRE.west);
\draw[->,black!55] (SMF.east) -- (CAPPOST.west);
\draw[->,black!55] (UPF.east) -- (CAPPOST.west);
\draw[->,black!55] (PCF.east) -- (CAPPOST.west);
\draw[->,black!55] (CHFOFF.east) -- (CAPPOST.west);
\draw[->,black!55] (NOTIF.east) -- (CAPSPEND.west);
\draw[->,black!55] (PCF.east) -- (CAPSPEND.west);
\draw[->,black!55] (ONLINE.east) -- (CAPPREM.west);
\draw[->,black!55] (SLICECHG.east) -- (CAPPREM.west);
\draw[->,black!55] (NOTIF.east) -- (CAPPREM.west);
\draw[->,black!55] (SLICECHG.east) -- (CAPENT.west);
\draw[->,black!55] (PCF.east) -- (CAPENT.west);
\end{tikzpicture}}
\caption{Operator~T dependency graph (5G SA core). Red: restrained CHF-PQR. Orange: legacy fallback path (4G OCS via Nchf--Gy IWF).}
\label{fig:graph-tel}
\end{figure}

\paragraph{Capacity exceedance.} Credit-control load follows a normalised diurnal profile, $\lambda(h)=0.25+0.75\big[\tfrac12(1+\cos(2\pi(h-21)/24))\big]^{1.5}$, peaking at 21:00. With fallback capacity $q$, the share of usage above capacity and the share of time in denial are
\[ e_u(q)=\frac{\int_0^{24}(\lambda(h)-q)^+\dd h}{\int_0^{24}\lambda(h)\dd h},\qquad e_t(q)=\frac1{24}\int_0^{24}\frac{(\lambda(h)-q)^+}{\lambda(h)}\dd h. \]
Let $R_{pd}=N_{\mathrm{pre}}\cdot\mathrm{ARPU}_{\mathrm{pre}}\cdot12\cdot s_{\mathrm{data}}$ be prepaid data revenue. Under \textsc{continue}, excess usage leaks at rate $R_{pd}e_u$ and outages are invisible to subscribers. Under \textsc{terminate}, revenue $R_{pd}e_u(1-\varrho_{\mathrm{retry}})$ is lost, peak denial adds $\omega_{\mathrm{den}}\,e_t\cdot8760\,\tau$ outage-equivalent hours to churn exposure, and outages trigger goodwill credits.

\paragraph{Channels.} The acute loss has $a=R_{pd}/8760+c_{\mathrm{cc}}$ under \textsc{continue} and $a=(1-\varrho_{\mathrm{retry}})R_{pd}/8760+c_{\mathrm{cc}}$ under \textsc{terminate}, where $c_{\mathrm{cc}}$ is the contact-centre surge cost. The re-authorisation backlog term is $b=v_{\mathrm{sess}}\cdot\tfrac12\lambda_S(1+1/(\kappa-1))$, with $\lambda_S=20$\,m session re-authorisations per hour. Enterprise SLA steps are 10\% of affected monthly recurring charges at $0.73$\,h (a 99.9\% monthly allowance) and 15\% at $3.65$\,h (99.5\%). Under \textsc{terminate}, a goodwill credit step $p_{\mathrm{credit}}\cdot\mathrm{credit}\cdot N_{\mathrm{pre}}$ applies at 4\,h, anchored to the \$5 per-account credit after the February~2024 AT\&T outage and the five-day credit after the July~2022 Rogers outage \citep{fccatt2024,crtc2024}. Functional loss is
\begin{multline*}
\pi_{\mathrm{fun}}=\big(R_{pd}\,s_{\mathrm{leak}}+\pi_{\mathrm{peak}}\big)\Big(1-\frac{\chi_{\mathrm{pre}}}{2}\Big)+\pi_{\mathrm{prem}}(1-\ell_{\mathrm{PREM}})\Big(1-\frac{\chi_{\mathrm{prem}}}{2}\Big)\\
+N_{\mathrm{ent}}\,\mathrm{MRC}\cdot12\cdot s_{\mathrm{slice}}(1-\ell_{\mathrm{ENT}})(1-\varrho_{\mathrm{ent}}),
\end{multline*}
which covers static-quota leakage, peak loss, premium-plan feature margin and enterprise slice revenue net of manual-billing recapture. Churn uses prepaid monthly churn $2.73\%$ and postpaid $0.85\%$, consistent with FY2024 disclosures (postpaid phone churn $0.76$--$0.89\%$) \citep{tmobile2024,att2024}. Contribution margins are $\PERT(40\%,50\%,60\%)$, consistent with reported EBITDA margins of $36.6$--$48.0\%$. The delayed launch is an enterprise network-slicing product with $V_0\sim\LN(\$700\,\mathrm{m},0.35)$, $I=\$250$\,m, $\sigma=0.45$ and $t_L=0.25$\,yr.

\paragraph{Remediation.} The design-around replaces predictive reservation with a non-practising quota algorithm: about $\rTelKSLOC$ KSLOC, $\sum\mathrm{SF}=\rTelSF$ ($E=\rTelEexp$), EAF $=\rTelEAF$ (\rTelEMlist). The nominal plan is $\rTelPMnom$ PM over $\rTelTDEVnom$ months; crashed, it is $\rTelPMcr$ PM over $\rTelTDEVcr$ months. Network durations are $A\sim\PERT(1,2,4)$, $C\sim\PERT(2,3,6)$, $D\sim\PERT(1,2,4)$ and $E\sim\PERT(1,2,4)$. Substitution adopts a third-party CHF: selection $\PERT(1,2,4)$, integration $\PERT(6,9,15)$, \$15\,m upfront, \$12\,m integration and \$6\,m per year.

\begin{small}
\setlength{\tabcolsep}{3pt}
\begin{longtable}{@{}p{1.9cm}p{4.5cm}p{3.3cm}p{1.0cm}p{5.0cm}@{}}
\caption{Operator~T calibration.}\label{tab:calib-tel}\\
\toprule
Symbol & Description & Distribution & Unit & Anchor\\ \midrule\endfirsthead
\toprule Symbol & Description & Distribution & Unit & Anchor\\ \midrule\endhead
\bottomrule\endfoot
\multicolumn{5}{l}{\textit{Restraint horizon}}\\
\texttt{T\_R} & restraint horizon (years) & $\mathrm{LN}(2,\,0.4)$ & yr & scenario assumption\\
\multicolumn{5}{l}{\textit{Switch-over outage}}\\
\texttt{p0} & P(charging outage at feature withdrawal) & $\mathrm{PERT}(0.1,0.25,0.45)$ &  & Rogers 2022/AT\&T 2024: change-induced\\
\texttt{D0} & switch-over outage duration & $\mathrm{LN}(2.5,\,0.9)$ & h & assumption\\
\multicolumn{5}{l}{\textit{Economic scale}}\\
\texttt{arpu\_post} & postpaid phone ARPU & $\mathrm{LN}(55,\,0.05)$ & USD/mo & AT\&T 4Q24 USD 56.72\\
\texttt{arpu\_pre} & prepaid ARPU & $\mathrm{LN}(35,\,0.1)$ & USD/mo & assumption\\
\texttt{margin} & contribution margin on service revenue & $\mathrm{PERT}(0.4,0.5,0.6)$ &  & EBITDA margins 36.6-48.0\% (FY24 filings)\\
\texttt{data\_share} & data share of prepaid revenue & $\mathrm{PERT}(0.55,0.65,0.75)$ &  & assumption\\
\texttt{leak} & static-quota revenue leakage share & $\mathrm{LN}(0.006,\,0.5)$ &  & assumption\\
\texttt{prem\_rate} & premium-plan incremental margin at risk & $\mathrm{LN}(40\,m,\,0.4)$ & USD/yr & assumption (1.5m subs x USD 5/mo x 50\%)\\
\texttt{ent\_n} & enterprise 5G accounts & $2500$ &  & assumption\\
\texttt{ent\_mrc} & enterprise monthly recurring charge & $\mathrm{LN}(6000,\,0.4)$ & USD/mo & assumption\\
\texttt{ent\_slice} & share of enterprise MRC needing slice charging & $\mathrm{PERT}(0.1,0.2,0.35)$ &  & assumption\\
\multicolumn{5}{l}{\textit{Customer response}}\\
\texttt{churn\_post} & postpaid monthly churn & $0.0085$ & 1/mo & FY24: 0.76-0.89\%\\
\texttt{churn\_pre} & prepaid monthly churn & $0.0273$ & 1/mo & T-Mobile FY24 2.73\%\\
\texttt{rho\_retry} & usage recaptured after denial & $\mathrm{PERT}(0.5,0.7,0.9)$ &  & assumption\\
\texttt{omega\_den} & outage-equivalence of peak denial & $\mathrm{PERT}(0.1,0.25,0.5)$ &  & assumption\\
\texttt{beta\_out} & prepaid churn hazard per visible outage hour & $\mathrm{LN}(2.0\text{\text{e-}}4,\,0.6)$ & 1/h & Opensignal 2025 (directional)\\
\texttt{beta\_prem} & premium-subscriber churn hazard (feature loss) & $\mathrm{LN}(0.02,\,0.6)$ & 1/yr & assumption\\
\multicolumn{5}{l}{\textit{Fallback capability}}\\
\texttt{cap\_iwf} & interworking capacity cap after scale-out (share of peak) & $\mathrm{PERT}(0.85,0.92,0.98)$ &  & architecture assessment\\
\multicolumn{5}{l}{\textit{Workaround operations}}\\
\texttt{iwf\_capex} & one-off fallback OCS/IWF scale-out cost & $\mathrm{LN}(10\,m,\,0.4)$ & USD & assumption\\
\texttt{ent\_recap} & manual-billing recapture & $\mathrm{PERT}(0.3,0.5,0.7)$ &  & assumption\\
\texttt{fte} & manual operations FTE & $\mathrm{PERT}(60,120,220)$ & FTE & assumption\\
\texttt{fte\_cost} & loaded cost per FTE-year & $120\,k$ & USD & assumption\\
\multicolumn{5}{l}{\textit{Outage economics}}\\
\texttt{ent\_aff} & share of enterprise accounts hit by an outage & $\mathrm{PERT}(0.4,0.6,0.8)$ &  & assumption\\
\texttt{cc\_hr} & contact-centre surge cost per outage hour & $\mathrm{LN}(150\,k,\,0.4)$ & USD/h & assumption\\
\texttt{K0} & fixed incident-response cost & $\mathrm{LN}(400\,k,\,0.5)$ & USD & assumption\\
\texttt{kappa} & signalling catch-up capacity multiple & $\mathrm{PERT}(1.2,1.5,2)$ &  & assumption\\
\texttt{v\_sess} & cost per session-hour of re-authorisation delay & $\mathrm{LN}(0.002,\,0.5)$ & USD & assumption\\
\texttt{p\_credit} & P(goodwill credit | visible outage $>$ 4 h) & $\mathrm{PERT}(0.2,0.5,0.8)$ &  & AT\&T 2024 USD 5/acct; Rogers 2022 5 days\\
\texttt{credit} & goodwill credit per prepaid account & $\mathrm{PERT}(1,2,5)$ & USD & AT\&T 2024 USD 5/acct\\
\multicolumn{5}{l}{\textit{Incident process}}\\
\texttt{lam\_bf} & but-for charging-incident rate & $2$ & 1/yr & assumption\\
\texttt{lam\_mult} & fallback incident-rate multiple & $\mathrm{LN}(2,\,0.4)$ &  & architecture CTMC (\S\ref{sec:arch}): 1.98\\
\texttt{D\_inc\_med} & median incident duration & $1.5$ & h & assumption\\
\texttt{D\_inc\_sig} & log-sd incident duration & $0.9$ &  & assumption\\
\multicolumn{5}{l}{\textit{Delayed launch}}\\
\texttt{V0} & PV at launch of enterprise slicing product & $\mathrm{LN}(700\,m,\,0.35)$ & USD & assumption; Ericsson/ADL USD 200bn addressable by 2030\\
\texttt{I\_launch} & launch investment & $250\,m$ & USD & assumption\\
\texttt{sig\_V} & project-value volatility & $0.45$ &  & assumption\\
\texttt{delta\_V} & opportunity-cost (payout+erosion) rate & $\mathrm{PERT}(0.1,0.15,0.22)$ & 1/yr & assumption\\
\texttt{t\_L} & planned launch date & $0.25$ & yr & assumption\\
\end{longtable}
\end{small}

\subsection{Case C: airline passenger-service system (Carrier~P)}\label{sec:air}
\paragraph{Platform.} Carrier~P is a network airline carrying $48$\,m passengers a year on about $1{,}250$ daily departures (an $18$-hour operating day, $150$ seats per departure, load factor $84\%$ \citep{iata2025}). Its PSS includes a \emph{real-time availability and re-shop engine} (RAE). The RAE drives continuous (dynamic) pricing, automated re-accommodation during irregular operations (IROPS), and a planned NDC dynamic-offer API.

\paragraph{Restraint and fallback.} The RAE is withdrawn. Pricing falls back to legacy class-based availability (filed fares, booking-class buckets), which fully supports booking ($\ell_{\mathrm{BOOK}}=1$) but not continuous pricing ($\ell_{\mathrm{CP}}=0$). Re-accommodation falls back to agent tools at capability $q_{\mathrm{manual}}\sim\PERT(0.30,0.45,0.60)$. The NDC programme is blocked (\cref{fig:graph-air}).

\begin{figure}[t]
\centering
\resizebox{\textwidth}{!}{%
\begin{tikzpicture}[>=Latex,font=\scriptsize,
  svc/.style={draw,rounded corners=2pt,text width=2.45cm,minimum height=0.66cm,align=center,inner sep=1.5pt},
  infra/.style={svc,fill=gray!12}, core/.style={svc,fill=blue!6}, chan/.style={svc,fill=blue!3},
  restr/.style={svc,fill=red!25,draw=red!70!black,very thick}, fb/.style={svc,fill=orange!18},
  capnode/.style={svc,fill=green!12}]
\node[infra] (CLOUD) at (0.0,6.4) {Cloud\\[-1pt]{\color{black!60}$\ell=1$}};
\node[infra] (NET) at (0.0,1.8) {Network\\[-1pt]{\color{black!60}$\ell=1$}};
\node[infra] (PSSDB) at (0.0,4.2) {PSS database\\[-1pt]{\color{black!60}$\ell=1$}};
\node[infra] (GDS) at (0.0,0.4) {GDS links\\[-1pt]{\color{black!60}$\ell=1$}};
\node[core] (INV) at (3.5,4.2) {Inventory\\[-1pt]$\wedge$, {\color{black!60}$\ell=1$}};
\node[restr] (RAE) at (7.0,5.6) {RT availability engine\\[-1pt]$\wedge$, {\color{red!70!black}$\ell=0$}};
\node[fb] (AVS) at (7.0,4.2) {Legacy AVS\\[-1pt]$\wedge$, {\color{black!60}$\ell=1$}};
\node[core] (FARES) at (3.5,7.4) {Filed fares\\[-1pt]$\wedge$, {\color{black!60}$\ell=1$}};
\node[core] (PRICE) at (10.5,4.9) {Offer pricing\\[-1pt]alt$_{(1,1)}$, {\color{black!60}$\ell=1$}};
\node[core] (CP) at (10.5,7.4) {Continuous pricing\\[-1pt]$\wedge$, {\color{red!70!black}$\ell=0$}};
\node[fb] (MTOOL) at (7.0,2.9) {Agent re-accom. tool\\[-1pt]$\wedge$, {\color{black!60}$\ell=1$}};
\node[core] (REACC) at (10.5,2.9) {Re-accommodation\\[-1pt]alt$_{(1,0.45)}$, {\color{red!70!black}$\ell=0.45$}};
\node[core] (DCS) at (7.0,1.5) {Departure control\\[-1pt]$\wedge$, {\color{black!60}$\ell=1$}};
\node[core] (NDCAPI) at (14.0,6.2) {NDC offer API\\[-1pt]$\wedge$, {\color{red!70!black}$\ell=0$}};
\node[capnode] (CAPBOOK) at (17.5,4.4) {Booking \& ticketing\\[-1pt]$\wedge$, {\color{black!60}$\ell=1$}};
\node[capnode] (CAPCP) at (17.5,7.6) {Dynamic pricing\\[-1pt]$\wedge$, {\color{red!70!black}$\ell=0$}};
\node[capnode] (CAPIROPS) at (17.5,2.9) {Disruption recovery\\[-1pt]$\wedge$, {\color{red!70!black}$\ell=0.45$}};
\node[capnode] (CAPCHK) at (17.5,1.3) {Check-in/boarding\\[-1pt]$\wedge$, {\color{black!60}$\ell=1$}};
\node[capnode] (CAPNDC) at (17.5,6.2) {NDC retailing (planned)\\[-1pt]$\wedge$, {\color{red!70!black}$\ell=0$}};
\draw[->,black!55] (PSSDB.east) -- (INV.west);
\draw[->,black!55] (INV.east) -- (RAE.west);
\draw[->,black!55] (CLOUD.east) -- (RAE.west);
\draw[->,black!55] (INV.east) -- (AVS.west);
\draw[->,black!55] (CLOUD.east) -- (FARES.west);
\draw[->,red!70!black,thick] (RAE.east) -- (PRICE.west);
\draw[->,densely dashed,orange!70!black,thick] (AVS.east) -- (PRICE.west);
\draw[->,red!70!black,thick] (RAE.east) -- (CP.west);
\draw[->,black!55] (FARES.east) -- (CP.west);
\draw[->,black!55] (INV.east) -- (MTOOL.west);
\draw[->,red!70!black,thick] (RAE.east) -- (REACC.west);
\draw[->,densely dashed,orange!70!black,thick] (MTOOL.east) -- (REACC.west);
\draw[->,black!55] (INV.east) -- (DCS.west);
\draw[->,black!55] (NET.east) -- (DCS.west);
\draw[->,red!70!black,thick] (RAE.east) -- (NDCAPI.west);
\draw[->,black!55] (PRICE.east) -- (NDCAPI.west);
\draw[->,black!55] (INV.east) -- (CAPBOOK.west);
\draw[->,black!55] (PRICE.east) -- (CAPBOOK.west);
\draw[->,black!55] (GDS.east) -- (CAPBOOK.west);
\draw[->,black!55] (CP.east) -- (CAPCP.west);
\draw[->,black!55] (REACC.east) -- (CAPIROPS.west);
\draw[->,black!55] (DCS.east) -- (CAPIROPS.west);
\draw[->,black!55] (DCS.east) -- (CAPCHK.west);
\draw[->,black!55] (NDCAPI.east) -- (CAPNDC.west);
\end{tikzpicture}}
\caption{Carrier~P dependency graph. Red: restrained RAE. Orange: legacy fallbacks (availability server, agent re-accommodation tool).}
\label{fig:graph-air}
\end{figure}

\paragraph{Channels.} Departures occur at $\lambda_f=1250/18\approx69.4$ per hour. In an outage, a share $\pi_1$ of departures in the window is cancelled. Beyond 3\,h, cascade cancellations accrue at $\pi_2\lambda_f$ per hour. Each cancellation costs $c_{\mathrm{cxl}}\sim\LN(\$60\,\mathrm{k},0.25)$, anchored to Delta's July~2024 disruption (about \$500\,m net over 7{,}000 cancellations) and Southwest's December~2022 meltdown (\$800\,m over 16{,}700) \citep{delta2024q3,southwest2022}. It also costs $126$ passengers times a downstream revenue loss of $\LN(\$120,0.5)$ each, scaled to Southwest's Q1~2023 revenue impact of \$325\,m over about 2\,m stranded passengers. Hence $a=\lambda_f\pi_1(c_{\mathrm{cxl}}+126\,\mathrm{ds})$ and the hinge at $3$\,h is $h=\lambda_f\pi_2(c_{\mathrm{cxl}}+126\,\mathrm{ds})$. Gate-delay backlog contributes $b=w_{\mathrm{flt}}\tfrac12\psi\lambda_f(1+\psi/(\kappa-1))$ with $\psi=1-\pi_1$. Here $w_{\mathrm{flt}}\sim\LN(\$3{,}500/\mathrm{h},0.30)$, consistent with A4A's CY2025 delay cost of \$98.41 per block minute, of which crew, ownership and other account for \$50.72 \citep{a4a2025}. Continuous pricing yields a revenue uplift $\mathrm{Tri}(0.5\%,1\%,2\%)$ on the share of revenue in continuous-pricing markets, consistent with the 1--2\% range reported in IATA's Dynamic Offers guidance \citep{iata2021}. IROPS events arrive at rate $\lambda_I\sim\PERT(15,25,40)$ per year. Event sizes are lognormal with median $\mathrm{pax}_I$ and log-sd $0.9$. The per-passenger cost of degraded re-accommodation is
\[ (1-\ell_{\mathrm{IROPS}})\big[k_{\mathrm{on}}c_{\mathrm{hotel}}+k_{\mathrm{ref}}\,\mathrm{fare}\,(1-s_{\mathrm{vc}})+s_{\mathrm{EU}}\,s_{\neg\mathrm{ext}}\,k_{3h}\,\mathrm{comp}_{\mathrm{EU}}+h_{\mathrm{ag}}w_{\mathrm{ag}}\big], \]
which covers overnight care, refunds under the 2024 DOT rule \citep{dot2024}, EU~261 compensation \citep{eu261} and agent time. Downstream revenue lost from stranded passengers, $(1-\ell_{\mathrm{IROPS}})k_{\mathrm{on}}\,\mathrm{ds}$, enters the attrition channel. The delayed launch is the NDC dynamic-offer programme with $V_0\sim\LN(\$600\,\mathrm{m},0.30)$, $I=\$150$\,m, $\sigma=0.35$ and $t_L=0.33$\,yr.

\paragraph{Remediation.} The design-around builds a non-practising availability computation: about $\rAirKSLOC$ KSLOC, $\sum\mathrm{SF}=\rAirSF$ ($E=\rAirEexp$), EAF $=\rAirEAF$ (\rAirEMlist). The nominal plan is $\rAirPMnom$ PM over $\rAirTDEVnom$ months; crashed, it is $\rAirPMcr$ PM over $\rAirTDEVcr$ months. Network durations are $A\sim\PERT(1.5,2.5,4)$, $C\sim\PERT(3,4,7)$, $D\sim\PERT(2,3,5)$ and $E\sim\PERT(1.5,2.5,5)$. Substitution uses a third-party offer engine: selection $\PERT(2,4,6)$, integration $\PERT(10,14,24)$, \$15\,m upfront, \$25\,m integration and \$5\,m per year. Integration into a PSS is slow, so substitution is not faster than crashing in this case.

\begin{small}
\setlength{\tabcolsep}{3pt}
\begin{longtable}{@{}p{1.9cm}p{4.5cm}p{3.3cm}p{1.0cm}p{5.0cm}@{}}
\caption{Carrier~P calibration.}\label{tab:calib-air}\\
\toprule
Symbol & Description & Distribution & Unit & Anchor\\ \midrule\endfirsthead
\toprule Symbol & Description & Distribution & Unit & Anchor\\ \midrule\endhead
\bottomrule\endfoot
\multicolumn{5}{l}{\textit{Restraint horizon}}\\
\texttt{T\_R} & restraint horizon (years) & $\mathrm{LN}(2,\,0.4)$ & yr & scenario assumption\\
\multicolumn{5}{l}{\textit{Switch-over outage}}\\
\texttt{p0} & P(PSS outage at RAE withdrawal) & $\mathrm{PERT}(0.1,0.25,0.45)$ &  & United 2012 PSS cut-over; BA 2017\\
\texttt{D0} & switch-over outage duration & $\mathrm{LN}(2,\,0.8)$ & h & assumption\\
\multicolumn{5}{l}{\textit{Economic scale}}\\
\texttt{fare} & revenue per passenger & $\mathrm{LN}(255,\,0.1)$ & USD & BTS 2024 USD 384 per domestic itinerary\\
\texttt{cp\_uplift} & continuous-pricing revenue uplift & $\mathrm{Tri}(0.005,0.01,0.02)$ &  & IATA 2021 / MIT: 1-2\%\\
\texttt{cp\_share} & revenue share in continuous-pricing markets & $\mathrm{PERT}(0.5,0.7,0.85)$ &  & assumption\\
\texttt{vc\_share} & variable cost share of a lost fare & $\mathrm{PERT}(0.15,0.25,0.35)$ &  & assumption\\
\multicolumn{5}{l}{\textit{Fallback capability}}\\
\texttt{cap\_manual} & manual re-accommodation capability cap & $\mathrm{PERT}(0.3,0.45,0.6)$ &  & architecture assessment\\
\multicolumn{5}{l}{\textit{IROPS process}}\\
\texttt{lam\_I} & significant IROPS events per year & $\mathrm{PERT}(15,25,40)$ & 1/yr & DOT 2024: 1.4\% cancellations\\
\texttt{pax\_I} & median passengers to re-accommodate per event (event log-sd 0.9) & $\mathrm{LN}(12\,k,\,0.3)$ & pax & assumption\\
\texttt{k\_on} & overnight-stranding increase per unit capability loss & $\mathrm{PERT}(0.06,0.12,0.22)$ &  & assumption\\
\texttt{k\_ref} & refund-uptake increase per unit capability loss & $\mathrm{PERT}(0.03,0.07,0.14)$ &  & DOT refund rule 2024\\
\texttt{k\_3h} & 3h+ arrival-delay increase per unit capability loss & $\mathrm{PERT}(0.05,0.1,0.18)$ &  & EU 261/2004 Art. 7\\
\texttt{hotel} & care cost per overnight passenger & $\mathrm{PERT}(150,220,320)$ & USD & EU 261 Art. 9; assumption\\
\texttt{eu\_share} & share of disrupted pax under EU 261 & $0.1$ &  & assumption\\
\texttt{non\_extra} & share not 'extraordinary circumstances' & $\mathrm{PERT}(0.25,0.4,0.6)$ &  & assumption\\
\texttt{eu\_comp} & average EU 261 compensation & $400$ & EUR & EU 261: EUR 250/400/600\\
\multicolumn{5}{l}{\textit{Workaround operations}}\\
\texttt{h\_ag} & extra agent hours per disrupted pax per unit loss & $\mathrm{PERT}(0.05,0.12,0.25)$ & h & assumption\\
\texttt{wage} & loaded agent cost per hour & $48$ & USD/h & assumption\\
\texttt{fte} & manual operations FTE (pricing/RM, agents) & $\mathrm{PERT}(80,150,260)$ & FTE & assumption\\
\texttt{fte\_cost} & loaded cost per FTE-year & $85\,k$ & USD & assumption\\
\multicolumn{5}{l}{\textit{Customer response}}\\
\texttt{downstream} & future revenue lost per stranded passenger & $\mathrm{LN}(120,\,0.5)$ & USD & Southwest Q1-2023: USD 325m / $\sim$2m stranded\\
\multicolumn{5}{l}{\textit{Outage economics}}\\
\texttt{pi1} & cancelled share of flights in outage window & $\mathrm{PERT}(0.05,0.1,0.2)$ &  & assumption\\
\texttt{pi2} & cascade cancellations per outage hour beyond 3 h (x hourly departures) & $\mathrm{PERT}(0.5,1,1.6)$ &  & Delta 2016: 2,300 cxl/3 days\\
\texttt{c\_cxl} & cost per cancellation (refunds, care, crew) & $\mathrm{LN}(60\,k,\,0.25)$ & USD & Delta 3Q24: (380+170-50)/7000; Southwest 4Q22: 800/16,700\\
\texttt{w\_flt} & airline cost per flight-hour of gate delay & $\mathrm{LN}(3500,\,0.3)$ & USD/h & A4A CY2025: crew+ownership+other USD 50.72/min\\
\texttt{kappa} & effective recovery-capacity multiple & $\mathrm{PERT}(1.4,1.8,2.5)$ &  & assumption\\
\texttt{K0} & fixed incident-response cost & $\mathrm{LN}(500\,k,\,0.5)$ & USD & assumption\\
\multicolumn{5}{l}{\textit{Incident process}}\\
\texttt{lam\_bf} & but-for PSS-incident rate & $0.5$ & 1/yr & assumption\\
\texttt{lam\_mult} & fallback incident-rate multiple & $\mathrm{LN}(2.4,\,0.45)$ &  & architecture CTMC (\S\ref{sec:arch}): 2.43\\
\texttt{D\_inc\_med} & median incident duration & $1.5$ & h & assumption\\
\texttt{D\_inc\_sig} & log-sd incident duration & $0.9$ &  & assumption\\
\multicolumn{5}{l}{\textit{Delayed launch}}\\
\texttt{V0} & PV at launch of NDC dynamic-offer programme & $\mathrm{LN}(600\,m,\,0.3)$ & USD & assumption\\
\texttt{I\_launch} & launch investment & $150\,m$ & USD & assumption\\
\texttt{sig\_V} & project-value volatility & $0.35$ &  & assumption\\
\texttt{delta\_V} & opportunity-cost (payout+erosion) rate & $\mathrm{PERT}(0.08,0.12,0.18)$ & 1/yr & assumption\\
\texttt{t\_L} & planned launch date & $0.33$ & yr & assumption\\
\end{longtable}
\end{small}

\paragraph{Dependence structure.} The Gaussian-copula correlations encode mechanistic links. In all cases, switch-over failure probability and the incident multiple have $\rho=0.4$ (both reflect fallback immaturity), and design-around size and estimation error have $0.3$. For Bank~A, hourly outage cost and redress have $0.4$; the outage and degradation churn hazards have $0.5$; and retail and SME degradation hazards have $0.6$. For Operator~T, the prepaid and premium hazards have $0.5$, and margin and data share $0.2$. For Carrier~P, the two cancellation rates have $0.5$, and the overnight, refund and 3-hour delay increments have $0.5$.

\section{Experimental Evaluation}\label{sec:experiments}
\subsection{Research questions and design}\label{sec:design}
The experiments address eight questions:
\begin{enumerate}[label=\textbf{RQ\arabic*},leftmargin=3.2em,itemsep=0pt]
\item \emph{Correctness.} Do the implemented closed forms, fixed-point iterations and estimators agree with independent numerical methods and with the theorems?
\item \emph{Architecture.} What incident multiple do the fallback architectures imply?
\item \emph{Distributions and strategy.} What are the mitigated loss distribution and the loss-minimising response for each platform?
\item \emph{Decomposition.} Which channels carry expected loss, and which carry tail risk?
\item \emph{Plug-in bias.} How biased is the conventional deterministic plug-in estimate, and does it select the right response?
\item \emph{Drivers.} Which mechanisms dominate epistemic uncertainty?
\item \emph{Security.} What security does \cref{thm:bond} imply, and how does it compare with posting the expected loss?
\item \emph{Calibration econometrics and precision.} Can customer-response effects be recovered from staggered panel data, and how precise are the Monte Carlo estimates?
\end{enumerate}
Each case is evaluated with $n=\rN$ joint draws of $(\theta,\omega)$ per strategy (and per CCFH policy for Operator~T), using common random numbers. All monetary values are in USD millions, discounted to $t=0$. The ``best'' strategy is $s^\star=\arg\min_s\overline{L(s)}$. The full pipeline runs in $\rTotalRuntime$\,s (Bank~A $\rBankRuntime$\,s, Operator~T $\rTelRuntime$\,s, Carrier~P $\rAirRuntime$\,s, plus econometrics).

\subsection{RQ1: Verification}\label{sec:verif}
\Cref{tab:verif} summarises the verification suite. Each analytical component is checked against an independent method.

\emph{Fluid backlog (\cref{thm:backlog}).} The fluid area $W(D)$ was compared with a stochastic time-stepped queue ($\lambda=2000$/h, $c=3000$/h, $\psi=0.8$, $4000$ steps per outage, $12$ replications). The maximum relative deviation is $\rVBacklogMaxErr\%$, which is within two standard errors at every $D$ (\cref{tab:v-backlog}).

\emph{Lattice semantics (\cref{thm:lattice}).} On the toy platform of \cref{ex:toy}, exhaustive enumeration over $\mathcal{C}^7$ finds the fixed points. With the bootstrap disabled there are three, and Kleene iteration from $\bot$ and $\top$ returns the least and greatest of them. In every scenario the iteration count is well below the bound $|V|(|\mathcal{C}|-1)=14$ (\cref{tab:v-fix}).

\emph{Uniformization (\cref{prop:unif}).} On a $4$-of-$6$ cluster with two crews, transient probabilities agree with the matrix exponential to $\rVUnifErr$ (maximum absolute error over $t\in\{1,10,100,1000\}$\,h). Cumulative occupation agrees with $4001$-point trapezoidal quadrature to $\rVUnifCumErr$, an error dominated by the quadrature.

\emph{Acute expectation (\cref{thm:jensen}).} The closed form \eqref{eq:acuteexp} matches adaptive quadrature over the threshold-split domain to a relative error of at most $\rVAcuteMaxErr$ (\cref{tab:v-acute}).

\emph{Delayed option (\cref{prop:delayclosed,thm:delay}).} Over nine configurations, the closed form agrees with 64-node Gauss--Hermite quadrature to a relative error of at most $\rVOptQuadErr$, and with $2\times10^6$-path Monte Carlo with $|z|\le\rVOptMaxZ$. The bounds $0\le\ell\le\ell_{\mathrm{DCF}}$ hold in all configurations (\cref{tab:v-option}).

\emph{Quantile intervals (\cref{prop:qci}).} Over $2000$ lognormal samples of size $2000$, the order-statistic interval for the $95$th percentile covers the true quantile in $\rVQcov\%$ of cases. Its exact (discrete) nominal coverage is $\rVQnom\%$.

\emph{Euler allocation (\cref{prop:euler}).} The tail contributions sum to $\CVaR_{0.95}$ to within $\rVEuler$.

\begin{table}[t]
\centering\footnotesize
\caption{Verification summary (RQ1).}\label{tab:verif}
\begin{tabularx}{\textwidth}{@{}lXl@{}}
\toprule
Component & Independent check & Max.\ discrepancy\\
\midrule
Fluid backlog $W(D)$ & stochastic time-stepped queue, 12 replications & $\rVBacklogMaxErr\%$ (within $2$ s.e.)\\
Lattice fixed points & exhaustive enumeration of $\mathcal{C}^V$ & exact (lfp least, gfp greatest)\\
Uniformization, transient & matrix exponential (\code{scipy.linalg.expm}) & $\rVUnifErr$\\
Uniformization, cumulative & $4001$-point trapezoid of $e^{\mathbf{Q}u}$ & $\rVUnifCumErr$\\
Acute expectation \eqref{eq:acuteexp} & adaptive quadrature split at thresholds & $\rVAcuteMaxErr$ (relative)\\
Delayed option \eqref{eq:Gclosed} & Gauss--Hermite (64) / Monte Carlo ($2\times10^6$) & $\rVOptQuadErr$ / $|z|\le\rVOptMaxZ$\\
Order-statistic CI & 2000 replications, $n=2000$, $p=0.95$ & coverage $\rVQcov\%$ vs.\ $\rVQnom\%$\\
Euler allocation & $\sum_k$ contributions vs.\ $\CVaR_{0.95}$ & $\rVEuler$\\
\bottomrule
\end{tabularx}
\end{table}

\begin{table}[t]
\centering\footnotesize
\begin{minipage}[t]{0.46\textwidth}\centering
\caption{Fluid backlog vs.\ stochastic simulation (transaction-hours).}\label{tab:v-backlog}
\begin{tabular}{@{}rrrrr@{}}
\toprule
$D$ (h) & $W(D)$ & sim.\ mean & s.e. & rel.\ err.\\
\midrule
0.5 & 520 & 527.5 & 9.9 & +1.44\%\\
1.0 & 2,080 & 2,060.1 & 21.9 & -0.96\%\\
2.0 & 8,320 & 8,234.7 & 68.3 & -1.03\%\\
4.0 & 33,280 & 33,014.3 & 180.8 & -0.80\%\\
\bottomrule
\end{tabular}
\end{minipage}\hfill
\begin{minipage}[t]{0.50\textwidth}\centering
\caption{Acute expectation: closed form vs.\ quadrature (USD m; $K_0=0.5$, $a=2$, $b=0.037$, step $10$ at $12$\,h, hinge $0.8$/h after $4$\,h).}\label{tab:v-acute}
\begin{tabular}{@{}rrrrr@{}}
\toprule
median & $s$ & closed form & quadrature & rel.\ err.\\
\midrule
6.0 & 1.0 & 37.7454 & 37.7454 & $-2.6\times 10^{-13}$\\
3.0 & 0.9 & 13.1849 & 13.1849 & $-1.7\times 10^{-12}$\\
1.5 & 0.9 & 5.8458 & 5.8458 & $-5.3\times 10^{-12}$\\
2.0 & 0.8 & 7.0917 & 7.0917 & $-7.2\times 10^{-14}$\\
\bottomrule
\end{tabular}
\end{minipage}
\end{table}

\begin{table}[t]
\centering\footnotesize
\caption{Fixed points of the toy platform (\cref{ex:toy}). Where they differ, entries show lfp/gfp. \#FP is the number of fixed points found by exhaustive enumeration. Iterations are Kleene steps from $\bot$/$\top$.}\label{tab:v-fix}
\begin{tabular}{@{}lccccccccccc@{}}
\toprule
Scenario & DNS & AUTH & CFG & BOOT & API & APP & CACHE & \#FP & iter.\ lfp/gfp & bound\\
\midrule
nominal & 1 & 1 & 1 & 1 & 1 & 1 & 1 & 1 & 5/0 & 14\\
no\_bootstrap & 1 & \textbf{0}/\textbf{1} & \textbf{0}/\textbf{1} & 0 & \textbf{0}/\textbf{1} & \textbf{0.5}/\textbf{1} & 1 & 3 & 2/1 & 14\\
dns\_down & 0 & 0 & 1 & 1 & 0 & 0.5 & 1 & 1 & 2/4 & 14\\
\bottomrule
\end{tabular}
\end{table}

\begin{table}[t]
\centering\footnotesize
\caption{Delayed option value $G$ (USD m): closed form \eqref{eq:Gclosed} vs.\ Gauss--Hermite quadrature and Monte Carlo ($2\times10^6$ paths); forced-delay loss $\ell$ and static bound $\ell_{\mathrm{DCF}}$.}\label{tab:v-option}
\begin{tabular}{@{}lrrrrrrrr@{}}
\toprule
Case & $\Delta$ (yr) & $G$ closed & quad.\ rel.\ err. & MC mean & MC s.e. & $z$ & $\ell(\Delta)$ & $\ell_{\mathrm{DCF}}(\Delta)$\\
\midrule
Bank A & 0.5 & 278.7 & $+3.3\times 10^{-10}$ & 278.7 & 0.057 & -0.45 & 21.3 & 21.3\\
Bank A & 1.0 & 258.7 & $-3.3\times 10^{-7}$ & 258.7 & 0.077 & +0.29 & 41.3 & 41.3\\
Bank A & 2.0 & 222.5 & $-7.5\times 10^{-6}$ & 222.4 & 0.099 & -0.91 & 77.5 & 77.7\\
Operator T & 0.5 & 405.6 & $+1.4\times 10^{-5}$ & 405.6 & 0.148 & +0.08 & 44.4 & 45.6\\
Operator T & 1.0 & 368.4 & $-5.2\times 10^{-5}$ & 368.7 & 0.197 & +1.25 & 81.6 & 87.7\\
Operator T & 2.0 & 308.0 & $-4.7\times 10^{-5}$ & 308.1 & 0.247 & +0.44 & 142.0 & 162.2\\
Carrier P & 0.5 & 418.0 & $-4.9\times 10^{-8}$ & 418.0 & 0.100 & -0.08 & 32.0 & 32.0\\
Carrier P & 1.0 & 388.1 & $-1.8\times 10^{-6}$ & 388.2 & 0.136 & +1.07 & 61.9 & 62.0\\
Carrier P & 2.0 & 334.6 & $-9.8\times 10^{-6}$ & 334.9 & 0.175 & +1.93 & 115.4 & 116.5\\
\bottomrule
\end{tabular}
\end{table}

\subsection{RQ2: Architecture-implied incident multiples}\label{sec:arch}
\Cref{tab:arch} reports the CTMC analysis. In all three platforms, common-mode software failures dominate the failure frequency of the but-for architecture. Redundant clusters contribute frequencies two to three orders of magnitude smaller. The fallbacks are worse for three reasons. They add non-redundant components (stand-in processors, batch schedulers, legacy availability servers). Manual workaround activity raises the change-driven common-mode rate (by $\times1.35$ for Bank~A). And interworking introduces a second software stack (Operator~T). The implied multiples are $\hat m=\rBankArchMult$ (Bank~A), $\rTelArchMult$ (Operator~T) and $\rAirArchMult$ (Carrier~P). Expected downtime rises from $\rBankArchDTbf$ to $\rBankArchDTfb$, from $\rTelArchDTbf$ to $\rTelArchDTfb$, and from $\rAirArchDTbf$ to $\rAirArchDTfb$ minutes per year. The 30-day interval reliability $\Prob(\Always^{\le 720\mathrm{h}}\mathsf{up})$ falls from $\rBankArchRbf$ to $\rBankArchRfb$, $\rTelArchRbf$ to $\rTelArchRfb$, and $\rAirArchRbf$ to $\rAirArchRfb$ (\cref{fig:rel}). These multiples are the medians of the lognormal priors on $m$. The priors' dispersion (log-sd $0.40$--$0.45$) represents uncertainty about block rates.

\begin{table}[t]
\centering\footnotesize
\caption{CTMC reliability analysis of but-for and fallback architectures. $A$: steady-state availability; $f$: failure frequency (per year); MDT: mean down time (h); $R(30\,\mathrm{d})$: CSL interval reliability; last column: expected downtime (min/yr).}\label{tab:arch}
\begin{tabular}{@{}lrrrrr@{}}
\toprule
Block & $A$ & $f$ (1/yr) & MDT (h) & $R(30\,\mathrm{d})$ & min/yr\\
\midrule
\multicolumn{6}{l}{\textbf{Bank A (core banking)}: implied multiple $\hat m = 2.525$}\\
\quad but-for: RTPE cluster (4-of-6) & 0.99999986 & 0.0012 & 1.00 & 0.99990 & \\
\quad but-for: Ledger DB pair (1-of-2) & 0.99999645 & 0.0078 & 4.00 & 0.99936 & \\
\quad but-for: Core software, common mode & 0.99966450 & 0.9797 & 3.00 & 0.92261 & \\
\quad \textit{but-for system} & \textit{0.9996608} & \textit{0.989} & \textit{3.01} & \textit{0.9219} & \textit{178}\\
\quad fallback: Memo-post stand-in (2-of-2) & 0.99970004 & 0.8757 & 3.00 & 0.93053 & \\
\quad fallback: Batch scheduler (1-of-1) & 0.99986668 & 0.2920 & 4.00 & 0.97629 & \\
\quad fallback: Ledger DB pair (1-of-2) & 0.99999645 & 0.0078 & 4.00 & 0.99936 & \\
\quad fallback: Core software, common mode x1.35 & 0.99954712 & 1.3224 & 3.00 & 0.89696 & \\
\quad \textit{fallback system} & \textit{0.9991105} & \textit{2.497} & \textit{3.12} & \textit{0.8143} & \textit{467}\\
\midrule
\multicolumn{6}{l}{\textbf{Operator T (mobile network operator)}: implied multiple $\hat m = 1.975$}\\
\quad but-for: CHF pods (8-of-10) & 0.99999998 & 0.0004 & 0.50 & 0.99997 & \\
\quad but-for: Charging software, common mode & 0.99965765 & 1.9993 & 1.50 & 0.84842 & \\
\quad \textit{but-for system} & \textit{0.9996576} & \textit{2.000} & \textit{1.50} & \textit{0.8484} & \textit{180}\\
\quad fallback: Nchf-Gy IWF (3-of-4) & 0.99999569 & 0.0252 & 1.50 & 0.99794 & \\
\quad fallback: 4G OCS (2-of-3) & 0.99999401 & 0.0262 & 2.00 & 0.99785 & \\
\quad fallback: Charging software, common mode & 0.99965765 & 1.9993 & 1.50 & 0.84842 & \\
\quad fallback: IWF/OCS software, common mode & 0.99967476 & 1.8994 & 1.50 & 0.85542 & \\
\quad \textit{fallback system} & \textit{0.9993222} & \textit{3.949} & \textit{1.50} & \textit{0.7227} & \textit{356}\\
\midrule
\multicolumn{6}{l}{\textbf{Carrier P (airline PSS)}: implied multiple $\hat m = 2.432$}\\
\quad but-for: RAE nodes (3-of-4) & 0.99999700 & 0.0175 & 1.50 & 0.99857 & \\
\quad but-for: PSS software, common mode & 0.99991781 & 0.4800 & 1.50 & 0.96132 & \\
\quad \textit{but-for system} & \textit{0.9999148} & \textit{0.497} & \textit{1.50} & \textit{0.9599} & \textit{45}\\
\quad fallback: Legacy AVS (1-of-1) & 0.99993334 & 0.2920 & 2.00 & 0.97629 & \\
\quad fallback: Agent re-accommodation tool (2-of-2) & 0.99992500 & 0.4380 & 1.50 & 0.96464 & \\
\quad fallback: PSS software, common mode & 0.99991781 & 0.4800 & 1.50 & 0.96132 & \\
\quad \textit{fallback system} & \textit{0.9997762} & \textit{1.210} & \textit{1.62} & \textit{0.9053} & \textit{118}\\
\bottomrule
\end{tabular}
\end{table}

\begin{figure}[t]
\centering
\begin{tikzpicture}
\begin{groupplot}[group style={group size=3 by 1,horizontal sep=1.3cm},width=0.355\textwidth,height=4.4cm,xlabel={$t$ (days)},xmin=0,xmax=90,legend style={font=\scriptsize,at={(0.03,0.03)},anchor=south west}]
\nextgroupplot[title={Bank A},ylabel={$\Prob(\Always^{\le t}\mathsf{up})$}]
\addplot[blue!70!black,thick] coordinates {(0.0,1.00000) (2.5,0.99325) (5.0,0.98654) (7.5,0.97988) (10.0,0.97327) (12.5,0.96670) (15.0,0.96017) (17.5,0.95369) (20.0,0.94725) (22.5,0.94086) (25.0,0.93451) (27.5,0.92820) (30.0,0.92193) (32.5,0.91571) (35.0,0.90953) (37.5,0.90339) (40.0,0.89729) (42.5,0.89123) (45.0,0.88521) (47.5,0.87924) (50.0,0.87330) (52.5,0.86741) (55.0,0.86155) (57.5,0.85574) (60.0,0.84996) (62.5,0.84422) (65.0,0.83852) (67.5,0.83286) (70.0,0.82724) (72.5,0.82165) (75.0,0.81611) (77.5,0.81060) (80.0,0.80513) (82.5,0.79969) (85.0,0.79429) (87.5,0.78893) (90.0,0.78360)};
\addplot[red!70!black,thick,dashed] coordinates {(0.0,1.00000) (2.5,0.98303) (5.0,0.96635) (7.5,0.94995) (10.0,0.93383) (12.5,0.91799) (15.0,0.90241) (17.5,0.88709) (20.0,0.87204) (22.5,0.85724) (25.0,0.84270) (27.5,0.82840) (30.0,0.81434) (32.5,0.80052) (35.0,0.78694) (37.5,0.77358) (40.0,0.76046) (42.5,0.74755) (45.0,0.73487) (47.5,0.72240) (50.0,0.71014) (52.5,0.69809) (55.0,0.68624) (57.5,0.67460) (60.0,0.66315) (62.5,0.65190) (65.0,0.64083) (67.5,0.62996) (70.0,0.61927) (72.5,0.60876) (75.0,0.59843) (77.5,0.58828) (80.0,0.57829) (82.5,0.56848) (85.0,0.55883) (87.5,0.54935) (90.0,0.54003)};
\legend{but-for,fallback}
\nextgroupplot[title={Operator T}]
\addplot[blue!70!black,thick] coordinates {(0.0,1.00000) (2.5,0.98639) (5.0,0.97297) (7.5,0.95973) (10.0,0.94667) (12.5,0.93379) (15.0,0.92108) (17.5,0.90855) (20.0,0.89618) (22.5,0.88399) (25.0,0.87196) (27.5,0.86009) (30.0,0.84839) (32.5,0.83684) (35.0,0.82546) (37.5,0.81422) (40.0,0.80314) (42.5,0.79222) (45.0,0.78143) (47.5,0.77080) (50.0,0.76031) (52.5,0.74997) (55.0,0.73976) (57.5,0.72969) (60.0,0.71976) (62.5,0.70997) (65.0,0.70031) (67.5,0.69078) (70.0,0.68138) (72.5,0.67211) (75.0,0.66296) (77.5,0.65394) (80.0,0.64504) (82.5,0.63626) (85.0,0.62760) (87.5,0.61906) (90.0,0.61064)};
\addplot[red!70!black,thick,dashed] coordinates {(0.0,1.00000) (2.5,0.97330) (5.0,0.94731) (7.5,0.92202) (10.0,0.89740) (12.5,0.87344) (15.0,0.85012) (17.5,0.82742) (20.0,0.80533) (22.5,0.78382) (25.0,0.76289) (27.5,0.74253) (30.0,0.72270) (32.5,0.70340) (35.0,0.68462) (37.5,0.66634) (40.0,0.64855) (42.5,0.63123) (45.0,0.61438) (47.5,0.59797) (50.0,0.58201) (52.5,0.56647) (55.0,0.55134) (57.5,0.53662) (60.0,0.52229) (62.5,0.50835) (65.0,0.49478) (67.5,0.48156) (70.0,0.46871) (72.5,0.45619) (75.0,0.44401) (77.5,0.43216) (80.0,0.42062) (82.5,0.40939) (85.0,0.39846) (87.5,0.38782) (90.0,0.37746)};
\nextgroupplot[title={Carrier P}]
\addplot[blue!70!black,thick] coordinates {(0.0,1.00000) (2.5,0.99660) (5.0,0.99321) (7.5,0.98983) (10.0,0.98646) (12.5,0.98311) (15.0,0.97976) (17.5,0.97643) (20.0,0.97311) (22.5,0.96980) (25.0,0.96650) (27.5,0.96321) (30.0,0.95994) (32.5,0.95667) (35.0,0.95342) (37.5,0.95018) (40.0,0.94694) (42.5,0.94372) (45.0,0.94051) (47.5,0.93731) (50.0,0.93412) (52.5,0.93095) (55.0,0.92778) (57.5,0.92463) (60.0,0.92148) (62.5,0.91835) (65.0,0.91522) (67.5,0.91211) (70.0,0.90901) (72.5,0.90591) (75.0,0.90283) (77.5,0.89976) (80.0,0.89670) (82.5,0.89365) (85.0,0.89061) (87.5,0.88758) (90.0,0.88456)};
\addplot[red!70!black,thick,dashed] coordinates {(0.0,1.00000) (2.5,0.99175) (5.0,0.98356) (7.5,0.97544) (10.0,0.96739) (12.5,0.95941) (15.0,0.95149) (17.5,0.94364) (20.0,0.93585) (22.5,0.92812) (25.0,0.92046) (27.5,0.91287) (30.0,0.90533) (32.5,0.89786) (35.0,0.89045) (37.5,0.88310) (40.0,0.87581) (42.5,0.86858) (45.0,0.86142) (47.5,0.85431) (50.0,0.84726) (52.5,0.84026) (55.0,0.83333) (57.5,0.82645) (60.0,0.81963) (62.5,0.81286) (65.0,0.80615) (67.5,0.79950) (70.0,0.79290) (72.5,0.78636) (75.0,0.77987) (77.5,0.77343) (80.0,0.76705) (82.5,0.76072) (85.0,0.75444) (87.5,0.74821) (90.0,0.74204)};
\end{groupplot}
\end{tikzpicture}
\caption{CSL interval reliability $R(t)$ of the but-for and fallback architectures, computed by uniformization on the series-system CTMC with absorbing down states.}
\label{fig:rel}
\end{figure}

\subsection{RQ3: Loss distributions and loss-minimising responses}
\Cref{tab:strategies} compares strategies, and \cref{fig:cdf} shows their loss distributions. Degrade-and-wait ($S_0$) is worst in every case, so mitigation pays. The loss-minimising response depends on the platform. For Bank~A, substitution ($S_3$) completes within the restraint horizon with probability $\rBankSthreePTd\%$, which cuts $\E\tau$ from $\rBankSzeroTau$ to $\rBankSthreeTau$ years and expected loss from $\rBankSzeroMean$ to $\rBankSthreeMean$. For Operator~T, substitution under fail-open (\textsc{continue}) is best ($\rTelSthreecMean$). For Carrier~P, the crashed design-around ($S_2$, $\rAirStwoMean$) beats substitution ($\rAirSthreeMean$), because PSS integration is slow and expensive. CCFH policy matters even at fixed strategy. \textsc{terminate} converts leakage into denial-driven churn and goodwill credits, and its upper tail is heavier: $P_{95}$ is $\rTelSthreetPnf$ against $\rTelSthreecPnf$ under $S_3$.

\begin{table}[t]
\centering\footnotesize
\caption{Strategy comparison (USD m). $\tilde T_d$: median remediation time (yr); $\Prob(T_d<T_R)$ in \%; $\E\tau$: expected degraded period (yr). $^\star$: loss-minimising strategy.}\label{tab:strategies}
\setlength{\tabcolsep}{4pt}
\begin{tabular}{@{}lrrrrrrrr@{}}
\toprule
Strategy & $\tilde T_d$ & $\Prob(T_d<T_R)$ & $\E\tau$ & $\E L$ & $P_{50}$ & $P_{95}$ & $\CVaR_{0.95}$ & Plug-in\\
\midrule
\multicolumn{9}{l}{\textbf{Bank A (core banking)}}\\
$S_0$ degrade-and-wait & $\infty$ & 0.0 & 2.17 & 603.9 & 520.7 & 1,276.6 & 1,650.3 & 433.7\\
$S_1$ design-around (nominal) & 2.77 & 21.6 & 1.99 & 578.6 & 514.5 & 1,155.2 & 1,462.5 & 447.5\\
$S_2$ design-around (crashed) & 2.28 & 36.2 & 1.87 & 552.9 & 494.9 & 1,079.0 & 1,361.4 & 454.6\\
$S_3$ substitution$^\star$ & 1.59 & 69.8 & 1.50 & 463.8 & 416.2 & 870.2 & 1,104.7 & 359.1\\
\midrule
\multicolumn{9}{l}{\textbf{Operator T (mobile network operator)}}\\
$S_0$ degrade-and-wait / \textsc{continue} & $\infty$ & 0.0 & 2.16 & 623.5 & 574.3 & 1,132.8 & 1,351.3 & 527.7\\
$S_0$ degrade-and-wait / \textsc{terminate} & $\infty$ & 0.0 & 2.16 & 624.4 & 570.6 & 1,161.4 & 1,402.3 & 503.6\\
$S_1$ design-around (nominal) / \textsc{continue} & 2.00 & 47.9 & 1.75 & 528.5 & 509.8 & 838.2 & 958.9 & 504.7\\
$S_1$ design-around (nominal) / \textsc{terminate} & 2.00 & 47.9 & 1.75 & 530.0 & 504.8 & 871.3 & 1,015.9 & 482.2\\
$S_2$ design-around (crashed) / \textsc{continue} & 1.66 & 64.3 & 1.56 & 480.6 & 462.8 & 746.3 & 852.8 & 427.1\\
$S_2$ design-around (crashed) / \textsc{terminate} & 1.66 & 64.3 & 1.56 & 482.3 & 459.1 & 777.2 & 905.7 & 408.4\\
$S_3$ substitution / \textsc{continue}$^\star$ & 1.42 & 77.7 & 1.38 & 446.8 & 432.0 & 669.4 & 755.1 & 392.9\\
$S_3$ substitution / \textsc{terminate} & 1.42 & 77.7 & 1.38 & 448.6 & 428.8 & 699.9 & 806.8 & 376.8\\
\midrule
\multicolumn{9}{l}{\textbf{Carrier P (airline PSS)}}\\
$S_0$ degrade-and-wait & $\infty$ & 0.0 & 2.17 & 385.0 & 353.7 & 708.3 & 847.6 & 302.8\\
$S_1$ design-around (nominal) & 2.60 & 26.0 & 1.96 & 363.8 & 350.5 & 599.6 & 688.7 & 312.9\\
$S_2$ design-around (crashed)$^\star$ & 2.15 & 41.3 & 1.82 & 345.3 & 334.1 & 550.3 & 632.5 & 317.9\\
$S_3$ substitution & 2.16 & 42.5 & 1.81 & 363.5 & 354.3 & 562.6 & 636.2 & 340.7\\
\bottomrule
\end{tabular}
\end{table}

\begin{figure}[t]
\centering
\begin{tikzpicture}
\begin{groupplot}[group style={group size=3 by 1,horizontal sep=1.1cm},width=0.36\textwidth,height=4.6cm,xlabel={Loss (USD m)},ymin=0,ymax=1,xmin=0,legend style={font=\tiny,at={(0.97,0.03)},anchor=south east}]
\nextgroupplot[title={Bank A},ylabel={$F_L$}]
\addplot[gray,thick] coordinates {(119.74,0.0050) (152.83,0.0150) (173.29,0.0250) (188.68,0.0350) (201.41,0.0450) (212.96,0.0550) (223.63,0.0650) (233.35,0.0750) (242.25,0.0850) (250.79,0.0950) (258.65,0.1050) (266.55,0.1150) (274.10,0.1250) (281.76,0.1350) (289.17,0.1450) (296.11,0.1550) (302.90,0.1650) (309.53,0.1750) (316.40,0.1850) (323.15,0.1950) (329.64,0.2050) (336.10,0.2150) (342.57,0.2250) (348.91,0.2350) (355.32,0.2450) (361.71,0.2550) (367.92,0.2650) (374.20,0.2750) (380.55,0.2850) (386.74,0.2950) (392.86,0.3050) (399.43,0.3150) (405.88,0.3250) (411.92,0.3350) (418.00,0.3450) (424.40,0.3550) (430.80,0.3650) (437.05,0.3750) (443.33,0.3850) (449.75,0.3950) (456.12,0.4050) (462.61,0.4150) (469.29,0.4250) (476.10,0.4350) (482.87,0.4450) (489.37,0.4550) (496.55,0.4650) (503.34,0.4750) (510.32,0.4850) (517.24,0.4950) (524.29,0.5050) (531.55,0.5150) (539.09,0.5250) (546.59,0.5350) (554.34,0.5450) (562.58,0.5550) (570.22,0.5650) (578.25,0.5750) (586.24,0.5850) (594.43,0.5950) (602.93,0.6050) (611.87,0.6150) (620.78,0.6250) (629.85,0.6350) (638.98,0.6450) (648.92,0.6550) (658.77,0.6650) (668.71,0.6750) (678.89,0.6850) (689.24,0.6950) (700.26,0.7050) (711.40,0.7150) (722.77,0.7250) (734.95,0.7350) (747.43,0.7450) (760.30,0.7550) (773.36,0.7650) (786.97,0.7750) (801.45,0.7850) (816.65,0.7950) (832.37,0.8050) (849.08,0.8150) (867.44,0.8250) (886.94,0.8350) (907.31,0.8450) (928.68,0.8550) (951.83,0.8650) (977.26,0.8750) (1003.26,0.8850) (1032.50,0.8950) (1064.52,0.9050) (1100.43,0.9150) (1142.26,0.9250) (1188.37,0.9350) (1244.30,0.9450) (1311.35,0.9550) (1398.89,0.9650) (1520.14,0.9750) (1699.92,0.9850) (2122.57,0.9950)};
\addplot[orange!85!black,thick] coordinates {(126.83,0.0050) (161.19,0.0150) (182.53,0.0250) (198.63,0.0350) (211.55,0.0450) (222.99,0.0550) (233.91,0.0650) (243.78,0.0750) (252.80,0.0850) (261.39,0.0950) (269.33,0.1050) (277.07,0.1150) (284.45,0.1250) (291.87,0.1350) (299.20,0.1450) (306.15,0.1550) (312.61,0.1650) (319.24,0.1750) (325.61,0.1850) (331.94,0.1950) (338.41,0.2050) (344.55,0.2150) (350.68,0.2250) (356.64,0.2350) (362.56,0.2450) (368.53,0.2550) (374.49,0.2650) (380.28,0.2750) (386.34,0.2850) (392.37,0.2950) (398.27,0.3050) (404.02,0.3150) (409.72,0.3250) (415.53,0.3350) (421.25,0.3450) (426.99,0.3550) (433.12,0.3650) (438.83,0.3750) (444.55,0.3850) (450.42,0.3950) (456.18,0.4050) (462.30,0.4150) (468.20,0.4250) (473.99,0.4350) (480.14,0.4450) (486.16,0.4550) (492.16,0.4650) (498.43,0.4750) (504.84,0.4850) (511.33,0.4950) (517.69,0.5050) (524.48,0.5150) (530.90,0.5250) (537.52,0.5350) (544.39,0.5450) (551.33,0.5550) (558.19,0.5650) (565.09,0.5750) (572.15,0.5850) (579.40,0.5950) (587.05,0.6050) (594.50,0.6150) (602.24,0.6250) (609.87,0.6350) (617.76,0.6450) (626.54,0.6550) (634.75,0.6650) (643.47,0.6750) (651.76,0.6850) (660.84,0.6950) (670.00,0.7050) (679.89,0.7150) (689.50,0.7250) (699.27,0.7350) (710.13,0.7450) (721.42,0.7550) (733.06,0.7650) (744.93,0.7750) (757.21,0.7850) (770.29,0.7950) (783.82,0.8050) (797.80,0.8150) (812.42,0.8250) (827.97,0.8350) (845.22,0.8450) (863.59,0.8550) (881.96,0.8650) (902.41,0.8750) (924.68,0.8850) (949.01,0.8950) (976.59,0.9050) (1006.48,0.9150) (1041.70,0.9250) (1080.97,0.9350) (1127.38,0.9450) (1184.41,0.9550) (1254.68,0.9650) (1348.96,0.9750) (1498.22,0.9850) (1855.03,0.9950)};
\addplot[blue!70!black,thick] coordinates {(130.67,0.0050) (165.20,0.0150) (186.72,0.0250) (202.58,0.0350) (215.23,0.0450) (226.55,0.0550) (236.91,0.0650) (246.37,0.0750) (255.18,0.0850) (262.97,0.0950) (270.34,0.1050) (277.90,0.1150) (285.11,0.1250) (292.04,0.1350) (298.62,0.1450) (304.90,0.1550) (311.13,0.1650) (317.30,0.1750) (323.26,0.1850) (328.99,0.1950) (334.70,0.2050) (340.43,0.2150) (346.01,0.2250) (351.62,0.2350) (357.26,0.2450) (362.66,0.2550) (367.82,0.2650) (373.35,0.2750) (378.65,0.2850) (384.05,0.2950) (389.45,0.3050) (394.50,0.3150) (399.85,0.3250) (405.07,0.3350) (410.10,0.3450) (415.16,0.3550) (420.64,0.3650) (425.98,0.3750) (431.34,0.3850) (436.68,0.3950) (442.06,0.4050) (447.61,0.4150) (452.88,0.4250) (458.42,0.4350) (463.81,0.4450) (469.17,0.4550) (474.88,0.4650) (480.53,0.4750) (486.33,0.4850) (492.10,0.4950) (497.62,0.5050) (503.57,0.5150) (509.39,0.5250) (515.24,0.5350) (521.37,0.5450) (527.61,0.5550) (533.78,0.5650) (540.09,0.5750) (546.66,0.5850) (553.31,0.5950) (559.65,0.6050) (566.25,0.6150) (573.06,0.6250) (580.26,0.6350) (587.97,0.6450) (594.95,0.6550) (602.89,0.6650) (610.79,0.6750) (618.75,0.6850) (626.82,0.6950) (635.16,0.7050) (644.17,0.7150) (653.29,0.7250) (662.63,0.7350) (672.68,0.7450) (682.52,0.7550) (693.07,0.7650) (704.13,0.7750) (715.17,0.7850) (727.37,0.7950) (739.61,0.8050) (752.57,0.8150) (766.28,0.8250) (781.36,0.8350) (796.52,0.8450) (812.77,0.8550) (830.55,0.8650) (849.52,0.8750) (870.53,0.8850) (892.79,0.8950) (918.19,0.9050) (945.16,0.9150) (976.81,0.9250) (1011.27,0.9350) (1054.76,0.9450) (1106.43,0.9550) (1168.75,0.9650) (1256.53,0.9750) (1395.30,0.9850) (1725.28,0.9950)};
\addplot[red!75!black,very thick] coordinates {(140.05,0.0050) (170.51,0.0150) (187.31,0.0250) (198.76,0.0350) (208.53,0.0450) (216.60,0.0550) (223.89,0.0650) (231.07,0.0750) (237.47,0.0850) (243.50,0.0950) (249.05,0.1050) (254.58,0.1150) (259.75,0.1250) (264.82,0.1350) (269.81,0.1450) (274.57,0.1550) (279.08,0.1650) (283.53,0.1750) (287.93,0.1850) (292.15,0.1950) (296.34,0.2050) (300.60,0.2150) (304.77,0.2250) (308.83,0.2350) (312.91,0.2450) (316.99,0.2550) (320.94,0.2650) (324.83,0.2750) (328.73,0.2850) (332.64,0.2950) (336.63,0.3050) (340.60,0.3150) (344.29,0.3250) (348.17,0.3350) (352.10,0.3450) (356.08,0.3550) (360.02,0.3650) (364.08,0.3750) (368.13,0.3850) (372.25,0.3950) (376.30,0.4050) (380.28,0.4150) (384.47,0.4250) (388.63,0.4350) (392.75,0.4450) (396.92,0.4550) (401.21,0.4650) (405.42,0.4750) (409.68,0.4850) (414.01,0.4950) (418.41,0.5050) (422.90,0.5150) (427.40,0.5250) (431.83,0.5350) (436.35,0.5450) (441.19,0.5550) (445.87,0.5650) (450.62,0.5750) (455.66,0.5850) (460.68,0.5950) (465.82,0.6050) (471.06,0.6150) (476.45,0.6250) (482.15,0.6350) (487.92,0.6450) (493.56,0.6550) (499.55,0.6650) (505.61,0.6750) (512.10,0.6850) (518.46,0.6950) (524.93,0.7050) (531.69,0.7150) (538.72,0.7250) (545.87,0.7350) (553.44,0.7450) (561.20,0.7550) (569.24,0.7650) (577.58,0.7750) (586.43,0.7850) (595.58,0.7950) (605.11,0.8050) (615.61,0.8150) (626.12,0.8250) (637.11,0.8350) (649.10,0.8450) (661.80,0.8550) (676.09,0.8650) (690.12,0.8750) (706.65,0.8850) (724.49,0.8950) (744.38,0.9050) (766.24,0.9150) (791.22,0.9250) (818.43,0.9350) (851.21,0.9450) (891.27,0.9550) (943.99,0.9650) (1015.18,0.9750) (1129.47,0.9850) (1394.05,0.9950)};
\legend{$S_0$,$S_1$,$S_2$,$S_3$}
\nextgroupplot[title={Operator T}]
\addplot[gray,thick] coordinates {(184.01,0.0050) (221.39,0.0150) (242.72,0.0250) (259.41,0.0350) (273.44,0.0450) (285.50,0.0550) (296.41,0.0650) (306.44,0.0750) (315.99,0.0850) (324.27,0.0950) (332.60,0.1050) (340.46,0.1150) (348.12,0.1250) (355.64,0.1350) (362.79,0.1450) (369.72,0.1550) (376.29,0.1650) (383.15,0.1750) (389.62,0.1850) (395.67,0.1950) (401.86,0.2050) (407.89,0.2150) (413.82,0.2250) (419.73,0.2350) (425.76,0.2450) (431.92,0.2550) (437.77,0.2650) (443.63,0.2750) (449.51,0.2850) (455.11,0.2950) (460.88,0.3050) (466.52,0.3150) (472.14,0.3250) (477.90,0.3350) (483.75,0.3450) (489.29,0.3550) (494.98,0.3650) (500.81,0.3750) (506.61,0.3850) (512.50,0.3950) (518.46,0.4050) (524.34,0.4150) (529.99,0.4250) (535.70,0.4350) (541.60,0.4450) (547.44,0.4550) (553.39,0.4650) (559.30,0.4750) (565.31,0.4850) (571.28,0.4950) (577.62,0.5050) (583.83,0.5150) (590.04,0.5250) (596.16,0.5350) (602.70,0.5450) (609.27,0.5550) (615.93,0.5650) (622.82,0.5750) (629.65,0.5850) (636.74,0.5950) (643.87,0.6050) (650.66,0.6150) (657.90,0.6250) (665.11,0.6350) (672.57,0.6450) (680.17,0.6550) (687.93,0.6650) (695.81,0.6750) (704.41,0.6850) (713.23,0.6950) (722.20,0.7050) (730.95,0.7150) (740.21,0.7250) (749.81,0.7350) (759.25,0.7450) (769.01,0.7550) (779.26,0.7650) (789.89,0.7750) (801.30,0.7850) (812.60,0.7950) (824.77,0.8050) (837.37,0.8150) (850.72,0.8250) (864.52,0.8350) (878.93,0.8450) (894.55,0.8550) (911.16,0.8650) (929.29,0.8750) (947.44,0.8850) (967.86,0.8950) (990.29,0.9050) (1016.42,0.9150) (1043.60,0.9250) (1075.54,0.9350) (1111.70,0.9450) (1156.56,0.9550) (1213.50,0.9650) (1286.20,0.9750) (1399.54,0.9850) (1631.21,0.9950)};
\addplot[orange!85!black,thick] coordinates {(188.42,0.0050) (226.27,0.0150) (246.97,0.0250) (263.52,0.0350) (276.96,0.0450) (288.17,0.0550) (298.07,0.0650) (307.43,0.0750) (316.04,0.0850) (323.67,0.0950) (330.94,0.1050) (337.78,0.1150) (344.22,0.1250) (350.47,0.1350) (356.38,0.1450) (362.17,0.1550) (367.62,0.1650) (372.93,0.1750) (378.07,0.1850) (383.07,0.1950) (387.97,0.2050) (392.73,0.2150) (397.48,0.2250) (401.91,0.2350) (406.48,0.2450) (410.87,0.2550) (415.17,0.2650) (419.64,0.2750) (424.01,0.2850) (428.14,0.2950) (432.22,0.3050) (436.48,0.3150) (440.58,0.3250) (444.65,0.3350) (448.71,0.3450) (452.70,0.3550) (456.65,0.3650) (460.57,0.3750) (464.49,0.3850) (468.30,0.3950) (472.35,0.4050) (476.21,0.4150) (480.14,0.4250) (484.14,0.4350) (488.02,0.4450) (491.97,0.4550) (495.83,0.4650) (499.74,0.4750) (503.62,0.4850) (507.65,0.4950) (511.81,0.5050) (515.86,0.5150) (519.92,0.5250) (524.00,0.5350) (528.00,0.5450) (532.31,0.5550) (536.42,0.5650) (540.42,0.5750) (544.68,0.5850) (549.02,0.5950) (553.38,0.6050) (557.74,0.6150) (562.19,0.6250) (566.80,0.6350) (571.33,0.6450) (576.00,0.6550) (580.91,0.6650) (585.81,0.6750) (590.83,0.6850) (595.80,0.6950) (600.93,0.7050) (606.17,0.7150) (611.52,0.7250) (617.16,0.7350) (622.98,0.7450) (628.96,0.7550) (634.99,0.7650) (641.36,0.7750) (647.88,0.7850) (654.77,0.7950) (661.92,0.8050) (669.11,0.8150) (676.65,0.8250) (684.53,0.8350) (692.98,0.8450) (701.87,0.8550) (711.45,0.8650) (721.69,0.8750) (732.65,0.8850) (744.17,0.8950) (757.31,0.9050) (771.34,0.9150) (787.27,0.9250) (805.19,0.9350) (826.48,0.9450) (851.42,0.9550) (882.16,0.9650) (922.30,0.9750) (980.97,0.9850) (1113.07,0.9950)};
\addplot[blue!70!black,thick] coordinates {(189.35,0.0050) (224.88,0.0150) (244.37,0.0250) (258.81,0.0350) (270.17,0.0450) (279.87,0.0550) (288.55,0.0650) (296.04,0.0750) (302.90,0.0850) (309.50,0.0950) (315.55,0.1050) (321.24,0.1150) (326.60,0.1250) (331.65,0.1350) (336.45,0.1450) (341.00,0.1550) (345.56,0.1650) (350.14,0.1750) (354.27,0.1850) (358.36,0.1950) (362.59,0.2050) (366.44,0.2150) (370.03,0.2250) (373.73,0.2350) (377.49,0.2450) (381.14,0.2550) (384.67,0.2650) (388.14,0.2750) (391.66,0.2850) (395.12,0.2950) (398.60,0.3050) (401.96,0.3150) (405.31,0.3250) (408.58,0.3350) (411.89,0.3450) (415.12,0.3550) (418.39,0.3650) (421.71,0.3750) (424.99,0.3850) (428.23,0.3950) (431.58,0.4050) (434.76,0.4150) (438.01,0.4250) (441.14,0.4350) (444.41,0.4450) (447.80,0.4550) (451.15,0.4650) (454.42,0.4750) (457.70,0.4850) (461.05,0.4950) (464.58,0.5050) (467.95,0.5150) (471.35,0.5250) (474.80,0.5350) (478.27,0.5450) (481.71,0.5550) (485.41,0.5650) (489.00,0.5750) (492.64,0.5850) (496.29,0.5950) (499.97,0.6050) (503.75,0.6150) (507.44,0.6250) (511.22,0.6350) (514.93,0.6450) (518.85,0.6550) (523.05,0.6650) (527.23,0.6750) (531.50,0.6850) (535.82,0.6950) (540.21,0.7050) (544.74,0.7150) (549.52,0.7250) (554.28,0.7350) (559.28,0.7450) (564.34,0.7550) (569.64,0.7650) (575.05,0.7750) (580.61,0.7850) (586.41,0.7950) (592.44,0.8050) (598.69,0.8150) (605.12,0.8250) (611.86,0.8350) (619.18,0.8450) (627.05,0.8550) (635.37,0.8650) (644.40,0.8750) (654.01,0.8850) (664.51,0.8950) (675.61,0.9050) (688.48,0.9150) (702.22,0.9250) (717.59,0.9350) (735.55,0.9450) (758.11,0.9550) (785.09,0.9650) (819.97,0.9750) (875.08,0.9850) (990.82,0.9950)};
\addplot[red!75!black,very thick] coordinates {(202.44,0.0050) (233.31,0.0150) (249.08,0.0250) (260.60,0.0350) (269.79,0.0450) (277.69,0.0550) (284.72,0.0650) (291.11,0.0750) (297.12,0.0850) (302.37,0.0950) (307.42,0.1050) (312.08,0.1150) (316.40,0.1250) (320.65,0.1350) (324.62,0.1450) (328.52,0.1550) (332.35,0.1650) (336.25,0.1750) (339.81,0.1850) (343.26,0.1950) (346.53,0.2050) (349.74,0.2150) (353.00,0.2250) (356.15,0.2350) (359.26,0.2450) (362.39,0.2550) (365.43,0.2650) (368.45,0.2750) (371.53,0.2850) (374.43,0.2950) (377.40,0.3050) (380.30,0.3150) (383.09,0.3250) (385.88,0.3350) (388.67,0.3450) (391.49,0.3550) (394.35,0.3650) (397.20,0.3750) (399.96,0.3850) (402.68,0.3950) (405.39,0.4050) (408.21,0.4150) (411.05,0.4250) (413.77,0.4350) (416.53,0.4450) (419.30,0.4550) (422.01,0.4650) (424.83,0.4750) (427.63,0.4850) (430.49,0.4950) (433.40,0.5050) (436.28,0.5150) (439.16,0.5250) (442.23,0.5350) (445.21,0.5450) (448.16,0.5550) (451.13,0.5650) (454.24,0.5750) (457.47,0.5850) (460.48,0.5950) (463.60,0.6050) (466.71,0.6150) (469.87,0.6250) (473.27,0.6350) (476.56,0.6450) (480.04,0.6550) (483.50,0.6650) (487.16,0.6750) (490.73,0.6850) (494.30,0.6950) (498.17,0.7050) (501.95,0.7150) (505.88,0.7250) (509.96,0.7350) (514.16,0.7450) (518.37,0.7550) (522.86,0.7650) (527.67,0.7750) (532.38,0.7850) (537.43,0.7950) (542.49,0.8050) (547.90,0.8150) (553.39,0.8250) (559.38,0.8350) (565.52,0.8450) (571.95,0.8550) (578.66,0.8650) (586.25,0.8750) (594.21,0.8850) (602.52,0.8950) (611.76,0.9050) (621.96,0.9150) (633.11,0.9250) (646.09,0.9350) (660.90,0.9450) (678.57,0.9550) (700.28,0.9650) (729.09,0.9750) (771.77,0.9850) (864.41,0.9950)};
\addplot[red!75!black,thick,dashed] coordinates {(194.26,0.0050) (223.70,0.0150) (239.35,0.0250) (250.44,0.0350) (259.93,0.0450) (267.67,0.0550) (274.73,0.0650) (281.11,0.0750) (287.00,0.0850) (292.55,0.0950) (297.77,0.1050) (302.50,0.1150) (307.10,0.1250) (311.39,0.1350) (315.59,0.1450) (319.47,0.1550) (323.27,0.1650) (327.08,0.1750) (330.63,0.1850) (334.17,0.1950) (337.76,0.2050) (341.36,0.2150) (344.77,0.2250) (348.02,0.2350) (351.30,0.2450) (354.50,0.2550) (357.67,0.2650) (361.00,0.2750) (364.05,0.2850) (367.08,0.2950) (370.08,0.3050) (373.13,0.3150) (376.08,0.3250) (379.06,0.3350) (382.14,0.3450) (385.04,0.3550) (388.07,0.3650) (391.03,0.3750) (394.14,0.3850) (397.16,0.3950) (400.10,0.4050) (403.10,0.4150) (406.12,0.4250) (409.10,0.4350) (411.99,0.4450) (414.98,0.4550) (418.02,0.4650) (421.09,0.4750) (424.15,0.4850) (427.25,0.4950) (430.26,0.5050) (433.38,0.5150) (436.52,0.5250) (439.76,0.5350) (442.88,0.5450) (446.18,0.5550) (449.45,0.5650) (452.78,0.5750) (456.08,0.5850) (459.53,0.5950) (462.99,0.6050) (466.54,0.6150) (470.21,0.6250) (473.82,0.6350) (477.53,0.6450) (481.26,0.6550) (485.02,0.6650) (488.90,0.6750) (492.91,0.6850) (497.05,0.6950) (501.37,0.7050) (505.71,0.7150) (510.04,0.7250) (514.71,0.7350) (519.40,0.7450) (524.36,0.7550) (529.15,0.7650) (534.48,0.7750) (539.81,0.7850) (545.35,0.7950) (551.13,0.8050) (557.37,0.8150) (563.52,0.8250) (570.57,0.8350) (577.67,0.8450) (585.23,0.8550) (593.13,0.8650) (601.90,0.8750) (610.80,0.8850) (620.70,0.8950) (631.66,0.9050) (643.85,0.9150) (657.26,0.9250) (672.53,0.9350) (690.09,0.9450) (711.40,0.9550) (737.24,0.9650) (772.19,0.9750) (824.52,0.9850) (945.32,0.9950)};
\legend{$S_0$/C,$S_1$/C,$S_2$/C,$S_3$/C,$S_3$/T}
\nextgroupplot[title={Carrier P}]
\addplot[gray,thick] coordinates {(104.78,0.0050) (128.83,0.0150) (142.58,0.0250) (153.39,0.0350) (162.61,0.0450) (170.24,0.0550) (177.23,0.0650) (183.78,0.0750) (189.84,0.0850) (195.32,0.0950) (200.59,0.1050) (205.70,0.1150) (210.53,0.1250) (215.18,0.1350) (219.53,0.1450) (223.95,0.1550) (228.07,0.1650) (232.28,0.1750) (236.23,0.1850) (240.22,0.1950) (244.25,0.2050) (248.23,0.2150) (252.04,0.2250) (255.93,0.2350) (259.79,0.2450) (263.66,0.2550) (267.27,0.2650) (270.91,0.2750) (274.69,0.2850) (278.27,0.2950) (281.72,0.3050) (285.44,0.3150) (289.01,0.3250) (292.51,0.3350) (296.07,0.3450) (299.68,0.3550) (303.33,0.3650) (306.88,0.3750) (310.50,0.3850) (314.07,0.3950) (317.66,0.4050) (321.41,0.4150) (325.02,0.4250) (328.71,0.4350) (332.40,0.4450) (336.14,0.4550) (339.96,0.4650) (343.94,0.4750) (347.81,0.4850) (351.69,0.4950) (355.69,0.5050) (359.58,0.5150) (363.49,0.5250) (367.54,0.5350) (371.62,0.5450) (375.88,0.5550) (380.14,0.5650) (384.63,0.5750) (389.06,0.5850) (393.63,0.5950) (398.02,0.6050) (402.62,0.6150) (407.18,0.6250) (411.96,0.6350) (416.63,0.6450) (421.59,0.6550) (426.39,0.6650) (431.72,0.6750) (437.24,0.6850) (442.50,0.6950) (448.08,0.7050) (453.69,0.7150) (459.21,0.7250) (465.34,0.7350) (471.44,0.7450) (477.90,0.7550) (484.31,0.7650) (491.01,0.7750) (497.95,0.7850) (505.55,0.7950) (512.94,0.8050) (520.73,0.8150) (529.24,0.8250) (537.80,0.8350) (547.15,0.8450) (556.80,0.8550) (567.63,0.8650) (578.85,0.8750) (590.68,0.8850) (603.80,0.8950) (617.66,0.9050) (634.16,0.9150) (652.32,0.9250) (671.53,0.9350) (695.08,0.9450) (722.87,0.9550) (757.92,0.9650) (805.10,0.9750) (875.59,0.9850) (1025.92,0.9950)};
\addplot[orange!85!black,thick] coordinates {(109.65,0.0050) (133.70,0.0150) (148.38,0.0250) (159.63,0.0350) (169.15,0.0450) (177.28,0.0550) (184.35,0.0650) (190.63,0.0750) (196.64,0.0850) (202.49,0.0950) (207.91,0.1050) (213.10,0.1150) (217.93,0.1250) (222.37,0.1350) (226.75,0.1450) (231.03,0.1550) (235.24,0.1650) (239.50,0.1750) (243.42,0.1850) (247.50,0.1950) (251.33,0.2050) (255.21,0.2150) (259.09,0.2250) (262.76,0.2350) (266.40,0.2450) (270.04,0.2550) (273.54,0.2650) (277.07,0.2750) (280.63,0.2850) (283.98,0.2950) (287.45,0.3050) (290.70,0.3150) (294.17,0.3250) (297.46,0.3350) (300.73,0.3450) (303.97,0.3550) (307.27,0.3650) (310.33,0.3750) (313.59,0.3850) (316.74,0.3950) (319.96,0.4050) (323.20,0.4150) (326.35,0.4250) (329.62,0.4350) (332.74,0.4450) (335.99,0.4550) (339.21,0.4650) (342.30,0.4750) (345.57,0.4850) (348.86,0.4950) (352.15,0.5050) (355.31,0.5150) (358.52,0.5250) (361.79,0.5350) (365.19,0.5450) (368.50,0.5550) (371.89,0.5650) (375.19,0.5750) (378.55,0.5850) (382.02,0.5950) (385.51,0.6050) (389.17,0.6150) (392.80,0.6250) (396.54,0.6350) (400.10,0.6450) (403.68,0.6550) (407.47,0.6650) (411.26,0.6750) (415.14,0.6850) (419.30,0.6950) (423.42,0.7050) (427.72,0.7150) (432.06,0.7250) (436.42,0.7350) (440.79,0.7450) (445.28,0.7550) (449.92,0.7650) (454.80,0.7750) (459.55,0.7850) (464.64,0.7950) (470.04,0.8050) (475.57,0.8150) (481.33,0.8250) (487.37,0.8350) (493.87,0.8450) (500.70,0.8550) (507.55,0.8650) (515.15,0.8750) (523.18,0.8850) (532.14,0.8950) (541.55,0.9050) (551.84,0.9150) (563.49,0.9250) (576.20,0.9350) (591.05,0.9450) (608.26,0.9550) (630.33,0.9650) (659.63,0.9750) (704.44,0.9850) (798.37,0.9950)};
\addplot[blue!70!black,very thick] coordinates {(112.35,0.0050) (136.44,0.0150) (151.03,0.0250) (162.24,0.0350) (171.76,0.0450) (179.82,0.0550) (186.68,0.0650) (193.14,0.0750) (199.16,0.0850) (204.74,0.0950) (209.91,0.1050) (214.75,0.1150) (219.25,0.1250) (223.63,0.1350) (227.56,0.1450) (231.52,0.1550) (235.44,0.1650) (239.31,0.1750) (242.99,0.1850) (246.48,0.1950) (249.93,0.2050) (253.38,0.2150) (256.68,0.2250) (259.90,0.2350) (263.21,0.2450) (266.29,0.2550) (269.42,0.2650) (272.41,0.2750) (275.25,0.2850) (278.11,0.2950) (281.06,0.3050) (283.94,0.3150) (286.63,0.3250) (289.43,0.3350) (292.26,0.3450) (295.01,0.3550) (297.73,0.3650) (300.54,0.3750) (303.15,0.3850) (305.84,0.3950) (308.44,0.4050) (311.12,0.4150) (313.83,0.4250) (316.55,0.4350) (319.28,0.4450) (322.07,0.4550) (324.72,0.4650) (327.43,0.4750) (330.04,0.4850) (332.80,0.4950) (335.55,0.5050) (338.25,0.5150) (340.99,0.5250) (343.68,0.5350) (346.47,0.5450) (349.29,0.5550) (352.07,0.5650) (354.85,0.5750) (357.78,0.5850) (360.75,0.5950) (363.66,0.6050) (366.71,0.6150) (369.71,0.6250) (372.77,0.6350) (375.89,0.6450) (378.99,0.6550) (382.23,0.6650) (385.37,0.6750) (388.84,0.6850) (392.19,0.6950) (395.61,0.7050) (399.07,0.7150) (402.61,0.7250) (406.36,0.7350) (410.18,0.7450) (414.25,0.7550) (418.41,0.7650) (422.61,0.7750) (426.89,0.7850) (431.41,0.7950) (436.03,0.8050) (440.89,0.8150) (445.81,0.8250) (451.21,0.8350) (456.92,0.8450) (462.88,0.8550) (469.05,0.8650) (475.60,0.8750) (482.79,0.8850) (490.36,0.8950) (498.95,0.9050) (507.94,0.9150) (518.12,0.9250) (529.89,0.9350) (542.84,0.9450) (559.01,0.9550) (578.22,0.9650) (606.27,0.9750) (646.51,0.9850) (732.96,0.9950)};
\addplot[red!75!black,thick] coordinates {(129.09,0.0050) (153.48,0.0150) (168.24,0.0250) (179.89,0.0350) (189.55,0.0450) (197.77,0.0550) (205.14,0.0650) (211.66,0.0750) (217.52,0.0850) (223.13,0.0950) (228.35,0.1050) (233.21,0.1150) (237.76,0.1250) (242.24,0.1350) (246.52,0.1450) (250.58,0.1550) (254.61,0.1650) (258.48,0.1750) (262.13,0.1850) (265.79,0.1950) (269.29,0.2050) (272.82,0.2150) (276.20,0.2250) (279.50,0.2350) (282.70,0.2450) (285.95,0.2550) (289.05,0.2650) (292.12,0.2750) (295.01,0.2850) (298.07,0.2950) (300.95,0.3050) (303.81,0.3150) (306.60,0.3250) (309.39,0.3350) (312.19,0.3450) (314.89,0.3550) (317.67,0.3650) (320.46,0.3750) (323.24,0.3850) (325.96,0.3950) (328.69,0.4050) (331.33,0.4150) (334.00,0.4250) (336.61,0.4350) (339.32,0.4450) (341.91,0.4550) (344.66,0.4650) (347.35,0.4750) (350.16,0.4850) (352.86,0.4950) (355.59,0.5050) (358.27,0.5150) (361.03,0.5250) (363.78,0.5350) (366.58,0.5450) (369.36,0.5550) (372.24,0.5650) (375.11,0.5750) (377.90,0.5850) (380.75,0.5950) (383.75,0.6050) (386.77,0.6150) (389.80,0.6250) (392.84,0.6350) (395.94,0.6450) (399.08,0.6550) (402.31,0.6650) (405.55,0.6750) (408.82,0.6850) (411.96,0.6950) (415.33,0.7050) (418.85,0.7150) (422.45,0.7250) (426.01,0.7350) (429.80,0.7450) (433.59,0.7550) (437.60,0.7650) (441.69,0.7750) (445.73,0.7850) (450.02,0.7950) (454.67,0.8050) (459.12,0.8150) (463.92,0.8250) (468.97,0.8350) (474.27,0.8450) (479.96,0.8550) (485.79,0.8650) (492.21,0.8750) (499.00,0.8850) (506.52,0.8950) (514.23,0.9050) (522.88,0.9150) (532.66,0.9250) (543.48,0.9350) (555.53,0.9450) (570.49,0.9550) (588.10,0.9650) (611.69,0.9750) (646.67,0.9850) (727.22,0.9950)};
\legend{$S_0$,$S_1$,$S_2$,$S_3$}
\end{groupplot}
\end{tikzpicture}
\caption{Loss distribution functions by strategy (C/T: CCFH \textsc{continue}/\textsc{terminate}). Common random numbers make the curves directly comparable.}
\label{fig:cdf}
\end{figure}

\Cref{tab:summary} and \cref{fig:hist} describe the mitigated loss $L(s^\star)$. All three distributions are right-skewed. Bank~A's is extreme (skewness $\rBankSkew$), because the customer-response channels are convex in outage hours and in the hazard parameters. Its $P_{99}$ is $\rBankPnn$, compared with a mean of $\rBankMean$. Operator~T's and Carrier~P's are moderately skewed ($\rTelSkew$, $\rAirSkew$). The mean exceeds the median in every case. By \cref{thm:bond}(c), this already implies that posting $\E L$ is not optimal whenever $\alpha^\star>F_L(\E L)$. Here $F_L(\E L)=\rBankFatEL$, $\rTelFatEL$ and $\rAirFatEL$.

\begin{table}[t]
\centering\footnotesize
\caption{Mitigated loss $L(s^\star)$ (USD m, $n=\rN$). Quantile CIs are distribution-free (\cref{prop:qci}).}\label{tab:summary}
\begin{tabular}{@{}lrrr@{}}
\toprule
Statistic & Bank A ($S_3$) & Operator T ($S_3$/\textsc{continue}) & Carrier P ($S_2$)\\
\midrule
$\mathbb{E}[L]$ & 463.8 & 446.8 & 345.3\\
95\% CI of mean & [462.8, 464.8] & [446.3, 447.4] & [344.8, 345.8]\\
Std.\ deviation & 228.5 & 122.9 & 117.0\\
Skewness & 5.17 & 0.87 & 1.05\\
$P_{5}$ & 212.8 & 274.0 & 176.1\\
$P_{50}$ & 416.2 & 432.0 & 334.1\\
$P_{90}$ & 734.2 & 607.1 & 494.7\\
$P_{95}$ [95\% CI] & 870.2 [866.1, 874.0] & 669.4 [667.7, 671.3] & 550.3 [548.9, 551.8]\\
$P_{99}$ [95\% CI] & 1,226.0 [1,215.7, 1,236.7] & 805.6 [801.6, 809.8] & 679.3 [675.9, 683.0]\\
$\mathrm{CVaR}_{0.95}$ & 1,104.7 & 755.1 & 632.5\\
$\mathrm{CVaR}_{0.99}$ & 1,536.6 & 892.9 & 769.8\\
Plug-in estimate & 359.1 & 392.9 & 317.9\\
Percentile rank of plug-in & 36.3\% & 36.0\% & 44.0\%\\
\bottomrule
\end{tabular}
\end{table}

\begin{figure}[t]
\centering
\begin{tikzpicture}
\begin{groupplot}[group style={group size=3 by 1,horizontal sep=1.25cm},width=0.36\textwidth,height=4.6cm,xlabel={Loss (USD m)},ymin=0,scaled y ticks=false,yticklabel style={/pgf/number format/fixed,/pgf/number format/precision=4},xmin=0]
\nextgroupplot[title={Bank A ($S_3$)},ylabel={Density (1/USD m)},xmax=\rBankHistXmax,ymax=\rBankHistMax]
\addplot[const plot mark mid,fill=blue!18,draw=blue!60!black] coordinates {(11.62,0.000000) (34.85,0.000000) (58.09,0.000002) (81.32,0.000011) (104.55,0.000054) (127.79,0.000144) (151.02,0.000282) (174.26,0.000544) (197.49,0.000940) (220.73,0.001350) (243.96,0.001726) (267.19,0.002054) (290.43,0.002334) (313.66,0.002506) (336.90,0.002581) (360.13,0.002503) (383.36,0.002445) (406.60,0.002349) (429.83,0.002208) (453.07,0.002044) (476.30,0.001823) (499.54,0.001664) (522.77,0.001536) (546.00,0.001363) (569.24,0.001227) (592.47,0.001084) (615.71,0.000961) (638.94,0.000865) (662.18,0.000749) (685.41,0.000690) (708.64,0.000584) (731.88,0.000505) (755.11,0.000463) (778.35,0.000401) (801.58,0.000369) (824.82,0.000334) (848.05,0.000283) (871.28,0.000248) (894.52,0.000221) (917.75,0.000192) (940.99,0.000164) (964.22,0.000161) (987.46,0.000131) (1010.69,0.000114) (1033.92,0.000104) (1057.16,0.000093) (1080.39,0.000087) (1103.63,0.000075) (1126.86,0.000065) (1150.09,0.000062) (1173.33,0.000051) (1196.56,0.000045) (1219.80,0.000040) (1243.03,0.000040) (1266.27,0.000031) (1289.50,0.000034) (1312.73,0.000030) (1335.97,0.000027) (1359.20,0.000024) (1382.44,0.000022)} \closedcycle;
\draw[black,thick] (axis cs:\rBankMeanNum,0) -- (axis cs:\rBankMeanNum,\rBankHistMax);
\draw[green!45!black,thick,dashed] (axis cs:\rBankPlugNum,0) -- (axis cs:\rBankPlugNum,\rBankHistMax);
\draw[red!70!black,thick] (axis cs:\rBankBondNum,0) -- (axis cs:\rBankBondNum,\rBankHistMax);
\draw[red!70!black,dotted,thick] (axis cs:\rBankPnnNum,0) -- (axis cs:\rBankPnnNum,\rBankHistMax);
\nextgroupplot[title={Operator T ($S_3$/C)},xmax=\rTelHistXmax,ymax=\rTelHistMax]
\addplot[const plot mark mid,fill=blue!18,draw=blue!60!black] coordinates {(7.20,0.000000) (21.61,0.000000) (36.02,0.000000) (50.42,0.000000) (64.83,0.000000) (79.24,0.000000) (93.64,0.000000) (108.05,0.000002) (122.46,0.000004) (136.87,0.000012) (151.27,0.000031) (165.68,0.000051) (180.09,0.000091) (194.49,0.000149) (208.90,0.000233) (223.31,0.000376) (237.71,0.000572) (252.12,0.000845) (266.53,0.001098) (280.93,0.001436) (295.34,0.001742) (309.75,0.002161) (324.16,0.002535) (338.56,0.002793) (352.97,0.003164) (367.38,0.003322) (381.78,0.003523) (396.19,0.003599) (410.60,0.003634) (425.00,0.003546) (439.41,0.003435) (453.82,0.003281) (468.22,0.003124) (482.63,0.002844) (497.04,0.002687) (511.44,0.002419) (525.85,0.002144) (540.26,0.001947) (554.67,0.001737) (569.07,0.001562) (583.48,0.001323) (597.89,0.001215) (612.29,0.001033) (626.70,0.000912) (641.11,0.000764) (655.51,0.000658) (669.92,0.000564) (684.33,0.000488) (698.73,0.000422) (713.14,0.000358) (727.55,0.000289) (741.96,0.000265) (756.36,0.000210) (770.77,0.000192) (785.18,0.000158) (799.58,0.000125) (813.99,0.000102) (828.40,0.000094) (842.80,0.000077) (857.21,0.000066)} \closedcycle;
\draw[black,thick] (axis cs:\rTelMeanNum,0) -- (axis cs:\rTelMeanNum,\rTelHistMax);
\draw[green!45!black,thick,dashed] (axis cs:\rTelPlugNum,0) -- (axis cs:\rTelPlugNum,\rTelHistMax);
\draw[red!70!black,thick] (axis cs:\rTelBondNum,0) -- (axis cs:\rTelBondNum,\rTelHistMax);
\draw[red!70!black,dotted,thick] (axis cs:\rTelPnnNum,0) -- (axis cs:\rTelPnnNum,\rTelHistMax);
\nextgroupplot[title={Carrier P ($S_2$)},xmax=\rAirHistXmax,ymax=\rAirHistMax]
\addplot[const plot mark mid,fill=blue!18,draw=blue!60!black] coordinates {(6.11,0.000000) (18.32,0.000000) (30.54,0.000000) (42.76,0.000001) (54.97,0.000005) (67.19,0.000020) (79.40,0.000050) (91.62,0.000103) (103.84,0.000190) (116.05,0.000292) (128.27,0.000478) (140.48,0.000640) (152.70,0.000831) (164.92,0.001026) (177.13,0.001289) (189.35,0.001549) (201.57,0.001798) (213.78,0.002121) (226.00,0.002473) (238.21,0.002662) (250.43,0.002944) (262.65,0.003152) (274.86,0.003443) (287.08,0.003578) (299.29,0.003710) (311.51,0.003770) (323.73,0.003717) (335.94,0.003674) (348.16,0.003617) (360.37,0.003398) (372.59,0.003278) (384.81,0.003056) (397.02,0.002900) (409.24,0.002560) (421.45,0.002385) (433.67,0.002170) (445.89,0.001935) (458.10,0.001705) (470.32,0.001563) (482.53,0.001370) (494.75,0.001164) (506.97,0.001088) (519.18,0.000898) (531.40,0.000808) (543.62,0.000713) (555.83,0.000576) (568.05,0.000529) (580.26,0.000429) (592.48,0.000362) (604.70,0.000309) (616.91,0.000274) (629.13,0.000248) (641.34,0.000200) (653.56,0.000177) (665.78,0.000140) (677.99,0.000128) (690.21,0.000112) (702.42,0.000100) (714.64,0.000079) (726.86,0.000073)} \closedcycle;
\draw[black,thick] (axis cs:\rAirMeanNum,0) -- (axis cs:\rAirMeanNum,\rAirHistMax);
\draw[green!45!black,thick,dashed] (axis cs:\rAirPlugNum,0) -- (axis cs:\rAirPlugNum,\rAirHistMax);
\draw[red!70!black,thick] (axis cs:\rAirBondNum,0) -- (axis cs:\rAirBondNum,\rAirHistMax);
\draw[red!70!black,dotted,thick] (axis cs:\rAirPnnNum,0) -- (axis cs:\rAirPnnNum,\rAirHistMax);
\end{groupplot}
\end{tikzpicture}
\caption{Density of the mitigated loss (truncated at $P_{99.5}$). Black: $\E L$. Green dashed: plug-in estimate. Red: critical-fractile security $B^\star$ at premium $2\%$ and $q=0.3$. Red dotted: $P_{99}$.}
\label{fig:hist}
\end{figure}

\subsection{RQ4: Channel decomposition of mean and tail}
\Cref{tab:channels,fig:stack} decompose $\E L$ and $\CVaR_{0.95}$ by channel, using the Euler allocation of \cref{prop:euler}. The platforms differ sharply in where their loss comes from.

For \emph{Bank~A}, customer attrition and deposit funding make up $\rBankAttrFundMeanShare\%$ of expected loss and $\rBankAttrFundTailShare\%$ of tail loss. The tail is a customer-flight phenomenon. Acute and incident outages contribute only $\rBankAcIncMeanShare\%$ of the mean, yet $\rBankAcIncTailShare\%$ of the tail, because outage hours enter the churn hazard. For \emph{Operator~T}, functional degradation and delayed launch dominate ($\rTelFunDelMeanShare\%$ of the mean). Leakage and peak denial on the interworking path accumulate steadily, and the enterprise-slicing launch is blocked. For \emph{Carrier~P}, functional loss (continuous pricing and degraded IROPS recovery) and the delayed NDC programme together account for $\rAirFunDelMeanShare\%$. Acute outages are a minor share of the mean in all cases. A model restricted to ``downtime cost'' would therefore miss most of the economic exposure, and this is the central empirical point of the paper.

\begin{table}[t]
\centering\footnotesize
\caption{Channel decomposition of $\E L$ and Euler allocation of $\CVaR_{0.95}$ (USD m and \% of total) for $s^\star$.}\label{tab:channels}
\setlength{\tabcolsep}{3pt}
\begin{tabular}{@{}l rrrr rrrr rrrr@{}}
\toprule
& \multicolumn{4}{c}{Bank A} & \multicolumn{4}{c}{Operator T} & \multicolumn{4}{c}{Carrier P}\\
\cmidrule(lr){2-5}\cmidrule(lr){6-9}\cmidrule(lr){10-13}
Channel & $\E L_k$ & \% & CVaR$_k$ & \% & $\E L_k$ & \% & CVaR$_k$ & \% & $\E L_k$ & \% & CVaR$_k$ & \%\\
\midrule
Switch-over outage & 12.9 & 2.8 & 70.5 & 6.4 & 1.5 & 0.3 & 2.8 & 0.4 & 2.6 & 0.7 & 7.4 & 1.2\\
Excess incidents & 35.2 & 7.6 & 100.7 & 9.1 & 10.7 & 2.4 & 18.6 & 2.5 & 11.1 & 3.2 & 47.2 & 7.5\\
Functional degradation & 30.6 & 6.6 & 37.6 & 3.4 & 188.4 & 42.2 & 275.6 & 36.5 & 192.5 & 55.8 & 334.8 & 52.9\\
Workaround operations & 37.1 & 8.0 & 42.2 & 3.8 & 31.0 & 6.9 & 36.5 & 4.8 & 22.7 & 6.6 & 31.8 & 5.0\\
Customer attrition & 153.6 & 33.1 & 421.8 & 38.2 & 91.0 & 20.4 & 207.7 & 27.5 & 7.9 & 2.3 & 13.1 & 2.1\\
Deposit funding & 120.6 & 26.0 & 346.3 & 31.3 & -- & -- & -- & -- & -- & -- & -- & --\\
Remediation cost & 30.7 & 6.6 & 32.9 & 3.0 & 30.3 & 6.8 & 32.8 & 4.3 & 16.0 & 4.6 & 29.3 & 4.6\\
Delayed launch & 43.2 & 9.3 & 52.8 & 4.8 & 93.9 & 21.0 & 181.0 & 24.0 & 92.5 & 26.8 & 169.0 & 26.7\\
\midrule
Total & 463.8 & 100.0 & 1,104.7 & 100.0 & 446.8 & 100.0 & 755.1 & 100.0 & 345.3 & 100.0 & 632.5 & 100.0\\
\bottomrule
\end{tabular}
\end{table}

\begin{figure}[t]
\centering
\begin{tikzpicture}
\begin{axis}[xbar stacked,width=0.92\textwidth,height=4.6cm,/pgf/bar width=11pt,symbolic y coords={Air,Tel,Bank},ytick=data,yticklabels={Carrier P,Operator T,Bank A},xmin=0,xlabel={Expected loss by channel (USD m)},legend style={font=\scriptsize,at={(0.5,1.03)},anchor=south,legend columns=4},enlarge y limits=0.25]
\addplot[fill=red!60,draw=black!50] coordinates {(12.89,Bank) (1.55,Tel) (2.58,Air)};
\addplot[fill=red!30,draw=black!50] coordinates {(35.18,Bank) (10.74,Tel) (11.14,Air)};
\addplot[fill=blue!45,draw=black!50] coordinates {(30.57,Bank) (188.36,Tel) (192.52,Air)};
\addplot[fill=blue!20,draw=black!50] coordinates {(37.08,Bank) (31.00,Tel) (22.72,Air)};
\addplot[fill=orange!60,draw=black!50] coordinates {(153.61,Bank) (90.97,Tel) (7.87,Air)};
\addplot[fill=orange!25,draw=black!50] coordinates {(120.58,Bank) (0.00,Tel) (0.00,Air)};
\addplot[fill=gray!45,draw=black!50] coordinates {(30.72,Bank) (30.34,Tel) (15.95,Air)};
\addplot[fill=green!40,draw=black!50] coordinates {(43.22,Bank) (93.90,Tel) (92.47,Air)};
\legend{switch-over,excess incidents,functional,workaround ops,attrition,funding,remediation,delayed launch}
\end{axis}
\end{tikzpicture}
\caption{Channel composition of the mitigated expected loss.}
\label{fig:stack}
\end{figure}

\subsection{RQ5: Bias of plug-in estimates and strategy misranking}\label{sec:plugin}
The plug-in estimate understates the expected mitigated loss by $\rBankPlugGap\%$ (Bank~A), $\rTelPlugGap\%$ (Operator~T) and $\rAirPlugGap\%$ (Carrier~P). It lies at the $\rBankPlugRank$th, $\rTelPlugRank$th and $\rAirPlugRank$th percentiles of the loss distribution (\cref{tab:plugin}). The biases follow the theory.

\emph{Jensen bias in outage channels (\cref{thm:jensen}).} At central parameters, the expected switch-over loss conditional on an outage exceeds the loss at the median duration by a factor of $\rBankJswRatio$ (Bank~A: $\rBankJswE$ vs.\ $\rBankJswMed$), $\rTelJswRatio$ and $\rAirJswRatio$. For incidents the factors are $\rBankJincRatio$, $\rTelJincRatio$ and $\rAirJincRatio$. For Carrier~P they are largest because the 3-hour cascade hinge is crossed mainly in the tail (\cref{fig:acute}). Accordingly, the outage channels have the largest MC/plug-in ratios in \cref{tab:plugin} (up to $4.7\times$).

\emph{Fulkerson bias (\cref{prop:fulkerson}).} The modal critical path of $S_1$ gives $T_d=\rBankTdMode$\,yr for Bank~A. Mean durations give $\rBankTdMeanPath$\,yr, while the Monte Carlo mean is $\rBankTdMC$\,yr ($+\rBankFulkGap\%$ over the mean path) and the $P_{90}$ is $\rBankTdPninety$\,yr. The figures for Operator~T are $\rTelTdMode$, $\rTelTdMeanPath$, $\rTelTdMC$ ($+\rTelFulkGap\%$) and $\rTelTdPninety$, and for Carrier~P $\rAirTdMode$, $\rAirTdMeanPath$, $\rAirTdMC$ ($+\rAirFulkGap\%$) and $\rAirTdPninety$.

\emph{Convexity in the customer response.} Attrition is $1-e^{-\Lambda}$ in a hazard that is linear in lognormally distributed parameters, multiplied by lognormal margins. Its expectation exceeds its plug-in value (ratios $1.22$--$1.77$).

The more consequential finding is that \emph{the plug-in misranks the responses} (\cref{fig:rank}). For Carrier~P, the plug-in selects degrade-and-wait ($\rAirPlugBest$, plug-in $302.8$ vs.\ $317.9$ for $S_2$), whereas the full model shows that $S_0$ is the worst strategy ($385.0$ vs.\ $345.3$). The reason is that the modal $T_d$ for $S_1$ ($2.45$\,yr) exceeds the median restraint horizon, so in the plug-in world the design-around never pays off; the plug-in ignores the option value of completing remediation early. For Bank~A, the plug-in ranks $S_0$ ahead of both design-around variants, again the reverse of the full model. For Operator~T, the plug-in prefers \textsc{terminate} under every strategy, while the full model prefers \textsc{continue} under every strategy, with lighter tails. A security or a mitigation plan built on plug-in values would thus be both too small and attached to the wrong operational response.

The variance decomposition \eqref{eq:totvar} attributes $\rBankAleaShare\%$ (Bank~A), $\rTelAleaShare\%$ (Operator~T) and $\rAirAleaShare\%$ (Carrier~P) of $\Var L$ to aleatory event risk. The rest is epistemic. Bank~A's large aleatory share reflects the switch-over Bernoulli and the heavy-tailed outage durations feeding the churn hazard.

\begin{table}[t]
\centering\footnotesize
\caption{Plug-in vs.\ full-model channel means for $s^\star$ (USD m) and their ratio.}\label{tab:plugin}
\setlength{\tabcolsep}{4pt}
\begin{tabular}{@{}l rrr rrr rrr@{}}
\toprule
& \multicolumn{3}{c}{Bank A} & \multicolumn{3}{c}{Operator T} & \multicolumn{3}{c}{Carrier P}\\
\cmidrule(lr){2-4}\cmidrule(lr){5-7}\cmidrule(lr){8-10}
Channel & plug-in & MC & ratio & plug-in & MC & ratio & plug-in & MC & ratio\\
\midrule
Switch-over outage & 4.6 & 12.9 & 2.81 & 0.7 & 1.5 & 2.07 & 0.6 & 2.6 & 4.18\\
Excess incidents & 14.7 & 35.2 & 2.39 & 5.7 & 10.7 & 1.88 & 2.4 & 11.1 & 4.71\\
Functional degradation & 26.2 & 30.6 & 1.17 & 167.1 & 188.4 & 1.13 & 174.9 & 192.5 & 1.10\\
Workaround operations & 35.0 & 37.1 & 1.06 & 28.5 & 31.0 & 1.09 & 23.9 & 22.7 & 0.95\\
Customer attrition & 117.3 & 153.6 & 1.31 & 74.5 & 91.0 & 1.22 & 4.4 & 7.9 & 1.77\\
Deposit funding & 89.8 & 120.6 & 1.34 & -- & -- & -- & -- & -- & --\\
Remediation cost & 31.2 & 30.7 & 0.99 & 30.2 & 30.3 & 1.00 & 15.2 & 16.0 & 1.05\\
Delayed launch & 40.3 & 43.2 & 1.07 & 86.1 & 93.9 & 1.09 & 96.5 & 92.5 & 0.96\\
\midrule
Total & 359.1 & 463.8 & 1.29 & 392.9 & 446.8 & 1.14 & 317.9 & 345.3 & 1.09\\
\bottomrule
\end{tabular}
\end{table}

\begin{figure}[t]
\centering
\begin{tikzpicture}
\begin{groupplot}[group style={group size=3 by 1,horizontal sep=1.0cm},ybar,height=5.0cm,ymin=0,enlarge y limits={upper,value=0.32},xticklabel style={font=\scriptsize}]
\nextgroupplot[title={Bank A},ylabel={USD m},width=0.30\textwidth,/pgf/bar width=5pt,symbolic x coords={S0,S1,S2,S3},xtick=data,enlarge x limits=0.2]
\addplot[fill=blue!50,draw=blue!60!black] coordinates {(S0,603.9) (S1,578.6) (S2,552.9) (S3,463.8)};
\addplot[fill=green!35,draw=green!40!black] coordinates {(S0,433.7) (S1,447.5) (S2,454.6) (S3,359.1)};
\nextgroupplot[title={Operator T},width=0.44\textwidth,/pgf/bar width=4pt,symbolic x coords={S0c,S0t,S1c,S1t,S2c,S2t,S3c,S3t},xtick=data,enlarge x limits=0.08]
\addplot[fill=blue!50,draw=blue!60!black] coordinates {(S0c,623.5) (S0t,624.4) (S1c,528.5) (S1t,530.0) (S2c,480.6) (S2t,482.3) (S3c,446.8) (S3t,448.6)};
\addplot[fill=green!35,draw=green!40!black] coordinates {(S0c,527.7) (S0t,503.6) (S1c,504.7) (S1t,482.2) (S2c,427.1) (S2t,408.4) (S3c,392.9) (S3t,376.8)};
\nextgroupplot[title={Carrier P},width=0.30\textwidth,/pgf/bar width=5pt,symbolic x coords={S0,S1,S2,S3},xtick=data,enlarge x limits=0.2]
\addplot[fill=blue!50,draw=blue!60!black] coordinates {(S0,385.0) (S1,363.8) (S2,345.3) (S3,363.5)};
\addplot[fill=green!35,draw=green!40!black] coordinates {(S0,302.8) (S1,312.9) (S2,317.9) (S3,340.7)};
\end{groupplot}
\end{tikzpicture}
\caption{Strategy ranking under the full model (blue: Monte Carlo $\E L$) and the plug-in estimate (green). Suffix c/t: CCFH \textsc{continue}/\textsc{terminate}. The plug-in selects a different response for Operator~T and Carrier~P.}
\label{fig:rank}
\end{figure}

\begin{figure}[t]
\centering
\begin{tikzpicture}
\begin{groupplot}[group style={group size=3 by 1,horizontal sep=1.25cm},width=0.355\textwidth,height=4.4cm,xlabel={Outage duration $D$ (h)},xmin=0,xmax=24,ymin=0]
\nextgroupplot[title={Bank A},ylabel={$L_{\mathrm{ac}}(D)$ (USD m)}]
\addplot[blue!70!black,thick] coordinates {(0,0) (0.5,1.5084) (1,2.5336) (1.5,3.5756) (2,4.6344) (2.5,5.71) (3,6.8024) (3.5,7.9116) (4,9.0376) (4.5,10.58) (5,12.14) (5.5,13.716) (6,15.31) (6.5,16.92) (7,18.546) (7.5,20.19) (8,21.85) (8.5,23.528) (9,25.222) (9.5,26.932) (10,28.66) (10.5,30.404) (11,32.166) (11.5,33.944) (12,35.738) (12.5,47.55) (13,49.378) (13.5,51.224) (14,53.086) (14.5,54.964) (15,56.86) (15.5,58.772) (16,60.702) (16.5,62.648) (17,64.61) (17.5,66.59) (18,68.586) (18.5,70.6) (19,72.63) (19.5,74.676) (20,76.74) (20.5,78.82) (21,80.918) (21.5,83.032) (22,85.162) (22.5,87.31) (23,89.474) (23.5,91.656) (24,93.854)};
\draw[gray,dotted] (axis cs:4,0) -- (axis cs:4,200); \draw[gray,dotted] (axis cs:12,0) -- (axis cs:12,200);
\nextgroupplot[title={Operator T (\textsc{continue})}]
\addplot[blue!70!black,thick] coordinates {(0,0) (0.5,0.67699) (1,1.884) (1.5,2.221) (2,2.5879) (2.5,2.9849) (3,3.4119) (3.5,3.8689) (4,5.7059) (4.5,6.2229) (5,6.7699) (5.5,7.3468) (6,7.9538) (6.5,8.5908) (7,9.2578) (7.5,9.9548) (8,10.682) (8.5,11.439) (9,12.226) (9.5,13.043) (10,13.89) (10.5,14.767) (11,15.674) (11.5,16.611) (12,17.578) (12.5,18.575) (13,19.602) (13.5,20.659) (14,21.746) (14.5,22.863) (15,24.01) (15.5,25.187) (16,26.394) (16.5,27.631) (17,28.898) (17.5,30.195) (18,31.522) (18.5,32.878) (19,34.265) (19.5,35.682) (20,37.129) (20.5,38.606) (21,40.113) (21.5,41.65) (22,43.217) (22.5,44.814) (23,46.441) (23.5,48.098) (24,49.785)};
\nextgroupplot[title={Carrier P}]
\addplot[blue!70!black,thick] coordinates {(0,0) (0.5,0.81894) (1,1.2541) (1.5,1.8054) (2,2.473) (2.5,3.2568) (3,4.1568) (3.5,7.7813) (4,11.522) (4.5,15.379) (5,19.352) (5.5,23.442) (6,27.647) (6.5,31.969) (7,36.407) (7.5,40.961) (8,45.632) (8.5,50.418) (9,55.321) (9.5,60.34) (10,65.476) (10.5,70.727) (11,76.095) (11.5,81.579) (12,87.179) (12.5,92.895) (13,98.728) (13.5,104.68) (14,110.74) (14.5,116.92) (15,123.22) (15.5,129.63) (16,136.16) (16.5,142.81) (17,149.57) (17.5,156.45) (18,163.44) (18.5,170.56) (19,177.78) (19.5,185.13) (20,192.59) (20.5,200.16) (21,207.85) (21.5,215.66) (22,223.59) (22.5,231.63) (23,239.78) (23.5,248.06) (24,256.44)};
\draw[gray,dotted] (axis cs:3,0) -- (axis cs:3,200);
\end{groupplot}
\end{tikzpicture}
\caption{Acute loss functions at central parameters. The quadratic backlog term, the hinge at $4$\,h (redress) and the step at $12$\,h (enforcement) for Bank~A, and the $3$\,h cascade hinge for Carrier~P, are the sources of the Jensen gap in \cref{thm:jensen}.}
\label{fig:acute}
\end{figure}

\begin{figure}[t]
\centering
\begin{tikzpicture}
\begin{groupplot}[group style={group size=3 by 1,horizontal sep=1.25cm},width=0.355\textwidth,height=4.4cm,xlabel={Forced delay $\Delta$ (yr)},xmin=0,xmax=3,ymin=0,legend style={font=\scriptsize,at={(0.03,0.97)},anchor=north west}]
\nextgroupplot[title={Bank A ($V_0\gg V^\ast$)},ylabel={Loss (USD m)}]
\addplot[blue!70!black,thick] coordinates {(0,0) (0.1,4.3721) (0.2,8.6889) (0.3,12.951) (0.4,17.159) (0.5,21.314) (0.6,25.416) (0.7,29.466) (0.8,33.465) (0.9,37.412) (1,41.308) (1.1,45.153) (1.2,48.947) (1.3,52.691) (1.4,56.384) (1.5,60.026) (1.6,63.617) (1.7,67.158) (1.8,70.649) (1.9,74.089) (2,77.479) (2.1,80.819) (2.2,84.11) (2.3,87.352) (2.4,90.545) (2.5,93.69) (2.6,96.787) (2.7,99.837) (2.8,102.84) (2.9,105.8) (3,108.71)};
\addplot[red!70!black,thick,dashed] coordinates {(0,0) (0.1,4.3721) (0.2,8.6889) (0.3,12.951) (0.4,17.159) (0.5,21.314) (0.6,25.416) (0.7,29.466) (0.8,33.465) (0.9,37.413) (1,41.311) (1.1,45.159) (1.2,48.958) (1.3,52.709) (1.4,56.412) (1.5,60.068) (1.6,63.678) (1.7,67.241) (1.8,70.759) (1.9,74.232) (2,77.66) (2.1,81.045) (2.2,84.387) (2.3,87.685) (2.4,90.942) (2.5,94.156) (2.6,97.33) (2.7,100.46) (2.8,103.56) (2.9,106.61) (3,109.62)};
\legend{$\ell(\Delta)$ option,$\ell_{\mathrm{DCF}}(\Delta)$}
\nextgroupplot[title={Operator T}]
\addplot[blue!70!black,thick] coordinates {(0,0) (0.1,9.4225) (0.2,18.643) (0.3,27.559) (0.4,36.137) (0.5,44.388) (0.6,52.336) (0.7,60.008) (0.8,67.426) (0.9,74.612) (1,81.582) (1.1,88.353) (1.2,94.937) (1.3,101.35) (1.4,107.59) (1.5,113.68) (1.6,119.62) (1.7,125.42) (1.8,131.09) (1.9,136.62) (2,142.04) (2.1,147.34) (2.2,152.52) (2.3,157.59) (2.4,162.56) (2.5,167.43) (2.6,172.2) (2.7,176.87) (2.8,181.45) (2.9,185.94) (3,190.34)};
\addplot[red!70!black,thick,dashed] coordinates {(0,0) (0.1,9.4236) (0.2,18.696) (0.3,27.82) (0.4,36.797) (0.5,45.629) (0.6,54.32) (0.7,62.87) (0.8,71.282) (0.9,79.559) (1,87.702) (1.1,95.713) (1.2,103.59) (1.3,111.35) (1.4,118.98) (1.5,126.48) (1.6,133.86) (1.7,141.12) (1.8,148.27) (1.9,155.29) (2,162.21) (2.1,169.01) (2.2,175.69) (2.3,182.27) (2.4,188.74) (2.5,195.11) (2.6,201.37) (2.7,207.52) (2.8,213.58) (2.9,219.53) (3,225.39)};
\nextgroupplot[title={Carrier P}]
\addplot[blue!70!black,thick] coordinates {(0,0) (0.1,6.5582) (0.2,13.033) (0.3,19.427) (0.4,25.739) (0.5,31.971) (0.6,38.122) (0.7,44.192) (0.8,50.18) (0.9,56.085) (1,61.906) (1.1,67.642) (1.2,73.291) (1.3,78.854) (1.4,84.332) (1.5,89.723) (1.6,95.028) (1.7,100.25) (1.8,105.39) (1.9,110.44) (2,115.41) (2.1,120.3) (2.2,125.11) (2.3,129.85) (2.4,134.5) (2.5,139.08) (2.6,143.59) (2.7,148.02) (2.8,152.38) (2.9,156.67) (3,160.89)};
\addplot[red!70!black,thick,dashed] coordinates {(0,0) (0.1,6.5582) (0.2,13.033) (0.3,19.427) (0.4,25.739) (0.5,31.971) (0.6,38.124) (0.7,44.2) (0.8,50.198) (0.9,56.119) (1,61.966) (1.1,67.738) (1.2,73.437) (1.3,79.064) (1.4,84.619) (1.5,90.103) (1.6,95.517) (1.7,100.86) (1.8,106.14) (1.9,111.35) (2,116.49) (2.1,121.57) (2.2,126.58) (2.3,131.53) (2.4,136.41) (2.5,141.23) (2.6,145.99) (2.7,150.69) (2.8,155.33) (2.9,159.91) (3,164.43)};
\end{groupplot}
\end{tikzpicture}
\caption{Forced-delay loss of the planned launch: real-option value $\ell(\Delta)$ (\cref{thm:delay}) vs.\ the static commit-and-delay benchmark. The gap is largest for Operator~T, whose launch is closest to its exercise threshold ($V_0/V^\ast=\rTelVO/\rTelVstar$) and most volatile ($\sigma=0.45$).}
\label{fig:delay}
\end{figure}

\paragraph{Delayed commercialisation.} At central values all three launches are in the money ($V_0>V^\ast$), so \cref{thm:delay}(ii) applies and the static DCF delay loss is an upper bound. For Bank~A ($V_0/V^\ast=\rBankVO/\rBankVstar$, $\beta=\rBankBeta$) and Carrier~P ($\rAirVO/\rAirVstar$, $\beta=\rAirBeta$) the bound is nearly tight. Here the option to wait is worth little, because immediate launch is strongly optimal. For Operator~T ($\rTelVO/\rTelVstar$, $\beta=\rTelBeta$), the option loss at $\Delta=2$ is $\rTelDelayOptTwo$ against $\rTelDelayDcfTwo$ for DCF. The expected delay channel under $s^\star$ is $\rTelDelayMean$ against $\rTelDelayDcfMean$ (\cref{fig:delay}). A DCF-based delay claim therefore overstates the loss for options near their exercise boundary. This is a point an opposing expert can be expected to raise, and the option model removes it.

\subsection{RQ6: Sensitivity analysis}
\Cref{tab:sobol,fig:sobol} report grouped Sobol indices of $\E[L\mid\theta]$, and \cref{fig:tornado} shows one-at-a-time swings. The first-order indices sum to $\rBankSobSumS$, $\rTelSobSumS$ and $\rAirSobSumS$, so the response is close to additive across mechanism groups. The exception is design-around execution, whose total index exceeds its first-order index (interaction with $T_R$ through $\tau=T_R\wedge T_d$).

The dominant drivers differ by platform. For Bank~A, uncertainty is concentrated in the \emph{customer response} ($S_T=\rBankSobTopST$). The SME degradation hazard alone swings the conditional mean by more than $250$ (\cref{fig:tornado}). Evidence on customer response, such as the econometrics of \cref{sec:econ}, therefore has the highest value of information. For Operator~T, uncertainty is spread across customer response ($\rTelSobTopST$), design-around execution ($\rTelSobSecST$), the delayed launch and the restraint horizon. For Carrier~P, the \emph{restraint horizon} dominates ($S_T=\rAirSobTopST$), followed by economic scale ($\rAirSobSecST$). The loss is driven by flows whose duration is set by $T_R$ when remediation is slow. Fallback-capability caps matter little for Carrier~P, whose losses are mostly binary (continuous pricing on or off), but noticeably for Operator~T ($S_T\approx0.08$).

\begin{table}[t]
\centering\footnotesize
\caption{Grouped Sobol indices of $\E[L\mid\theta]$ for $s^\star$ (Saltelli, $N=4096$). Bootstrap 95\% half-widths are $\le0.03$.}\label{tab:sobol}
\begin{tabular}{@{}l rr rr rr@{}}
\toprule
& \multicolumn{2}{c}{Bank A} & \multicolumn{2}{c}{Operator T} & \multicolumn{2}{c}{Carrier P}\\
\cmidrule(lr){2-3}\cmidrule(lr){4-5}\cmidrule(lr){6-7}
Group & $S_1$ & $S_T$ & $S_1$ & $S_T$ & $S_1$ & $S_T$\\
\midrule
Restraint horizon & 0.106 & 0.141 & 0.140 & 0.196 & 0.476 & 0.580\\
Switch-over outage & 0.036 & 0.046 & 0.001 & 0.001 & 0.003 & 0.003\\
Outage economics & 0.004 & 0.008 & 0.001 & 0.001 & 0.000 & 0.001\\
Incident process & 0.068 & 0.078 & 0.004 & 0.005 & 0.006 & 0.006\\
Fallback capability & 0.027 & 0.033 & 0.078 & 0.080 & 0.001 & 0.001\\
Economic scale & 0.008 & 0.009 & 0.085 & 0.093 & 0.219 & 0.228\\
Workaround operations & 0.003 & 0.002 & 0.004 & 0.005 & 0.002 & 0.002\\
Customer response & 0.596 & 0.639 & 0.242 & 0.240 & 0.002 & 0.003\\
Delayed launch & 0.007 & 0.009 & 0.196 & 0.207 & 0.100 & 0.112\\
Design-around execution & 0.065 & 0.091 & 0.171 & 0.234 & 0.066 & 0.152\\
IROPS process & -- & -- & -- & -- & 0.013 & 0.015\\
\bottomrule
\end{tabular}
\end{table}

\begin{figure}[t]
\centering
\begin{tikzpicture}
\begin{axis}[xbar,width=0.78\textwidth,height=9.2cm,/pgf/bar width=3.2pt,xmin=0,xmax=0.75,
  symbolic y coords={{Restraint horizon},{Switch-over outage},{Economic scale},{Fallback capability},{IROPS process},{Workaround operations},{Customer response},{Outage economics},{Incident process},{Delayed launch},{Design-around execution}},ytick={{Restraint horizon},{Switch-over outage},{Economic scale},{Fallback capability},{IROPS process},{Workaround operations},{Customer response},{Outage economics},{Incident process},{Delayed launch},{Design-around execution}},yticklabel style={font=\scriptsize},
  y dir=reverse,enlarge y limits=0.05,xlabel={Total Sobol index $S_T$},
  legend style={font=\scriptsize,at={(0.98,0.02)},anchor=south east}]
\addplot[fill=blue!55,draw=blue!60!black] coordinates {(0.1411,{Restraint horizon}) (0.0459,{Switch-over outage}) (0.0076,{Outage economics}) (0.0779,{Incident process}) (0.0333,{Fallback capability}) (0.0091,{Economic scale}) (0.0022,{Workaround operations}) (0.6386,{Customer response}) (0.0089,{Delayed launch}) (0.0906,{Design-around execution})};
\addplot[fill=orange!60,draw=orange!60!black] coordinates {(0.1960,{Restraint horizon}) (0.0006,{Switch-over outage}) (0.0931,{Economic scale}) (0.0804,{Fallback capability}) (0.0046,{Workaround operations}) (0.2404,{Customer response}) (0.0005,{Outage economics}) (0.0052,{Incident process}) (0.2071,{Delayed launch}) (0.2345,{Design-around execution})};
\addplot[fill=green!45,draw=green!40!black] coordinates {(0.5802,{Restraint horizon}) (0.0033,{Switch-over outage}) (0.2280,{Economic scale}) (0.0010,{Fallback capability}) (0.0147,{IROPS process}) (0.0021,{Workaround operations}) (0.0027,{Customer response}) (0.0009,{Outage economics}) (0.0064,{Incident process}) (0.1121,{Delayed launch}) (0.1520,{Design-around execution})};
\legend{Bank A,Operator T,Carrier P}
\end{axis}
\end{tikzpicture}
\caption{Total Sobol indices of $\E[L\mid\theta]$ by mechanism group (first-order indices in \cref{tab:sobol}). The IROPS group exists only for Carrier~P.}
\label{fig:sobol}
\end{figure}

\begin{figure}[p]
\centering
\begin{tikzpicture}
\begin{groupplot}[group style={group size=1 by 3,vertical sep=2.3cm},xbar,/pgf/bar shift=0pt,width=0.72\textwidth,height=5.6cm,/pgf/bar width=4.5pt,ytick=data,yticklabel style={font=\scriptsize},enlarge y limits=0.07,xlabel={$\Delta\,\E[L\mid\theta]$ relative to base (USD m)},legend style={font=\scriptsize,at={(0.98,0.04)},anchor=south east}]
\nextgroupplot[title={Bank A (base \rBankTornBase)},symbolic y coords={{cap\_avb},{spread},{cap\_liq},{beta\_out},{beta\_deg\_r},{dep\_mult},{m\_s},{u\_size},{T\_R},{lam\_mult},{u\_err},{beta\_deg\_s}}]
\addplot[fill=blue!45,draw=blue!60!black] coordinates {(-21.31,{cap\_avb}) (-21.35,{spread}) (-24.24,{cap\_liq}) (-16.14,{beta\_out}) (-22.90,{beta\_deg\_r}) (-31.32,{dep\_mult}) (-31.88,{m\_s}) (-37.83,{u\_size}) (-93.04,{T\_R}) (-36.85,{lam\_mult}) (-43.97,{u\_err}) (-80.80,{beta\_deg\_s})};
\addplot[fill=red!45,draw=red!60!black] coordinates {(21.28,{cap\_avb}) (22.15,{spread}) (22.35,{cap\_liq}) (34.78,{beta\_out}) (49.26,{beta\_deg\_r}) (46.01,{dep\_mult}) (46.83,{m\_s}) (45.02,{u\_size}) (4.69,{T\_R}) (61.49,{lam\_mult}) (56.02,{u\_err}) (171.59,{beta\_deg\_s})};
\legend{swing to lower value,swing to higher value}
\nextgroupplot[title={Operator T (base \rTelTornBase)},symbolic y coords={{u\_A},{u\_E},{leak},{u\_C},{delta\_V},{prem\_rate},{T\_R},{cap\_iwf},{u\_size},{u\_err},{beta\_prem},{V0}}]
\addplot[fill=blue!45,draw=blue!60!black] coordinates {(-14.21,{u\_A}) (-14.21,{u\_E}) (-12.60,{leak}) (-17.15,{u\_C}) (-21.06,{delta\_V}) (-21.43,{prem\_rate}) (-60.18,{T\_R}) (-39.12,{cap\_iwf}) (-42.91,{u\_size}) (-48.05,{u\_err}) (-41.40,{beta\_prem}) (-61.10,{V0})};
\addplot[fill=red!45,draw=red!60!black] coordinates {(17.00,{u\_A}) (17.00,{u\_E}) (23.91,{leak}) (22.76,{u\_C}) (21.47,{delta\_V}) (35.79,{prem\_rate}) (6.94,{T\_R}) (50.57,{cap\_iwf}) (50.61,{u\_size}) (60.64,{u\_err}) (87.22,{beta\_prem}) (69.93,{V0})};
\nextgroupplot[title={Carrier P (base \rAirTornBase)},symbolic y coords={{fte},{lam\_I},{lam\_mult},{pax\_I},{u\_size},{u\_err},{delta\_V},{cp\_share},{fare},{V0},{cp\_uplift},{T\_R}}]
\addplot[fill=blue!45,draw=blue!60!black] coordinates {(-6.72,{fte}) (-7.14,{lam\_I}) (-7.24,{lam\_mult}) (-9.86,{pax\_I}) (-23.17,{u\_size}) (-28.02,{u\_err}) (-19.43,{delta\_V}) (-23.80,{cp\_share}) (-22.48,{fare}) (-37.41,{V0}) (-57.22,{cp\_uplift}) (-144.63,{T\_R})};
\addplot[fill=red!45,draw=red!60!black] coordinates {(7.60,{fte}) (7.97,{lam\_I}) (12.89,{lam\_mult}) (14.49,{pax\_I}) (4.50,{u\_size}) (0.68,{u\_err}) (20.71,{delta\_V}) (22.01,{cp\_share}) (25.55,{fare}) (51.42,{V0}) (76.06,{cp\_uplift}) (11.45,{T\_R})};
\end{groupplot}
\end{tikzpicture}
\caption{Tornado diagrams: change in the mean-field loss when one parameter moves from its median to its $P_{10}$ or $P_{90}$, with all others at their medians. Parameters are ordered by swing; identifiers are those of the calibration tables (\texttt{u\_*} are design-around drivers: size, estimation error and activity durations).}
\label{fig:tornado}
\end{figure}

\subsection{RQ7: Security sizing}\label{sec:bondexp}
\Cref{tab:bond} applies \cref{thm:bond} on a grid of annual premium rates ($1$--$3\%$, so $c=\text{premium}\times\E T_R$) and wrongful-restraint probabilities $q\in[0.2,0.5]$, with $w=1$. At the central configuration (premium $2\%$, $q=0.3$), $\alpha^\star=\rBankBondAlpha$. The optimal security is $B^\star=\rBankBondB$ for Bank~A ($\rBankBondRatio\times\E L$), $\rTelBondB$ for Operator~T ($\rTelBondRatio\times$) and $\rAirBondB$ for Carrier~P ($\rAirBondRatio\times$). Across the grid, $B^\star/\E L$ ranges from $1.09$ to $1.94$. It increases with $q$ and decreases with the premium, as \cref{thm:bond}(a) predicts. It is largest for the most skewed distribution (Bank~A).

\Cref{fig:bondcurve} plots the expected uncompensated loss $U(B)=\E(L-B)^+$ relative to $\E L$. Posting $B=\E L$ leaves $\rBankUatEL\%$ (Bank~A), $\rTelUatEL\%$ (Operator~T) and $\rAirUatEL\%$ (Carrier~P) of the expected loss uncompensated. At $B^\star$ these fall to $\rBankUatB\%$, $\rTelUatB\%$ and $\rAirUatB\%$, at a carrying cost of $c\cdot(B^\star-\E L)$. The marginal security dollar has value $qw\Prob(L>B)$. It is worth posting while this exceeds $c$, which is exactly the first-order condition in the proof of \cref{thm:bond}.

\begin{table}[t]
\centering\footnotesize
\caption{Critical-fractile security $B^\star$ (USD m) and ratio $B^\star/\E L$ (in parentheses), with $w=1$ and $c=\text{premium}\times\E T_R$.}\label{tab:bond}
\begin{tabular}{@{}rrrrrrr@{}}
\toprule
Premium & $c$ & $q$ & $\alpha^\star$ & Bank A & Operator T & Carrier P\\
\midrule
1\% & 0.0217 & 0.2 & 0.892 & 718.1 (1.55) & 599.8 (1.34) & 487.8 (1.41)\\
1\% & 0.0217 & 0.3 & 0.928 & 798.5 (1.72) & 636.5 (1.42) & 521.4 (1.51)\\
1\% & 0.0217 & 0.4 & 0.946 & 854.4 (1.84) & 662.4 (1.48) & 544.1 (1.58)\\
1\% & 0.0217 & 0.5 & 0.957 & 899.3 (1.94) & 681.9 (1.53) & 562.1 (1.63)\\
\midrule
2\% & 0.0433 & 0.2 & 0.783 & 585.2 (1.26) & 531.7 (1.19) & 426.2 (1.23)\\
2\% & 0.0433 & 0.3 & 0.856 & 662.7 (1.43) & 572.4 (1.28) & 463.3 (1.34)\\
2\% & 0.0433 & 0.4 & 0.892 & 718.1 (1.55) & 599.8 (1.34) & 487.8 (1.41)\\
2\% & 0.0433 & 0.5 & 0.913 & 762.3 (1.64) & 620.3 (1.39) & 506.5 (1.47)\\
\midrule
3\% & 0.0650 & 0.2 & 0.675 & 505.7 (1.09) & 487.3 (1.09) & 385.4 (1.12)\\
3\% & 0.0650 & 0.3 & 0.783 & 585.2 (1.26) & 531.7 (1.19) & 426.2 (1.23)\\
3\% & 0.0650 & 0.4 & 0.838 & 640.2 (1.38) & 561.1 (1.26) & 452.6 (1.31)\\
3\% & 0.0650 & 0.5 & 0.870 & 683.1 (1.47) & 582.5 (1.30) & 472.4 (1.37)\\
\bottomrule
\end{tabular}
\end{table}

\begin{figure}[t]
\centering
\begin{tikzpicture}
\begin{axis}[width=0.62\textwidth,height=5.4cm,xlabel={Security $B/\E L$},ylabel={$\E(L-B)^+/\E L$},xmin=0,xmax=3,ymin=0,ymax=1,legend style={font=\scriptsize}]
\addplot[blue!70!black,thick] coordinates {(0.000,1.00000) (0.050,0.95000) (0.100,0.90000) (0.150,0.85000) (0.200,0.80001) (0.250,0.75005) (0.300,0.70020) (0.350,0.65058) (0.400,0.60143) (0.450,0.55312) (0.500,0.50613) (0.550,0.46090) (0.600,0.41785) (0.650,0.37735) (0.700,0.33962) (0.750,0.30484) (0.800,0.27301) (0.850,0.24404) (0.900,0.21784) (0.950,0.19427) (1.000,0.17317) (1.050,0.15431) (1.100,0.13747) (1.150,0.12247) (1.200,0.10915) (1.250,0.09733) (1.300,0.08685) (1.350,0.07755) (1.400,0.06933) (1.450,0.06203) (1.500,0.05557) (1.550,0.04985) (1.600,0.04475) (1.650,0.04022) (1.700,0.03619) (1.750,0.03261) (1.800,0.02944) (1.850,0.02662) (1.900,0.02411) (1.950,0.02188) (2.000,0.01988) (2.050,0.01809) (2.100,0.01650) (2.150,0.01507) (2.200,0.01378) (2.250,0.01262) (2.300,0.01158) (2.350,0.01064) (2.400,0.00979) (2.450,0.00903) (2.500,0.00833) (2.550,0.00771) (2.600,0.00715) (2.650,0.00663) (2.700,0.00616) (2.750,0.00572) (2.800,0.00533) (2.850,0.00498) (2.900,0.00466) (2.950,0.00437) (3.000,0.00410)};
\addplot[orange!85!black,thick] coordinates {(0.000,1.00000) (0.050,0.95000) (0.100,0.90000) (0.150,0.85000) (0.200,0.80000) (0.250,0.75000) (0.300,0.70000) (0.350,0.65002) (0.400,0.60008) (0.450,0.55023) (0.500,0.50060) (0.550,0.45139) (0.600,0.40299) (0.650,0.35587) (0.700,0.31057) (0.750,0.26776) (0.800,0.22800) (0.850,0.19179) (0.900,0.15944) (0.950,0.13112) (1.000,0.10675) (1.050,0.08610) (1.100,0.06888) (1.150,0.05471) (1.200,0.04317) (1.250,0.03386) (1.300,0.02643) (1.350,0.02054) (1.400,0.01591) (1.450,0.01230) (1.500,0.00949) (1.550,0.00730) (1.600,0.00562) (1.650,0.00432) (1.700,0.00333) (1.750,0.00256) (1.800,0.00198) (1.850,0.00153) (1.900,0.00119) (1.950,0.00093) (2.000,0.00072) (2.050,0.00056) (2.100,0.00044) (2.150,0.00035) (2.200,0.00028) (2.250,0.00023) (2.300,0.00019) (2.350,0.00015) (2.400,0.00013) (2.450,0.00011) (2.500,0.00009) (2.550,0.00008) (2.600,0.00007) (2.650,0.00006) (2.700,0.00005) (2.750,0.00005) (2.800,0.00005) (2.850,0.00004) (2.900,0.00004) (2.950,0.00004) (3.000,0.00003)};
\addplot[green!45!black,thick] coordinates {(0.000,1.00000) (0.050,0.95000) (0.100,0.90000) (0.150,0.85000) (0.200,0.80001) (0.250,0.75003) (0.300,0.70013) (0.350,0.65038) (0.400,0.60096) (0.450,0.55205) (0.500,0.50390) (0.550,0.45676) (0.600,0.41097) (0.650,0.36685) (0.700,0.32480) (0.750,0.28511) (0.800,0.24808) (0.850,0.21401) (0.900,0.18307) (0.950,0.15533) (1.000,0.13079) (1.050,0.10935) (1.100,0.09083) (1.150,0.07502) (1.200,0.06166) (1.250,0.05045) (1.300,0.04112) (1.350,0.03341) (1.400,0.02708) (1.450,0.02192) (1.500,0.01772) (1.550,0.01433) (1.600,0.01160) (1.650,0.00941) (1.700,0.00765) (1.750,0.00623) (1.800,0.00507) (1.850,0.00415) (1.900,0.00340) (1.950,0.00280) (2.000,0.00232) (2.050,0.00193) (2.100,0.00162) (2.150,0.00137) (2.200,0.00117) (2.250,0.00101) (2.300,0.00087) (2.350,0.00076) (2.400,0.00068) (2.450,0.00060) (2.500,0.00054) (2.550,0.00049) (2.600,0.00045) (2.650,0.00041) (2.700,0.00038) (2.750,0.00036) (2.800,0.00034) (2.850,0.00032) (2.900,0.00030) (2.950,0.00028) (3.000,0.00027)};
\draw[gray,dashed] (axis cs:1,0) -- (axis cs:1,1);
\draw[blue!70!black,dotted,thick] (axis cs:\rBankBondRatioNum,0) -- (axis cs:\rBankBondRatioNum,0.5);
\draw[orange!85!black,dotted,thick] (axis cs:\rTelBondRatioNum,0) -- (axis cs:\rTelBondRatioNum,0.5);
\draw[green!45!black,dotted,thick] (axis cs:\rAirBondRatioNum,0) -- (axis cs:\rAirBondRatioNum,0.5);
\legend{Bank A,Operator T,Carrier P}
\end{axis}
\end{tikzpicture}
\caption{Expected uncompensated loss as a function of the security. Dashed: $B=\E L$. Dotted: $B^\star$ at premium $2\%$, $q=0.3$.}
\label{fig:bondcurve}
\end{figure}

\subsection{RQ8a: Econometric calibration of customer response}\label{sec:econ}
The customer-response parameters dominate Bank~A's uncertainty. In practice they are calibrated from post-event panels, for example monthly churn by branch or region after staggered roll-outs of degraded functionality. We simulate this setting to test the estimators. The panel has $90$ units observed for $48$ months. Three cohorts of $20$ units adopt the degraded fallback in months $20$, $26$ and $32$, with long-run effects of $0.30$, $0.20$ and $0.10$ percentage points of monthly churn, ramping in as $1-e^{-(e+1)/3}$ in event time $e$. Thirty units are never treated. Errors are AR(1) with coefficient $0.3$, and the model includes unit effects, seasonality and a trend.

\begin{figure}[t]
\centering
\begin{minipage}[t]{0.49\textwidth}\centering
\begin{tikzpicture}
\begin{axis}[width=\textwidth,height=5.2cm,xlabel={Event time $e$ (months)},ylabel={ATT (pp of monthly churn)},xmin=-6.5,xmax=12.5,legend style={font=\scriptsize,at={(0.02,0.98)},anchor=north west}]
\addplot[only marks,mark=*,mark size=1.4pt,blue!70!black,error bars/.cd,y dir=both,y explicit] coordinates {(-6,0.0121) -= (0,0.0374) += (0,0.0379) (-5,0.0042) -= (0,0.0396) += (0,0.0370) (-4,0.0120) -= (0,0.0314) += (0,0.0366) (-3,0.0125) -= (0,0.0307) += (0,0.0334) (-2,-0.0077) -= (0,0.0303) += (0,0.0261) (0,0.0929) -= (0,0.0315) += (0,0.0334) (1,0.1152) -= (0,0.0360) += (0,0.0374) (2,0.1597) -= (0,0.0370) += (0,0.0382) (3,0.1893) -= (0,0.0403) += (0,0.0420) (4,0.1921) -= (0,0.0364) += (0,0.0372) (5,0.1958) -= (0,0.0380) += (0,0.0407) (6,0.2130) -= (0,0.0394) += (0,0.0391) (7,0.2291) -= (0,0.0392) += (0,0.0397) (8,0.2181) -= (0,0.0423) += (0,0.0417) (9,0.2252) -= (0,0.0373) += (0,0.0425) (10,0.2190) -= (0,0.0345) += (0,0.0329) (11,0.2116) -= (0,0.0382) += (0,0.0354) (12,0.2028) -= (0,0.0422) += (0,0.0447)};
\addplot[red!70!black,thick,dashed] coordinates {(-6,0.0000) (-5,0.0000) (-4,0.0000) (-3,0.0000) (-2,0.0000) (0,0.0567) (1,0.0973) (2,0.1264) (3,0.1473) (4,0.1622) (5,0.1729) (6,0.1806) (7,0.1861) (8,0.1900) (9,0.1929) (10,0.1949) (11,0.1963) (12,0.1974)};
\addplot[gray,dotted,forget plot] coordinates {(-6.5,0) (12.5,0)};
\legend{CS estimate (95\% bootstrap CI),true $\mathrm{ATT}(e)$}
\end{axis}
\end{tikzpicture}
\end{minipage}\hfill
\begin{minipage}[t]{0.49\textwidth}\centering
\begin{tikzpicture}
\begin{axis}[width=0.48\textwidth,height=5.2cm,xlabel={Month $t$},ylabel={Gap (pp of monthly churn)},xmin=1,xmax=48,ymin=-0.3,ymax=0.45,legend style={font=\scriptsize,at={(0.02,0.98)},anchor=north west}]
\addplot[gray!45,thin] coordinates {(1,0.0279) (2,-0.0050) (3,0.0068) (4,0.1116) (5,-0.0940) (6,-0.0790) (7,0.0118) (8,-0.0484) (9,0.0017) (10,0.0892) (11,0.1211) (12,0.0346) (13,0.0284) (14,0.0433) (15,0.0597) (16,-0.0998) (17,-0.1277) (18,-0.1518) (19,-0.0826) (20,-0.1439) (21,-0.1237) (22,-0.1193) (23,-0.0558) (24,-0.1653) (25,-0.1934) (26,-0.0154) (27,0.0081) (28,0.0491) (29,0.0111) (30,-0.0700) (31,-0.0573) (32,0.0130) (33,-0.1198) (34,-0.0054) (35,0.0529) (36,0.1361) (37,0.0796) (38,-0.0076) (39,0.0927) (40,0.0382) (41,-0.1315) (42,-0.0681) (43,-0.1171) (44,0.0217) (45,-0.0027) (46,0.0893) (47,-0.0350) (48,-0.0193)};
\addplot[gray!45,thin,forget plot] coordinates {(1,-0.0714) (2,-0.0327) (3,0.1098) (4,-0.0713) (5,0.0097) (6,-0.0427) (7,0.0070) (8,0.0821) (9,-0.0320) (10,0.0509) (11,-0.0044) (12,-0.0579) (13,-0.0575) (14,0.0193) (15,0.0107) (16,0.0001) (17,0.0285) (18,-0.0162) (19,0.0174) (20,0.1502) (21,0.0322) (22,-0.0051) (23,-0.0039) (24,0.0527) (25,0.1142) (26,-0.0908) (27,-0.0363) (28,-0.1264) (29,0.0794) (30,0.0353) (31,-0.0186) (32,-0.0082) (33,0.1718) (34,0.0473) (35,0.0702) (36,-0.0868) (37,-0.0124) (38,-0.0453) (39,-0.1764) (40,-0.0259) (41,0.1418) (42,-0.0098) (43,-0.0004) (44,-0.0463) (45,-0.0241) (46,-0.0893) (47,-0.0295) (48,-0.1397)};
\addplot[gray!45,thin,forget plot] coordinates {(1,0.0326) (2,0.0286) (3,-0.0761) (4,-0.0692) (5,0.0221) (6,0.0601) (7,0.0905) (8,0.0392) (9,0.0393) (10,-0.1132) (11,-0.0553) (12,-0.1065) (13,-0.0554) (14,-0.1012) (15,-0.0243) (16,0.0127) (17,0.0418) (18,0.0824) (19,0.0176) (20,-0.0204) (21,0.0187) (22,-0.1467) (23,0.0186) (24,-0.1349) (25,0.0038) (26,0.1454) (27,-0.0522) (28,0.0232) (29,-0.0903) (30,-0.0108) (31,0.0837) (32,-0.0035) (33,0.0587) (34,0.1454) (35,-0.0391) (36,0.0511) (37,0.0662) (38,-0.0098) (39,0.1646) (40,0.1172) (41,-0.0239) (42,-0.0380) (43,0.0329) (44,0.0843) (45,0.0307) (46,-0.0296) (47,-0.0452) (48,-0.0936)};
\addplot[gray!45,thin,forget plot] coordinates {(1,0.1079) (2,-0.0431) (3,0.0752) (4,0.0792) (5,-0.0583) (6,0.0016) (7,-0.0164) (8,-0.0486) (9,0.0749) (10,-0.1820) (11,-0.0462) (12,-0.0409) (13,-0.0126) (14,0.0507) (15,-0.0543) (16,-0.0154) (17,0.0113) (18,-0.0426) (19,0.0043) (20,-0.0556) (21,-0.0920) (22,0.1115) (23,-0.1274) (24,-0.1429) (25,-0.1220) (26,-0.0292) (27,-0.1348) (28,-0.2813) (29,-0.0250) (30,0.0134) (31,-0.1130) (32,0.0888) (33,-0.0893) (34,0.0568) (35,-0.0361) (36,0.1412) (37,-0.0803) (38,-0.1755) (39,-0.0980) (40,-0.0362) (41,0.0798) (42,-0.0707) (43,-0.2075) (44,-0.0417) (45,0.0193) (46,-0.1135) (47,-0.1605) (48,-0.1171)};
\addplot[gray!45,thin,forget plot] coordinates {(1,-0.0180) (2,0.0166) (3,-0.1660) (4,-0.0700) (5,-0.0705) (6,0.0259) (7,0.0762) (8,0.0559) (9,-0.1435) (10,0.0396) (11,0.0915) (12,0.0049) (13,0.0481) (14,0.0653) (15,0.0031) (16,-0.0118) (17,-0.0457) (18,-0.0358) (19,-0.0030) (20,0.0712) (21,0.1682) (22,0.0318) (23,0.0753) (24,0.1450) (25,0.0430) (26,-0.1470) (27,0.0599) (28,-0.0445) (29,-0.0529) (30,-0.0230) (31,-0.0532) (32,-0.0213) (33,0.1176) (34,0.0903) (35,0.2505) (36,-0.0158) (37,-0.0823) (38,0.1359) (39,-0.0733) (40,0.0079) (41,0.0269) (42,0.0451) (43,-0.0858) (44,-0.1403) (45,0.1245) (46,0.0653) (47,0.1574) (48,0.0579)};
\addplot[gray!45,thin,forget plot] coordinates {(1,0.0070) (2,0.0455) (3,0.0416) (4,-0.0132) (5,-0.0225) (6,0.0393) (7,0.0887) (8,-0.1252) (9,0.0062) (10,0.0060) (11,-0.1289) (12,0.0106) (13,0.0577) (14,-0.0726) (15,-0.0375) (16,0.1057) (17,0.0190) (18,-0.0284) (19,0.0254) (20,0.0634) (21,-0.0147) (22,0.0802) (23,-0.0551) (24,0.0084) (25,-0.0768) (26,-0.0394) (27,0.0211) (28,0.2365) (29,0.1795) (30,0.1492) (31,0.0293) (32,0.0022) (33,0.0675) (34,0.0650) (35,0.0279) (36,0.1486) (37,-0.1365) (38,-0.0170) (39,0.0726) (40,0.1204) (41,0.0864) (42,-0.0094) (43,-0.0727) (44,0.1063) (45,0.1285) (46,0.1719) (47,-0.0141) (48,0.0261)};
\addplot[gray!45,thin,forget plot] coordinates {(1,0.0601) (2,-0.0325) (3,0.0820) (4,0.1723) (5,0.1176) (6,-0.0694) (7,-0.0935) (8,-0.1326) (9,-0.1415) (10,0.0872) (11,0.0722) (12,-0.0267) (13,-0.0967) (14,-0.1097) (15,0.0661) (16,0.0374) (17,0.0975) (18,-0.0270) (19,-0.0676) (20,-0.0215) (21,-0.0364) (22,-0.0537) (23,-0.1129) (24,0.0420) (25,0.0384) (26,0.0686) (27,-0.0750) (28,0.1100) (29,0.1697) (30,0.1847) (31,0.1381) (32,0.0358) (33,-0.0816) (34,-0.0507) (35,0.0048) (36,-0.1009) (37,0.0052) (38,0.0057) (39,-0.0396) (40,-0.0361) (41,0.0278) (42,0.0809) (43,-0.0232) (44,0.0236) (45,0.0849) (46,0.0251) (47,0.0092) (48,-0.0078)};
\addplot[gray!45,thin,forget plot] coordinates {(1,-0.0140) (2,0.0286) (3,-0.0398) (4,0.0071) (5,0.0093) (6,0.0324) (7,-0.1246) (8,0.0573) (9,0.0361) (10,0.0451) (11,0.0677) (12,0.0753) (13,0.0292) (14,-0.0739) (15,0.0541) (16,0.0061) (17,-0.0739) (18,-0.1021) (19,0.0456) (20,-0.0235) (21,0.1138) (22,0.0326) (23,0.1201) (24,-0.0357) (25,0.0987) (26,-0.1452) (27,0.0429) (28,-0.1183) (29,0.0286) (30,0.0228) (31,-0.0147) (32,0.0235) (33,-0.0882) (34,-0.0726) (35,0.0069) (36,0.0354) (37,0.0572) (38,-0.2636) (39,-0.0157) (40,0.0570) (41,0.1506) (42,0.1694) (43,0.1522) (44,-0.0181) (45,-0.1197) (46,-0.0290) (47,0.1116) (48,0.0238)};
\addplot[gray!45,thin,forget plot] coordinates {(1,-0.1485) (2,-0.0863) (3,-0.0041) (4,0.0735) (5,0.0407) (6,0.1868) (7,-0.0784) (8,0.0754) (9,0.0352) (10,-0.0111) (11,-0.1000) (12,0.0020) (13,0.0555) (14,0.0177) (15,-0.0023) (16,-0.0914) (17,-0.0879) (18,-0.0311) (19,0.0211) (20,0.0710) (21,-0.0821) (22,0.1025) (23,0.1731) (24,0.0686) (25,0.0670) (26,-0.0553) (27,0.1419) (28,0.0827) (29,0.0972) (30,0.0266) (31,-0.1830) (32,0.0342) (33,-0.0161) (34,0.1154) (35,-0.0618) (36,-0.2824) (37,-0.1244) (38,-0.1147) (39,0.0415) (40,-0.0450) (41,-0.0501) (42,-0.0499) (43,-0.1098) (44,-0.0344) (45,-0.0156) (46,-0.0245) (47,-0.0514) (48,-0.0153)};
\addplot[gray!45,thin,forget plot] coordinates {(1,0.0717) (2,0.0285) (3,-0.0363) (4,0.0099) (5,0.0270) (6,-0.0261) (7,0.0024) (8,0.0277) (9,0.0541) (10,0.0356) (11,-0.0731) (12,-0.0507) (13,-0.0931) (14,-0.0312) (15,0.0690) (16,0.0045) (17,0.0273) (18,0.0106) (19,-0.0145) (20,0.0837) (21,-0.0084) (22,-0.0456) (23,-0.0805) (24,0.0424) (25,-0.0774) (26,-0.1062) (27,0.0384) (28,0.0139) (29,-0.0075) (30,-0.0891) (31,-0.0483) (32,-0.0303) (33,0.0371) (34,0.0474) (35,-0.0162) (36,0.0125) (37,0.0518) (38,-0.1298) (39,-0.1247) (40,0.0092) (41,0.0157) (42,0.0801) (43,0.1522) (44,0.1615) (45,0.1373) (46,-0.0242) (47,0.1444) (48,-0.0283)};
\addplot[gray!45,thin,forget plot] coordinates {(1,-0.0401) (2,-0.2815) (3,-0.2267) (4,-0.1993) (5,-0.0956) (6,-0.1732) (7,-0.2494) (8,-0.1237) (9,-0.0747) (10,-0.0032) (11,-0.0849) (12,-0.0835) (13,-0.1392) (14,-0.1352) (15,-0.1795) (16,-0.0371) (17,-0.1496) (18,0.0101) (19,-0.1287) (20,-0.1882) (21,-0.0866) (22,0.0682) (23,-0.0421) (24,-0.2196) (25,-0.0755) (26,-0.1305) (27,-0.1728) (28,-0.2612) (29,0.0239) (30,-0.0879) (31,-0.0203) (32,0.0493) (33,-0.1505) (34,-0.1350) (35,-0.1080) (36,0.0829) (37,-0.0559) (38,-0.3413) (39,-0.1733) (40,-0.0998) (41,0.0027) (42,0.0555) (43,-0.1332) (44,-0.0768) (45,0.0596) (46,-0.1243) (47,-0.0801) (48,-0.2357)};
\addplot[gray!45,thin,forget plot] coordinates {(1,0.0268) (2,0.0587) (3,-0.0163) (4,-0.0710) (5,0.0073) (6,-0.0264) (7,0.0542) (8,-0.0490) (9,0.1114) (10,0.1216) (11,-0.0456) (12,0.0055) (13,0.0205) (14,-0.0182) (15,-0.0642) (16,-0.1145) (17,-0.0127) (18,-0.0211) (19,0.0089) (20,0.0058) (21,-0.0601) (22,0.0944) (23,-0.0906) (24,0.1326) (25,0.0048) (26,-0.0797) (27,-0.0047) (28,0.0790) (29,0.0643) (30,-0.1172) (31,0.0507) (32,0.0334) (33,-0.0602) (34,-0.1872) (35,0.0809) (36,0.1890) (37,0.0597) (38,0.0016) (39,0.0031) (40,-0.1056) (41,0.0072) (42,0.1090) (43,0.0637) (44,0.0338) (45,-0.0195) (46,0.0809) (47,0.0336) (48,-0.0348)};
\addplot[gray!45,thin,forget plot] coordinates {(1,0.2660) (2,0.3089) (3,0.4146) (4,0.2829) (5,0.3146) (6,0.2165) (7,0.1896) (8,0.1053) (9,0.0358) (10,0.1370) (11,0.1956) (12,0.2033) (13,0.2490) (14,0.1042) (15,0.2346) (16,0.3118) (17,0.1902) (18,0.2970) (19,0.2463) (20,0.1798) (21,0.3515) (22,0.4369) (23,0.4436) (24,0.2673) (25,0.3570) (26,0.2238) (27,0.4982) (28,0.2255) (29,0.1586) (30,0.1304) (31,0.2260) (32,0.0650) (33,0.2069) (34,-0.0138) (35,0.4303) (36,0.2988) (37,0.1726) (38,0.1328) (39,0.1285) (40,0.1216) (41,0.0608) (42,0.1903) (43,0.3463) (44,0.1110) (45,-0.1655) (46,0.1454) (47,0.3517) (48,0.3461)};
\addplot[gray!45,thin,forget plot] coordinates {(1,-0.0328) (2,-0.1105) (3,-0.0662) (4,0.0194) (5,0.0452) (6,-0.0042) (7,-0.0124) (8,-0.1124) (9,0.0680) (10,-0.0399) (11,-0.0719) (12,-0.0372) (13,0.0148) (14,0.0426) (15,-0.0290) (16,0.0575) (17,0.0715) (18,0.0327) (19,0.0932) (20,0.0037) (21,0.0241) (22,0.1435) (23,0.0419) (24,0.0233) (25,-0.0732) (26,-0.1676) (27,0.0433) (28,-0.0045) (29,-0.0060) (30,-0.0381) (31,-0.0209) (32,-0.0083) (33,0.1030) (34,-0.0528) (35,-0.0560) (36,-0.0646) (37,-0.0897) (38,0.0258) (39,-0.0225) (40,0.0129) (41,0.1090) (42,0.2043) (43,0.1525) (44,-0.0495) (45,-0.1122) (46,-0.2446) (47,-0.1179) (48,0.0014)};
\addplot[gray!45,thin,forget plot] coordinates {(1,-0.0272) (2,0.1755) (3,0.0200) (4,0.0063) (5,-0.1996) (6,0.0682) (7,-0.0005) (8,-0.1221) (9,0.0106) (10,0.1621) (11,0.1574) (12,0.0011) (13,0.0270) (14,-0.0114) (15,-0.0280) (16,-0.0796) (17,0.1014) (18,-0.0473) (19,-0.0808) (20,0.0115) (21,-0.0619) (22,0.0401) (23,0.1570) (24,0.0668) (25,0.0060) (26,0.0002) (27,0.0424) (28,0.0391) (29,-0.0887) (30,-0.0514) (31,-0.1098) (32,-0.2162) (33,-0.0666) (34,0.0396) (35,0.1688) (36,0.1480) (37,0.0486) (38,0.0892) (39,0.1555) (40,0.1835) (41,0.0952) (42,0.0777) (43,-0.1208) (44,0.0949) (45,0.0745) (46,-0.0267) (47,0.2410) (48,-0.0248)};
\addplot[gray!45,thin,forget plot] coordinates {(1,-0.0266) (2,0.0332) (3,-0.0403) (4,0.0485) (5,0.0446) (6,-0.0615) (7,0.0553) (8,0.0337) (9,-0.0122) (10,0.0540) (11,0.0077) (12,0.0439) (13,-0.0488) (14,-0.0028) (15,-0.0972) (16,0.0260) (17,-0.0268) (18,-0.0241) (19,0.0289) (20,0.0398) (21,0.0478) (22,-0.0428) (23,0.0089) (24,0.0762) (25,-0.1325) (26,-0.0217) (27,0.0057) (28,0.0297) (29,0.0809) (30,0.0806) (31,0.0195) (32,0.0705) (33,0.1244) (34,0.2343) (35,0.1405) (36,0.1745) (37,0.1262) (38,-0.0206) (39,0.0877) (40,0.1597) (41,0.0876) (42,-0.1880) (43,0.0543) (44,0.0434) (45,0.0980) (46,0.0411) (47,-0.0492) (48,-0.0202)};
\addplot[gray!45,thin,forget plot] coordinates {(1,-0.0023) (2,-0.0309) (3,-0.0104) (4,0.0248) (5,0.0273) (6,0.0279) (7,0.0037) (8,-0.0084) (9,0.0152) (10,0.0627) (11,-0.0290) (12,-0.0211) (13,-0.0424) (14,0.0482) (15,0.0088) (16,0.0095) (17,-0.0723) (18,-0.0069) (19,0.0244) (20,-0.1161) (21,-0.0052) (22,-0.2180) (23,0.2002) (24,0.0504) (25,0.0286) (26,0.1177) (27,0.0310) (28,0.0493) (29,0.0155) (30,0.1086) (31,-0.0435) (32,0.0338) (33,-0.0743) (34,0.0314) (35,0.0493) (36,-0.0599) (37,0.1578) (38,-0.0448) (39,-0.0220) (40,0.1020) (41,-0.0958) (42,-0.1158) (43,-0.1514) (44,-0.0435) (45,-0.1208) (46,0.2240) (47,0.0585) (48,0.0357)};
\addplot[gray!45,thin,forget plot] coordinates {(1,0.0568) (2,-0.0435) (3,-0.0151) (4,0.0152) (5,-0.0123) (6,-0.0186) (7,0.0541) (8,0.0080) (9,-0.0132) (10,-0.0376) (11,-0.0374) (12,-0.0276) (13,-0.0312) (14,0.0557) (15,0.0397) (16,-0.0137) (17,-0.0324) (18,-0.0284) (19,0.0416) (20,0.0607) (21,0.0028) (22,-0.0038) (23,-0.0139) (24,-0.0744) (25,-0.0524) (26,0.1482) (27,-0.0156) (28,-0.0005) (29,-0.0102) (30,0.0725) (31,0.0313) (32,-0.0770) (33,-0.1080) (34,0.0209) (35,-0.0383) (36,-0.0395) (37,-0.1396) (38,-0.0463) (39,-0.0016) (40,0.0440) (41,0.0295) (42,-0.0209) (43,-0.0763) (44,-0.0814) (45,0.0236) (46,0.0188) (47,-0.1156) (48,0.0921)};
\addplot[gray!45,thin,forget plot] coordinates {(1,0.0387) (2,-0.0406) (3,-0.1869) (4,-0.0970) (5,0.0762) (6,-0.0329) (7,0.0482) (8,-0.0069) (9,0.0219) (10,0.0087) (11,0.0218) (12,0.0272) (13,0.0168) (14,-0.0408) (15,0.0207) (16,0.1105) (17,0.0719) (18,0.1219) (19,-0.0447) (20,-0.1630) (21,0.0114) (22,-0.0571) (23,-0.2663) (24,-0.0624) (25,0.0319) (26,0.0877) (27,0.0290) (28,-0.0779) (29,0.1707) (30,0.2474) (31,0.2869) (32,0.2482) (33,0.1002) (34,0.1583) (35,-0.0584) (36,-0.0586) (37,0.0531) (38,-0.0425) (39,-0.0566) (40,0.0293) (41,0.0143) (42,-0.0164) (43,0.1367) (44,-0.0367) (45,0.0156) (46,-0.0088) (47,-0.1525) (48,-0.0108)};
\addplot[gray!45,thin,forget plot] coordinates {(1,-0.1429) (2,-0.1214) (3,-0.0096) (4,-0.1769) (5,-0.0020) (6,0.0448) (7,0.0027) (8,0.0187) (9,-0.0314) (10,-0.0128) (11,0.0308) (12,0.1306) (13,0.0163) (14,0.0676) (15,0.0176) (16,0.0302) (17,0.0378) (18,-0.0156) (19,0.1094) (20,0.0257) (21,-0.1379) (22,0.1828) (23,-0.2149) (24,-0.0335) (25,-0.0875) (26,-0.0845) (27,-0.2024) (28,-0.1084) (29,-0.1628) (30,-0.1732) (31,-0.0635) (32,-0.1861) (33,0.0296) (34,0.0391) (35,-0.1996) (36,-0.0500) (37,-0.1849) (38,0.2204) (39,0.0182) (40,-0.1754) (41,-0.0471) (42,-0.1174) (43,-0.1398) (44,-0.0136) (45,0.0013) (46,-0.1212) (47,0.0062) (48,0.1189)};
\addplot[gray!45,thin,forget plot] coordinates {(1,0.0065) (2,-0.0617) (3,-0.0975) (4,0.0052) (5,-0.0520) (6,0.0300) (7,0.0300) (8,0.1068) (9,0.1280) (10,-0.0003) (11,0.0020) (12,0.0205) (13,0.0129) (14,0.1126) (15,-0.0540) (16,-0.1110) (17,0.0519) (18,-0.0230) (19,-0.0385) (20,0.0518) (21,-0.0785) (22,-0.1708) (23,-0.1505) (24,-0.0062) (25,-0.0585) (26,0.0312) (27,-0.2059) (28,0.0332) (29,0.0352) (30,0.0992) (31,0.0030) (32,0.1392) (33,0.0291) (34,0.1797) (35,-0.1554) (36,-0.0768) (37,0.0479) (38,0.1353) (39,0.1027) (40,0.0703) (41,0.0709) (42,0.0181) (43,-0.1244) (44,0.0655) (45,0.3070) (46,0.0630) (47,-0.1615) (48,-0.1140)};
\addplot[gray!45,thin,forget plot] coordinates {(1,-0.0431) (2,-0.0561) (3,0.0945) (4,0.0514) (5,-0.0884) (6,0.0469) (7,0.0991) (8,-0.0631) (9,-0.0329) (10,-0.0032) (11,0.0637) (12,-0.0168) (13,0.0073) (14,0.0013) (15,-0.0672) (16,-0.0681) (17,-0.0109) (18,0.0045) (19,0.0844) (20,0.0170) (21,-0.0188) (22,0.1162) (23,0.0750) (24,0.0130) (25,-0.0051) (26,-0.1076) (27,0.0293) (28,-0.1342) (29,-0.1275) (30,-0.1090) (31,-0.1769) (32,-0.0113) (33,-0.0807) (34,-0.2406) (35,-0.1220) (36,0.0130) (37,-0.0261) (38,0.0180) (39,-0.2401) (40,-0.1469) (41,0.0085) (42,-0.0019) (43,0.0292) (44,-0.0349) (45,-0.0822) (46,0.0655) (47,0.1304) (48,0.1393)};
\addplot[gray!45,thin,forget plot] coordinates {(1,0.0303) (2,-0.0525) (3,0.0294) (4,0.0520) (5,-0.1004) (6,-0.0311) (7,0.0136) (8,0.0800) (9,-0.0309) (10,0.0075) (11,-0.0050) (12,0.0071) (13,-0.0111) (14,-0.0796) (15,-0.0128) (16,0.0742) (17,0.0424) (18,0.0320) (19,-0.0298) (20,0.1744) (21,0.1151) (22,0.0203) (23,0.1123) (24,0.1966) (25,0.0041) (26,-0.0610) (27,0.0228) (28,0.0200) (29,0.0240) (30,-0.0286) (31,-0.1981) (32,-0.0326) (33,0.0547) (34,0.0058) (35,0.1243) (36,-0.1317) (37,-0.1191) (38,0.0166) (39,0.0359) (40,0.0931) (41,0.0834) (42,0.1644) (43,-0.0491) (44,0.0230) (45,-0.0278) (46,0.0924) (47,0.0318) (48,0.0200)};
\addplot[gray!45,thin,forget plot] coordinates {(1,0.0376) (2,-0.0394) (3,0.0162) (4,-0.0620) (5,-0.0269) (6,0.0171) (7,0.0492) (8,-0.0054) (9,-0.0635) (10,0.0336) (11,-0.0290) (12,-0.0492) (13,0.1196) (14,-0.0207) (15,0.1995) (16,-0.0223) (17,-0.0439) (18,-0.0349) (19,-0.0416) (20,0.0483) (21,-0.0054) (22,0.0297) (23,0.1825) (24,0.0545) (25,0.0586) (26,-0.0658) (27,0.1195) (28,0.1940) (29,0.0732) (30,-0.0946) (31,-0.0615) (32,0.1860) (33,0.1002) (34,0.1959) (35,0.0411) (36,0.0355) (37,0.1535) (38,0.0655) (39,0.0071) (40,0.1826) (41,0.0848) (42,0.1012) (43,0.0175) (44,-0.0637) (45,-0.0060) (46,0.1271) (47,0.1048) (48,0.1271)};
\addplot[gray!45,thin,forget plot] coordinates {(1,-0.0546) (2,0.0273) (3,0.0403) (4,0.0413) (5,-0.0221) (6,-0.1595) (7,0.0027) (8,-0.0080) (9,0.0383) (10,-0.0172) (11,0.0567) (12,-0.0953) (13,-0.0276) (14,0.0653) (15,0.0457) (16,0.1437) (17,0.0140) (18,0.0082) (19,-0.0334) (20,0.0781) (21,-0.0886) (22,0.0393) (23,-0.1415) (24,-0.0355) (25,-0.1531) (26,0.0752) (27,-0.1402) (28,0.0013) (29,-0.0786) (30,-0.1071) (31,0.0344) (32,0.0278) (33,-0.0118) (34,-0.1562) (35,-0.0344) (36,0.0351) (37,-0.1659) (38,0.1352) (39,0.0826) (40,-0.1169) (41,0.0352) (42,-0.1023) (43,0.0766) (44,0.1317) (45,0.2407) (46,-0.0940) (47,-0.0753) (48,-0.1385)};
\addplot[gray!45,thin,forget plot] coordinates {(1,-0.0722) (2,0.1037) (3,-0.0047) (4,0.0076) (5,0.0171) (6,-0.0025) (7,0.0137) (8,-0.0050) (9,-0.0417) (10,0.0083) (11,-0.0133) (12,0.0183) (13,0.0125) (14,-0.0542) (15,0.0100) (16,-0.0028) (17,0.0064) (18,-0.0118) (19,-0.0277) (20,-0.0034) (21,0.0248) (22,-0.1300) (23,0.0040) (24,0.0899) (25,0.0218) (26,0.0270) (27,0.0894) (28,0.2131) (29,-0.0604) (30,-0.0244) (31,0.0290) (32,-0.1156) (33,0.0035) (34,-0.0785) (35,-0.0329) (36,-0.1297) (37,0.0247) (38,0.1842) (39,0.0942) (40,-0.0107) (41,-0.1360) (42,-0.0436) (43,0.1227) (44,0.0213) (45,-0.1024) (46,0.0392) (47,0.0445) (48,0.1134)};
\addplot[gray!45,thin,forget plot] coordinates {(1,-0.0353) (2,0.0198) (3,0.0143) (4,-0.0824) (5,-0.0942) (6,0.0431) (7,0.0270) (8,0.0003) (9,0.0105) (10,-0.0683) (11,-0.0664) (12,0.0226) (13,0.0355) (14,0.0599) (15,0.0255) (16,0.0157) (17,-0.0260) (18,0.1005) (19,-0.0539) (20,0.0262) (21,-0.0123) (22,0.0901) (23,0.0420) (24,0.0994) (25,-0.0090) (26,0.1329) (27,0.1342) (28,0.2128) (29,0.0838) (30,0.1003) (31,0.1065) (32,0.1476) (33,0.0966) (34,-0.0001) (35,0.1296) (36,-0.1826) (37,-0.0562) (38,-0.1616) (39,0.0866) (40,0.0164) (41,0.0809) (42,0.1166) (43,0.1815) (44,0.1021) (45,0.0305) (46,0.0698) (47,0.0359) (48,-0.0401)};
\addplot[gray!45,thin,forget plot] coordinates {(1,0.0371) (2,0.0983) (3,0.0326) (4,-0.0745) (5,0.0915) (6,0.0093) (7,-0.1604) (8,-0.1231) (9,-0.0231) (10,-0.0081) (11,0.0026) (12,0.1052) (13,0.0333) (14,0.0505) (15,0.0397) (16,-0.1698) (17,-0.0106) (18,-0.0125) (19,0.0496) (20,-0.0941) (21,-0.0011) (22,0.0602) (23,0.1863) (24,0.1047) (25,-0.0472) (26,0.0630) (27,-0.1254) (28,-0.1073) (29,0.0260) (30,0.0319) (31,0.0710) (32,0.1945) (33,0.2039) (34,0.1984) (35,-0.0692) (36,-0.0673) (37,0.0317) (38,-0.1219) (39,-0.0776) (40,-0.0649) (41,-0.0098) (42,0.0871) (43,0.1616) (44,0.0489) (45,0.0946) (46,0.0192) (47,0.1898) (48,0.1005)};
\addplot[gray!45,thin,forget plot] coordinates {(1,0.0305) (2,-0.0543) (3,-0.0930) (4,-0.0164) (5,-0.0075) (6,0.0832) (7,-0.0261) (8,-0.0399) (9,-0.0398) (10,0.0376) (11,-0.0824) (12,0.0369) (13,0.0008) (14,0.0147) (15,0.0879) (16,-0.0128) (17,0.1355) (18,0.0004) (19,-0.1304) (20,-0.0292) (21,-0.0892) (22,-0.1241) (23,-0.0971) (24,-0.1445) (25,-0.0066) (26,0.0555) (27,0.0268) (28,0.0239) (29,0.0188) (30,-0.1594) (31,-0.1137) (32,-0.2006) (33,-0.0606) (34,-0.0062) (35,-0.0643) (36,0.0364) (37,0.0219) (38,-0.1868) (39,-0.0170) (40,0.1084) (41,0.0143) (42,-0.1064) (43,0.0379) (44,-0.0136) (45,-0.0273) (46,0.1460) (47,0.0293) (48,-0.0332)};
\addplot[gray!45,thin,forget plot] coordinates {(1,-0.0357) (2,0.0726) (3,-0.0358) (4,-0.0008) (5,-0.0353) (6,0.0473) (7,-0.0092) (8,0.0554) (9,-0.0537) (10,-0.0247) (11,-0.0118) (12,-0.0032) (13,-0.0418) (14,0.0443) (15,0.0416) (16,0.0437) (17,-0.0008) (18,-0.0138) (19,-0.0074) (20,0.1187) (21,0.0985) (22,-0.0785) (23,-0.0586) (24,0.0302) (25,0.0998) (26,0.1474) (27,-0.0875) (28,-0.0799) (29,-0.0376) (30,-0.0685) (31,-0.0505) (32,-0.0612) (33,-0.0943) (34,-0.0311) (35,-0.2059) (36,0.0592) (37,0.2307) (38,-0.0682) (39,-0.0411) (40,-0.0380) (41,-0.0179) (42,0.0690) (43,-0.0255) (44,0.0451) (45,0.0361) (46,0.0039) (47,0.1566) (48,0.1195)};
\addplot[blue!70!black,very thick] coordinates {(1,-0.0017) (2,-0.0018) (3,0.0020) (4,-0.0007) (5,0.0021) (6,-0.0038) (7,0.0102) (8,0.0021) (9,0.0037) (10,-0.0034) (11,-0.0008) (12,0.0066) (13,-0.0008) (14,-0.0089) (15,0.0016) (16,-0.0020) (17,0.0092) (18,-0.0082) (19,-0.0054) (20,0.0515) (21,0.1422) (22,0.2385) (23,0.2922) (24,0.2738) (25,0.3122) (26,0.3012) (27,0.3055) (28,0.2547) (29,0.3014) (30,0.2729) (31,0.3054) (32,0.3377) (33,0.3128) (34,0.2884) (35,0.3054) (36,0.3215) (37,0.3065) (38,0.2938) (39,0.2998) (40,0.3072) (41,0.3102) (42,0.3319) (43,0.3398) (44,0.2610) (45,0.3029) (46,0.3211) (47,0.4009) (48,0.3843)};
\addplot[red!70!black,thick,dashed] coordinates {(1,0.0000) (2,0.0000) (3,0.0000) (4,0.0000) (5,0.0000) (6,0.0000) (7,0.0000) (8,0.0000) (9,0.0000) (10,0.0000) (11,0.0000) (12,0.0000) (13,0.0000) (14,0.0000) (15,0.0000) (16,0.0000) (17,0.0000) (18,0.0000) (19,0.0000) (20,0.0850) (21,0.1460) (22,0.1896) (23,0.2209) (24,0.2433) (25,0.2594) (26,0.2709) (27,0.2792) (28,0.2851) (29,0.2893) (30,0.2923) (31,0.2945) (32,0.2961) (33,0.2972) (34,0.2980) (35,0.2986) (36,0.2990) (37,0.2993) (38,0.2995) (39,0.2996) (40,0.2997) (41,0.2998) (42,0.2999) (43,0.2999) (44,0.2999) (45,0.2999) (46,0.3000) (47,0.3000) (48,0.3000)};
\draw[black,dotted] (axis cs:19.5,-0.3) -- (axis cs:19.5,0.45);
\legend{placebo gaps ($J=30$), treated cohort gap, true effect path}
\end{axis}
\end{tikzpicture}
\end{minipage}
\caption{Left: Callaway--Sant'Anna event study with never-treated controls (cluster bootstrap, $399$ draws). Right: synthetic control for the first cohort, with in-space placebo gaps (RMSPE ratio $\rSynthRatio$, permutation $p=\rSynthP$, $\rSynthDonors$ donors with positive weight).}
\label{fig:econ}
\end{figure}

Over $\rEconR$ replications the true simple ATT is $\rEconTruth$\,pp. Static TWFE is biased by $\rEconTwfeBias$\,pp ($\rEconTwfeRel\%$; RMSE $\rEconTwfeRmse$). The bias arises from the forbidden comparisons of already-treated units with dynamic effects \citep{goodmanbacon2021,dechaisemartin2020}. The Callaway--Sant'Anna estimator is unbiased to simulation precision (bias $\rEconCsBias$, RMSE $\rEconCsRmse$). Its cluster-bootstrap 95\% interval covers the truth in $\rEconCsCov$ of $\rEconCovReps$ replications, which is slightly conservative. \Cref{fig:econ} shows one replication. Pre-trends are flat, and the event-time profile tracks the ramp. The synthetic control isolates the first cohort with a post/pre RMSPE ratio of $\rSynthRatio$. Its permutation $p$-value $\rSynthP=1/31$ is the smallest attainable with 30 placebos. An estimated effect maps into the hazard of \cref{ass:hazard} as $\beta^{\mathrm{deg}}=-12\ln(1-\widehat{\mathrm{ATT}})/\bar x$, where $\bar x$ is the treated cohort's capability loss. The simulated ATT gives $\rEconHazard/\bar x$ per year, which is of the same order as the priors in \cref{tab:calib-bank}. TWFE would understate it by about a sixth.

\subsection{RQ8b: Monte Carlo precision}
At $n=\rN$, the 95\% confidence half-width is $\pm\rBankMeanHW\%$ of the mean for Bank~A, $\pm\rTelMeanHW\%$ for Operator~T and $\pm\rAirMeanHW\%$ for Carrier~P. For $P_{95}$ it is $\pm\rBankPnfHW\%$, $\pm\rTelPnfHW\%$ and $\pm\rAirPnfHW\%$ (distribution-free). \Cref{fig:conv} shows the running estimates for Bank~A, the most skewed case. Both the mean and the $P_{95}$ stabilise within their confidence bands after about $2\times10^4$ draws. The sampling error is therefore negligible relative to the epistemic spread, and the reported $B^\star$ is stable to the first decimal in USD millions.

\begin{figure}[t]
\centering
\begin{tikzpicture}
\begin{groupplot}[group style={group size=2 by 1,horizontal sep=1.6cm},width=0.47\textwidth,height=4.6cm,xmode=log,xlabel={Draws $n$},xmin=100,xmax=200000]
\nextgroupplot[title={Bank A: running mean},ylabel={USD m}]
\addplot[name path=mlo,draw=none,forget plot] coordinates {(100,417.66) (121,426.36) (147,429.39) (179,441.27) (218,438.82) (264,452.16) (321,452.84) (391,454.03) (475,447.54) (577,448.63) (702,448.38) (853,450.97) (1036,452.42) (1259,455.41) (1531,454.29) (1860,457.21) (2260,456.44) (2747,456.99) (3338,457.47) (4056,457.8) (4929,457.72) (5990,458.53) (7279,458.63) (8846,461.44) (10749,461.08) (13063,461) (15874,461.68) (19289,463.3) (23440,463.61) (28484,463.63) (34614,463.32) (42062,463.76) (51113,463.62) (62112,463.61) (75478,463.35) (91719,463.02) (1.1146e+05,463.17) (1.3544e+05,463.36) (1.6458e+05,463.25) (2e+05,462.84)};
\addplot[name path=mhi,draw=none,forget plot] coordinates {(100,492.11) (121,493.74) (147,489.06) (179,497.53) (218,489.79) (264,500.8) (321,497.94) (391,494.71) (475,483.1) (577,481.88) (702,478.12) (853,478.79) (1036,479.4) (1259,480.24) (1531,476.71) (1860,477.91) (2260,474.71) (2747,473.38) (3338,472.22) (4056,471.16) (4929,470) (5990,469.82) (7279,468.71) (8846,470.74) (10749,469.63) (13063,468.66) (15874,468.58) (19289,469.99) (23440,470.12) (28484,469.41) (34614,468.48) (42062,468.43) (51113,467.81) (62112,467.37) (75478,466.71) (91719,466.09) (1.1146e+05,465.92) (1.3544e+05,465.83) (1.6458e+05,465.48) (2e+05,464.84)};
\addplot[blue!15,forget plot] fill between[of=mlo and mhi];
\addplot[blue!70!black,thick] coordinates {(100,454.88) (121,460.05) (147,459.23) (179,469.4) (218,464.31) (264,476.48) (321,475.39) (391,474.37) (475,465.32) (577,465.26) (702,463.25) (853,464.88) (1036,465.91) (1259,467.82) (1531,465.5) (1860,467.56) (2260,465.58) (2747,465.19) (3338,464.84) (4056,464.48) (4929,463.86) (5990,464.17) (7279,463.67) (8846,466.09) (10749,465.36) (13063,464.83) (15874,465.13) (19289,466.65) (23440,466.87) (28484,466.52) (34614,465.9) (42062,466.09) (51113,465.71) (62112,465.49) (75478,465.03) (91719,464.55) (1.1146e+05,464.54) (1.3544e+05,464.59) (1.6458e+05,464.36) (2e+05,463.84)};
\nextgroupplot[title={Bank A: running $P_{95}$}]
\addplot[name path=qlo,draw=none,forget plot] coordinates {(100,653.33) (121,691.58) (147,709.16) (179,737.79) (218,744.99) (264,774.06) (321,807.12) (391,807.12) (475,782.72) (577,807.12) (702,808.87) (853,812.98) (1036,811.33) (1259,820.68) (1531,824.64) (1860,834.35) (2260,829.99) (2747,834.32) (3338,835.64) (4056,840.34) (4929,841.69) (5990,841.81) (7279,847.34) (8846,850.54) (10749,850.18) (13063,854.17) (15874,854.34) (19289,858.7) (23440,858.79) (28484,858.02) (34614,860.91) (42062,863.23) (51113,863.65) (62112,862.5) (75478,863.65) (91719,863.32) (1.1146e+05,864.87) (1.3544e+05,865.22) (1.6458e+05,865.96) (2e+05,866.07)};
\addplot[name path=qhi,draw=none,forget plot] coordinates {(100,1297.7) (121,1248.1) (147,918.06) (179,1023.1) (218,1023.1) (264,1015.1) (321,1015.1) (391,1012.4) (475,942.32) (577,969.07) (702,939.76) (853,939.76) (1036,924.95) (1259,933.6) (1531,918.06) (1860,920.55) (2260,894.34) (2747,892.07) (3338,890.25) (4056,889.45) (4929,887.33) (5990,885.2) (7279,883.18) (8846,884.75) (10749,882) (13063,883.18) (15874,881.39) (19289,882.31) (23440,881.2) (28484,878.98) (34614,880.51) (42062,882) (51113,880.39) (62112,877.57) (75478,877.51) (91719,874.92) (1.1146e+05,875.78) (1.3544e+05,874.59) (1.6458e+05,874.54) (2e+05,874.01)};
\addplot[red!15,forget plot] fill between[of=qlo and qhi];
\addplot[red!70!black,thick] coordinates {(100,735.73) (121,774.06) (147,765.34) (179,788.05) (218,783.4) (264,828.05) (321,850.54) (391,848.97) (475,828.03) (577,847.78) (702,847.11) (853,848.91) (1036,847.71) (1259,851.49) (1531,848.98) (1860,860.15) (2260,848.55) (2747,850.32) (3338,856.18) (4056,860.62) (4929,863.84) (5990,859.8) (7279,863.95) (8846,867.17) (10749,865.59) (13063,868.62) (15874,868.01) (19289,870.5) (23440,869.97) (28484,867.97) (34614,871.25) (42062,872.38) (51113,871.98) (62112,869.73) (75478,870.68) (91719,869.09) (1.1146e+05,870.59) (1.3544e+05,870.07) (1.6458e+05,870.44) (2e+05,870.16)};
\end{groupplot}
\end{tikzpicture}
\caption{Monte Carlo convergence for Bank~A, with 95\% CI bands (normal for the mean, order-statistic for $P_{95}$).}
\label{fig:conv}
\end{figure}

\section{Discussion}\label{sec:discussion}
\subsection{What the model changes}
The experiments support four conclusions that do not depend on the particular calibration.

First, \emph{downtime is not the main loss}. Across three platforms with very different economics, acute and incident outages account for only $\rTelAcIncMeanShare$--$\rBankAcIncMeanShare\%$ of expected loss. Most of the exposure lies in prolonged degraded operation (functional loss, workaround operations), in the customer response to it, and in blocked launches. A model that prices only outage hours, however carefully, misses most of the quantity of interest.

Second, \emph{point estimates are biased in a predictable direction}. Three mechanisms bias deterministic estimates downwards: convexity of outage costs (\cref{thm:jensen}), the max-structure of project networks (\cref{prop:fulkerson}), and convexity of the hazard-to-attrition map. In our cases the downward bias is $\rAirPlugGap$--$\rBankPlugGap\%$ of expected loss. The bias in the tail is larger, because the plug-in has no tail.

Third, \emph{the plug-in can choose the wrong response}. A deterministic analysis treats the remediation time as a known number and the restraint as a known interval. It therefore cannot represent the value of a remediation that completes early with positive probability, and it can recommend waiting when building is optimal (Carrier~P). It cannot see tail risk either, so it chooses the credit-control failure policy with the lower mode but the heavier tail (Operator~T). A loss estimate is only as defensible as the mitigation plan behind it. Because the model optimises over responses, the reported loss is a \emph{mitigated} loss.

Fourth, \emph{the right security is a quantile, not a mean}. Whenever $F_L(\E L)<\alpha^\star$ (right skew), posting the expected loss is sub-optimal in the precise sense of \cref{thm:bond}(c). Here $F_L(\E L)\approx0.54$--$0.60$, against $\alpha^\star\approx0.86$ at central assumptions. The critical-fractile rule is transparent: it follows from three inputs (the carrying cost, the probability that the restraint proves wrongful, and the weight on uncompensated loss) and the loss distribution.

\subsection{Robustness of the security rule to distributional ambiguity}\label{sec:robust}
A natural objection is that the loss distribution is itself model-dependent. Suppose only the mean and variance of $L$ are trusted. The classical distribution-free newsvendor argument \citep{scarf1958,gallego1993} then gives a security that depends on these two moments alone.

\begin{proposition}[Moment-robust security]\label{prop:robust}
Let $\mathcal{F}(\mu,\sigma)$ be the set of distributions on $\R$ with mean $\mu$ and standard deviation $\sigma>0$. Then for every $B\in\R$,
\begin{equation}\label{eq:scarf}
\sup_{F\in\mathcal{F}(\mu,\sigma)}\E_F(L-B)^+=\tfrac12\Big(\sqrt{\sigma^2+(B-\mu)^2}-(B-\mu)\Big).
\end{equation}
If $0<c<qw$ and $\alpha^\star=1-c/(qw)$, the minimiser of the worst-case cost $\sup_{F\in\mathcal{F}(\mu,\sigma)}\big[cB+qw\,\E_F(L-B)^+\big]$ is
\begin{equation}\label{eq:Brob}
B^{\mathrm{rob}}=\mu+\frac{\sigma}{2}\left(\sqrt{\frac{\alpha^\star}{1-\alpha^\star}}-\sqrt{\frac{1-\alpha^\star}{\alpha^\star}}\right)=\mu+\sigma\,\frac{2\alpha^\star-1}{2\sqrt{\alpha^\star(1-\alpha^\star)}} .
\end{equation}
In particular $B^{\mathrm{rob}}\ge\mu$ if and only if $\alpha^\star\ge\frac12$.
\end{proposition}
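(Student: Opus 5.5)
The plan has two parts. First we establish the Scarf bound \eqref{eq:scarf} as a matching upper and lower bound. Then we minimise the resulting convex worst-case cost in closed form.

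\emph{Upper bound.} We use the identity $(L-B)^+=\tfrac12\big(|L-B|+(L-B)\big)$. Taking expectations gives $\E_F(L-B)^+=\tfrac12\big(\E|L-B|+\mu-B\big)$. By Cauchy--Schwarz (equivalently, Jensen applied to the square),
\[
\E|L-B|\le\sqrt{\E(L-B)^2}=\sqrt{\sigma^2+(\mu-B)^2}.
\]
This gives the right-hand side of \eqref{eq:scarf} as an upper bound for every $F\in\mathcal{F}(\mu,\sigma)$.

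\emph{Attainment.} We exhibit a two-point law in $\mathcal{F}(\mu,\sigma)$ for which $|L-B|$ is constant. Write $k=\sqrt{\sigma^2+(B-\mu)^2}$ and put mass on the two points $B\pm k$. The mean condition fixes the weights: the mass at $B+k$ is $p=\tfrac12\big(1+(\mu-B)/k\big)\in[0,1]$. The variance is then $k^2-(\mu-B)^2=\sigma^2$, as required. Under this law $|L-B|\equiv k$, so Cauchy--Schwarz holds with equality and the supremum is attained. (If $\sigma>0$, then $k>|\mu-B|$, so both weights are strictly positive.)

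\emph{Minimisation.} Define
\[
R(B)=cB+\tfrac{qw}{2}\Big(\sqrt{\sigma^2+(B-\mu)^2}-(B-\mu)\Big).
\]
Because the supremum of $cB+qw\,\E_F(L-B)^+$ only involves the term depending on $F$, the worst-case cost equals $R(B)$. The function $R$ is strictly convex and smooth, since $\sqrt{\sigma^2+x^2}$ is. Put $x=B-\mu$. The first-order condition is
\[
c+\tfrac{qw}{2}\Big(\tfrac{x}{\sqrt{\sigma^2+x^2}}-1\Big)=0,
\quad\text{i.e.}\quad
\tfrac{x}{\sqrt{\sigma^2+x^2}}=1-\tfrac{2c}{qw}=2\alpha^\star-1.
\]
The condition $0<c<qw$ places the right-hand side in $(-1,1)$, so a unique root exists. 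Squaring and solving gives
\[
x=\sigma\,\frac{2\alpha^\star-1}{\sqrt{1-(2\alpha^\star-1)^2}}=\sigma\,\frac{2\alpha^\star-1}{2\sqrt{\alpha^\star(1-\alpha^\star)}}.
\]
Squaring does not introduce a spurious root, because the sign of $x$ matches the sign of $2\alpha^\star-1$. A short algebraic step, multiplying numerator and denominator by $1/\sqrt{\alpha^\star(1-\alpha^\star)}$, shows this equals $\tfrac{\sigma}{2}\big(\sqrt{\alpha^\star/(1-\alpha^\star)}-\sqrt{(1-\alpha^\star)/\alpha^\star}\big)$.

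The final claim, $B^{\mathrm{rob}}\ge\mu$ if and only if $\alpha^\star\ge\tfrac12$, follows immediately from the sign of $2\alpha^\star-1$.

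The only step needing care is attainment. We must check that the two-point law has nonnegative weights and the exact variance $\sigma^2$. Note that the class $\mathcal{F}(\mu,\sigma)$ is on $\R$, so no nonnegativity constraint on the support interferes; the weights are the only thing to verify. The rest is routine calculus on a convex one-dimensional function.
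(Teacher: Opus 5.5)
Your proposal is correct and follows the paper's proof essentially step for step. You use the same $|\cdot|$ decomposition with Jensen/Cauchy--Schwarz for the upper bound, the same two-point law at $B\pm\sqrt{\sigma^2+(B-\mu)^2}$ for attainment, and the same first-order condition $x/\sqrt{\sigma^2+x^2}=2\alpha^\star-1$, solved by squaring and then checking the sign; your explicit observation that $R$ is strictly convex, so that the minimiser is unique, is a small but welcome addition.
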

\begin{proof}
\emph{Upper bound.} Write $(L-B)^+=\frac12\big(|L-B|+(L-B)\big)$. By Jensen's inequality, $\E|L-B|\le\big(\E(L-B)^2\big)^{1/2}=\big(\sigma^2+(\mu-B)^2\big)^{1/2}$, so $\E(L-B)^+\le\frac12\big(\sqrt{\sigma^2+(B-\mu)^2}+\mu-B\big)$.

\emph{Attainment.} Let $d=\sqrt{\sigma^2+(B-\mu)^2}$. Put mass $p$ at $x_1=B-d$ and mass $1-p$ at $x_2=B+d$, with $p=\frac12\big(1-(\mu-B)/d\big)\in(0,1)$. The mean is $B+d(1-2p)=\mu$, and the second moment about $B$ is $d^2=\sigma^2+(\mu-B)^2$, so the variance is $\sigma^2$. For this distribution, $|L-B|=d$ almost surely, so Jensen holds with equality. Hence $\E(L-B)^+=\frac12(d+\mu-B)$.

\emph{Minimisation.} Since $B\mapsto\sqrt{\sigma^2+(B-\mu)^2}$ is a Euclidean norm of an affine map, it is convex. So the worst-case cost $g(B)=cB+\frac{qw}{2}\big(\sqrt{\sigma^2+(B-\mu)^2}-(B-\mu)\big)$ is convex and differentiable. Setting $g'(B)=0$ gives $(B-\mu)/\sqrt{\sigma^2+(B-\mu)^2}=1-2c/(qw)=2\alpha^\star-1=:k\in(-1,1)$. Squaring and solving, $(B-\mu)^2(1-k^2)=k^2\sigma^2$, with the sign of $B-\mu$ equal to that of $k$. Since $1-k^2=4\alpha^\star(1-\alpha^\star)$, this gives \eqref{eq:Brob}. The sign statement follows from that of $k$.
\end{proof}

\begin{remark}
Restricting $\mathcal{F}$ to distributions on $[0,\infty)$ can only lower the supremum in \eqref{eq:scarf}, so $B^{\mathrm{rob}}$ is conservative for non-negative losses. The rule $B^{\mathrm{rob}}/\mu=1+\mathrm{CV}\cdot\kappa(\alpha^\star)$, with $\kappa(\alpha)=(2\alpha-1)/(2\sqrt{\alpha(1-\alpha)})$, needs only the coefficient of variation. At $\alpha^\star=\rBankBondAlpha$, $\kappa=1.01$.
\end{remark}

Applied to the simulated moments, \eqref{eq:Brob} gives $B^{\mathrm{rob}}=\rBankBrob$ for Bank~A ($\sigma=\rBankSD$; $\rBankBrobRatio\times\E L$), $\rTelBrob$ for Operator~T ($\rTelBrobRatio\times$) and $\rAirBrob$ for Carrier~P ($\rAirBrobRatio\times$). The critical-fractile values are $\rBankBondB$, $\rTelBondB$ and $\rAirBondB$. For Operator~T and Carrier~P the two rules agree to within $0.2\%$. For Bank~A, the moment-robust rule is about $5\%$ higher, because its loss distribution has a heavier tail than the two-moment summary conveys. Evaluated under the simulated distribution, the expected cost of using $B^{\mathrm{rob}}$ instead of $B^\star$ is at most $\rBankRobPremium\%$ higher (Bank~A: $\rBankJtrueRob$ vs.\ $\rBankJtrueStar$), because $J$ is flat near its minimum. The worst-case cost over $\mathcal{F}(\mu,\sigma)$ is $\rBankJrobWC$, $\rTelJrobWC$ and $\rAirJrobWC$, against true optima of $\rBankJtrueStar$, $\rTelJtrueStar$ and $\rAirJtrueStar$. This gap is the price of knowing only two moments. The security conclusion is therefore robust: in all three cases it lies $28$--$50\%$ above the expected loss, whether derived from the full distribution or from its first two moments.

\subsection{Properties relevant to independent review}
A quantitative loss model used to support high-stakes decisions will be reviewed adversarially. The TEIM was designed to have six properties that make such review productive.

\emph{Causal traceability.} Every loss dollar is attributed to a channel and, through the dependency graph, to the withdrawn component. By the causal-locality theorem (\cref{thm:locality}), services not reachable from the restrained component are unaffected, so the model cannot attribute loss to unrelated systems.

\emph{No double counting.} The partition theorem (\cref{thm:partition}) shows that the functional, attrition and funding channels, as constructed with survival weights, sum exactly to the loss of the degraded customer base. The approximation error of the usual additive shortcut is bounded explicitly.

\emph{Conservative treatment of speculative value.} The delay channel uses an option value that is provably no larger than the static DCF claim for launches that are optimal immediately (\cref{thm:delay}). This answers the standard objection that delayed-launch losses are speculative or overstated.

\emph{Mitigation built in.} The reported loss is that of the loss-minimising feasible response, including the cost of that response.

\emph{Testability and error rates.} Every theorem used for computation has a numerical check (\cref{tab:verif}). Monte Carlo error is reported for every statistic, and quantile intervals are distribution-free.

\emph{Sensitivity transparency.} The Sobol decomposition tells a reviewer which assumptions matter. An expert who disputes a parameter group with a small total index (for example the fallback-capability caps for Carrier~P) cannot move the result materially. Disputes are correctly focused on high-index groups (the customer response for Bank~A, the restraint horizon for Carrier~P).

\subsection{Threats to validity}
\emph{Internal validity (model specification).} Repair times in the CTMC are exponential. Real repair-time distributions are often lognormal or bimodal, and phase-type expansions would capture them without leaving the Markov framework. The fluid backlog approximation is exact in the deterministic limit. It understates waiting near critical load, but the stochastic check (\cref{tab:v-backlog}) shows deviations below $1\%$ in the operating range. The proportional-hazard response is linear in outage hours and exposure. Threshold or saturation effects (for example customers who leave after a second outage) would increase convexity and thus strengthen the plug-in bias result. Capability-to-revenue mappings are linear in the capability shortfall within each segment.

\emph{Construct validity.} Capability levels are discretised on a finite chain. This is required for the fixed-point semantics, and it is coarse relative to continuous degradation. A finer chain changes the levels but not the theory, since \cref{thm:lattice} holds for any finite chain. The CCFH policies are stylised as two discrete choices.

\emph{External validity.} The case data are synthetic. They are calibrated to public benchmarks, incident disclosures and regulatory reports, but they are not any single firm's data. The qualitative findings follow from structural properties established by theorems: the dominance of non-outage channels where prolonged degradation and customer response are present, the downward bias of plug-ins, misranking and the quantile security rule. The magnitudes do not generalise, and in applications they must be recalibrated to the firm.

\emph{Statistical conclusion validity.} Monte Carlo half-widths are below $0.5\%$ for all reported statistics. Sobol indices have bootstrap half-widths up to $0.03$, so rankings among groups with indices below $0.05$ should not be interpreted. The econometric experiment uses a simulated design, so its coverage results are design-specific.

\subsection{Limitations and extensions}
Five extensions would strengthen the framework. (i)~\emph{Endogenous restraint horizon.} In practice $T_R$ may depend on the remediation path, for instance if a demonstrated design-around shortens the restraint. This couples $T_R$ and $T_d$ and requires a stopping-game formulation. (ii)~\emph{Competitive response.} Customers lost through attrition go to competitors, whose pricing response could amplify or dampen the loss. A structural demand model would replace the reduced-form hazard. (iii)~\emph{Bayesian updating.} As post-restraint data arrive (incident counts, churn, remediation progress), the prior over $\theta$ can be updated and the security re-evaluated. For incident rates with gamma priors, the updating is closed-form by Poisson--gamma conjugacy. (iv)~\emph{Dependence uncertainty.} The Gaussian copula with stated rank correlations is a modelling choice. Bounds over all copulas consistent with the marginals (Fr\'echet--Hoeffding) would quantify the dependence risk in the tail. (v)~\emph{Multiple restrained components.} The lattice semantics already handles several simultaneous zero bases, but the remediation model would need a portfolio of interacting programmes competing for shared engineering capacity.

\section{Conclusion}\label{sec:conclusion}
We have developed a technical--economic impact model for losses caused by the forced withdrawal of a software component from an enterprise platform. The model is formal where the engineering is: lattice fixed points for degraded dependencies, CTMC reliability with CSL properties, fluid queues and stochastic activity networks for remediation. It is formal where the economics is too: present values, real options, customer lifetime value, coherent risk measures and decision theory. The main theoretical results include the existence and computability of degraded-capability semantics, causal locality, convexity of backlog costs and the resulting Jensen bias, exact channel partition, option-value bounds on forced-delay loss, and a critical-fractile security rule with a CVaR identity and a moment-robust counterpart. All are proved, and those used for computation are verified numerically.

Three anonymised case studies, a core-banking ledger, a 5G converged-charging function and an airline passenger-service system, show that the economically relevant loss is dominated by prolonged degraded operation, customer response and delayed commercialisation rather than downtime. Conventional plug-in estimates are biased low and can select the wrong mitigation, and the optimal security is a high quantile of the loss distribution, $1.3$--$1.4$ times its mean at central assumptions. The framework is reproducible, auditable and modular, so it can be recalibrated to a specific platform as firm data become available.

\appendix
\section{Notation}\label{app:notation}
\begin{longtable}{@{}p{0.22\textwidth}p{0.72\textwidth}@{}}
\toprule
Symbol & Meaning\\
\midrule
\endhead
$\mathcal{C}$ & Capability chain $\{0=c_0<\dots<c_m=1\}$ (\cref{def:chain})\\
$\mathcal{P}=(V,E,\{f_v\},b,K)$ & Platform: services, dependencies, aggregation operators, bases, revenue-bearing capabilities (\cref{def:platform})\\
$\mathrm{alt}_q$, $\mathrm{kth}_k$ & Capped alternative (fallback) and $k$-of-$n$ operators\\
$\Phi_{b}$, $\lfp$, $\gfp$ & Platform operator on $\mathcal{C}^V$ and its least/greatest fixed points\\
$r$, $T_R$ & Restrained component and random restraint horizon\\
$s\in\mathcal{S}$; $S_0,\dots,S_3$ & Response strategies: degrade-and-wait, nominal design-around, crashed design-around, substitution\\
$T_d(s)$, $\tau_s=T_R\wedge T_d(s)$ & Remediation completion time and effective degraded period\\
$\rho$ & Discount rate (industry cost of capital)\\
$L(s)=\sum_k L_k(s)$ & Present-value restraint-induced loss and its channels $k\in\{\mathrm{ac},\mathrm{inc},\mathrm{fun},\mathrm{ops},\mathrm{att},\mathrm{fund},\mathrm{rem},\mathrm{del}\}$\\
$\mathbf{Q}$, $\pi$, $A$, $f$ & CTMC generator, stationary law, availability, failure frequency\\
$R(t)=\Prob(\Always^{\le t}\mathsf{up})$ & CSL interval reliability\\
$\lambda,c,\psi$; $W(D)$ & Arrival rate, recovery capacity, deferral fraction; fluid backlog area for outage length $D$\\
$L_{\mathrm{ac}}(D)$ & Acute loss function of an outage of length $D$\\
$m$ & Fallback incident-rate multiple (architecture-implied)\\
$\Lambda_j=\beta^{\mathrm{out}}_jH+\beta^{\mathrm{deg}}_jx_j\tau_s$ & Cumulative excess-churn hazard of segment $j$\\
$V_t$, $I$, $\delta$, $\sigma$, $r$ & Launch value, investment cost, convenience yield, volatility, risk-free rate\\
$\beta$, $V^\ast$ & Positive root of the fundamental quadratic and exercise threshold\\
$\ell(\Delta)$, $\ell_{\mathrm{DCF}}(\Delta)$ & Option loss and static DCF loss from a forced delay $\Delta$\\
$\mathrm{PM}$, $\mathrm{TDEV}$ & COCOMO~II effort (person-months) and schedule (months)\\
$q$, $w$, $c$, $J(B)$ & Probability of wrongful restraint, weight on uncompensated loss, carrying cost, expected cost of security $B$\\
$\alpha^\star=1-c/(qw)$ & Critical fractile\\
$\VaR_\alpha$, $\CVaR_\alpha$ & Value-at-risk and conditional value-at-risk at level $\alpha$\\
$B^\star$, $B^{\mathrm{rob}}$ & Critical-fractile and moment-robust securities\\
$S_1$, $S_T$ & First-order and total Sobol indices (context distinguishes these from strategies)\\
\bottomrule
\end{longtable}

\section{Glossary of platform terms}\label{app:glossary}
\begin{description}[leftmargin=1.2em,itemsep=1pt,font=\normalfont\bfseries]
\item[Core-banking ledger] The system of record for accounts, balances and postings. \emph{Real-time posting} updates balances as transactions occur. \emph{Batch posting} applies them at scheduled cycles, so intra-day balances are provisional.
\item[Stand-in processing] A fallback that authorises card or payment transactions against a cached balance when the ledger is unavailable, at the risk of overdrafts and fraud.
\item[CHF, OCS, CCFH] The 5G converged charging function (CHF) and the legacy 4G online charging system (OCS) grant usage quota in real time. Credit-control failure handling (CCFH) defines what the network does when charging is unreachable: \textsc{continue} (serve without charging, causing revenue leakage) or \textsc{terminate} (deny service) \citep{rfc8506}.
\item[IWF] Interworking function translating 5G service-based charging (Nchf) into 4G Diameter (Gy) so that a 4G OCS can serve 5G sessions.
\item[PSS, RAE] Passenger-service system (reservations, inventory, departure control). The real-time availability and re-shop engine (RAE) computes availability and offers continuously.
\item[Continuous pricing] Fare computation at arbitrary price points instead of a finite set of booking-class buckets.
\item[IROPS] Irregular operations: disruption of the schedule by weather, technical or crew events, which requires re-accommodation of passengers.
\item[NDC] IATA's New Distribution Capability, an XML/JSON offer-and-order standard for distributing dynamic offers through intermediaries.
\end{description}

\section{Reproducibility and audit checklist}\label{app:repro}
\begin{enumerate}[leftmargin=1.8em,itemsep=1pt]
\item \emph{Inputs.} Every parameter has a distribution, a unit and a stated basis (\cref{tab:calib-bank,tab:calib-tel,tab:calib-air}). Changing an input is a one-line change in the case module.
\item \emph{Seeds.} Parameter sampling, event sampling and the Sobol design use fixed seeds (\cref{sec:impl}). Re-running reproduces all tables bit-for-bit on the same library versions.
\item \emph{Verification.} \code{verify.py} reproduces \cref{tab:verif,tab:v-backlog,tab:v-acute,tab:v-fix,tab:v-option}.
\item \emph{Experiments.} \code{experiments.py} produces \code{results.json}. The fragment generators then produce every table, figure data file and numerical macro in \cref{sec:experiments}.
\item \emph{Audit trail.} For any draw, the engine can report the full channel vector, the fixed-point capability vector, $T_R$, $T_d$ and the incident and outage realisations. A reviewer can therefore trace any tail scenario to its causes.
\item \emph{Stress tests.} Any parameter can be pinned to a value, which yields the conditional loss distribution under a disputed assumption. The tornado and Sobol analyses are special cases of this.
\end{enumerate}

\end{document}